\author{Zackary Jorquera}
\affiliation{University of California, Santa Cruz, CA 95060, USA}
\email{zjorquera@ucsc.edu}
\author{Alexandra Kolla}
\affiliation{University of California, Santa Cruz, CA 95060, USA}
\author{Steven Kordonowy}
\affiliation{University of California, Santa Cruz, CA 95060, USA}
\author{Jaspreet Singh Sadhu}
\affiliation{University of California, Santa Cruz, CA 95060, USA}
\author{Stuart Wayland}
\affiliation{University of California, Santa Cruz, CA 95060, USA}
\title{Monogamy of Entanglement Bounds and Improved Approximation Algorithms for Qudit Hamiltonians}
\begin{document}

\pagenumbering{roman}

\maketitle

\begin{abstract}
    We prove new monogamy of entanglement bounds for two-local qudit Hamiltonians of rank-one projectors without one-local terms. In particular, we certify the maximum energy in terms of the maximum matching of the underlying interaction graph via low-degree sum-of-squares proofs. Algorithmically, we show that a simple matching-based algorithm approximates the maximum energy to at least $1/d$ for general graphs and to at least $1/d + \Theta(1/D)$ for graphs with bounded degree, $D$. This outperforms random assignment, which, in expectation, achieves energy of only $1/d^2$ of the maximum energy for general graphs. Notably, on $D$-regular graphs with degree, $D \leq 5$, and for any local dimension, $d$, we show that this simple matching-based algorithm has an approximation guarantee of $1/2$. Lastly, when \(d=2\), we present an algorithm achieving an approximation guarantee of \(0.595\), beating that of \cite{parekh_optimal_2022}, which gave an approximation ratio of \(1/2\).
\end{abstract}

\begingroup
\hypersetup{linktocpage}
\setcounter{tocdepth}{1}
\tableofcontents
\endgroup

\setcounter{page}{0}

\clearpage
\pagenumbering{arabic}

\section{Introduction}

A fruitful line of research in theoretical computer science has been to study classical constraint satisfaction problems (CSPs). These problems are generally \cc{NP}-hard, and so the question of the limits of efficient approximability of optimal solutions arises naturally. Under the widely-believed \algprobm{Unique Games Conjecture} (UGC), a semi-definite programming (SDP) based algorithm \cite{raghavendra_optimal_2008} that achieves the optimal approximation ratio is known.

The quantum analog of classical CSPs are (local) Hamiltonian problems. These problems are at the heart of condensed matter physics and quantum complexity theory, in which one is tasked with efficiently finding approximations of low-energy states of $n$-particle systems. Given a sequence of Hermitian operators $(H_i)_i$, that each act non-trivially on $k \leq n$ particles, the \algprobm{$k$-Local Hamiltonian} ($k$-LH) problem, when viewed as a maximization problem, is to find the most excited state of $H = \sum_{i} H_{i}$. Phrased as a decision problem, the \algprobm{$k$-Local Hamiltonian} problem is \cc{QMA}-complete \cite{kitaev_classical_2002}. Unlike the classical landscape of CSPs, there are still many unknowns about the approximability of general \algprobm{$k$-LH} including but not limited to the quantum PCP conjecture (see, for example, \cite{aharonov_quantum_2013,brandao_product-state_2016}). Attempting to study approximations to \algprobm{$k$-LH}s, \cite{brandao_product-state_2016} showed that interactions over high-degree graphs are well approximated by product states. This result leaves low-degree graphs, where entanglement is likely to play a larger role, as the primary avenue for study.

Attempting to understand the entanglement that may be present in excited states, we consider a family of 2-\algprobm{LH} with each term being a projector onto maximally entangled states (potentially weighted). A maximally entangled bipartite state can be specified uniquely (up to phase) by a \(d\)-dimensional special unitary. The \algprobm{Maximal Entanglement} problem is then defined as a tuple of an input graph, called the \emph{interaction graph}, and a sequence of qudit special unitaries, $\langle G = (V,E,w), (U_e)_{e \in E} \rangle$. Here, each $U_e$ then defines a rank-1 projector onto an arbitrary maximally entangled state, $\ket{\psi_e}\bra{\psi_e}$, from which we define the problem Hamiltonian $H = \sum_e w_e \ket{\psi_e}\bra{\psi_e}^e$. The task is to find the most excited state and/or estimate its energy. In this way, we view this problem as analogous to \algprobm{Unique Games} in which one is given a collection of edge permutations $(P_e : [d] \rightarrow [d])_{e \in E}$ and tasked to find an assignment $f : V \rightarrow [d]$ that satisfies as many of the permutations as possible\footnote{We say $P_{uv}$ is \textit{satisfied} by \(f : V \rightarrow [d]\) when $P_{uv}(f(v)) = f(u)$.}. Note that \algprobm{Unique Games} is not an instance of \algprobm{Maximal Entanglement} and \algprobm{Maximal Entanglement} is unrelated to the Unique Games generalizations considered in the non-local games literature \cite{kempe_unique_2010,mousavi_quantum_2024}. 
To further motivate the \algprobm{Maximal Entanglement} problem, we note that understanding the limits of entanglement arises naturally in the problem of finding excited states of frustrated systems\footnote{A LH is said to be \emph{frustrated} if the most excited state is not an eigenvector for all local terms.}. We propose the \algprobm{Maximal Entanglement} problem in an attempt to isolate this perspective and serve as a proxy for the limits of entanglement, as sums of rank-1 projectors onto maximally entangled bipartite states are naturally frustrated.

We certify that the energy of the most excited state for a \algprobm{Maximal Entanglement} instance is bounded above by the maximum matching of the underlying interaction graph. Here, the maximum matching is a subset of edges with no repeated vertices with maximum weighted sum. Such a matching can be found efficiently \cite{edmonds_maximum_1965}. Moreover, we show that a simple matching-based algorithm (i.e., doesn't use an SDP and instead the blossom algorithm from \cite{edmonds_maximum_1965}) achieves a non-trivial approximation ratio to the energy of the most excited state. We hope that these results serve as a proxy for lower and upper bounds on the amount of entangled present in a quantum state.

\subsection*{Related Work}

The \algprobm{Maximal Entanglement} problem, also referred to as the rank-one, strictly quadratic case of the \(2\)-LH problem \cite{parekh_beating_2021}, has been studied before in the qubit setting. When studying algorithms for this problem, one faces that so-called \emph{ansatz} problem, which asks how to write states succinctly and in a way that allows for quantities of interest to be efficiently calculated. Much of the existing algorithmic work in this field considers the mean-field model (i.e., that of product states) \cite{gharibian_approximation_2012,brandao_product-state_2016,parekh_beating_2021,parekh_optimal_2022}. Algorithmically, this has been a challenge for the general qudit setting \cite{carlson_approximation_2023}. Going beyond product state approximations, but still restricted to the qubit setting, \cite{anshu_improved_2021} applied low-degree circuits to approximate the global entanglement that might be present. When the local Hamiltonian problem is restricted to be the \algprobm{Quantum Max-Cut} Hamiltonian \cite{gharibian_approximation_2012}, products of 1 and 2 qubits states have been found to work well \cite{lee_improved_2024}, which is of particular interest to our work.

In particular, Lee and Parekh \cite{lee_improved_2024} show that a matching-based algorithm (i.e., one using Edmond's Algorithm \cite{edmonds_maximum_1965}) in combination with the Gharibian-Parekh algorithm \cite{gharibian_almost_2019} performs well on the \algprobm{Quantum Max-Cut} problem, achieving an approximation ratio of \(0.595\). For their analysis, they use results proven about the level-2 quantum Lasserre SDP specific to the qubit case or the \algprobm{Quantum Max-Cut} problem \cite{parekh_application_2021,parekh_optimal_2022}. We extend this work by arguing that such a matching-based algorithm also achieves non-trivial performance guarantees on the \algprobm{Maximal Entanglement} problem over qudit systems. When restricted to the qubit setting, we match this approximation ratio, achieving a guarantee of \(0.595\) to the energy of the most excited state.

In the qubit setting there are some special cases of the \algprobm{Maximal Entanglement} problem, such as the \algprobm{EPR} problem \cite{king_improved_2023} and \algprobm{Quantum Max-Cut} \cite{gharibian_almost_2019}, that have been used as test beads for algorithmic design.
Since the submission of this manuscript, there has been a sequence of results finding improved approximation algorithms for these two problems. We briefly summarize these results in the following table.

\begin{table}[ht!]
    \centering

    \begin{minipage}{0.45\textwidth}
    \centering
        {\renewcommand{\arraystretch}{1.25}
        \begin{tabular}{|c|c|}
        \hline
        Paper & Approximation Guarantee \\ \hline \hline
        \cite{king_improved_2023}     & $\frac{1}{\sqrt{2}} \approx 0.707$                   \\ \hline
        \textbf{This work}     & $\bm{\frac{18}{25} = 0.72}$                   \\ \hline
        \cite{ju2025improvedapproximationalgorithmsepr,apte2025improvedalgorithmsquantummaxcut}     & $\frac{1 + \sqrt{5}}{2} \approx 0.809$                   \\ \hline
        \cite{apte2025_EPR}     & $0.8395$                   \\ \hline
        \end{tabular}}
        \label{tab:epr}

    \vspace{2pt}
    (a) \algprobm{EPR}
    \end{minipage}
    \hfill
    \begin{minipage}{0.45\textwidth}
    \centering
    \begin{tabular}{|c|c|}
        \hline
        Paper & Approximation Guarantee \\ \hline \hline
        \cite{gharibian_almost_2019}     & $0.498$                   \\ \hline
        \cite{parekh_application_2021}     & $0.5$                   \\ \hline
        \cite{huber2024secondorderconerelaxations} & $0.526$                   \\ \hline
        \cite{anshu_beyond_2020}     & $0.531$                   \\ \hline
        \cite{parekh_application_2021}     & $0.533$                   \\ \hline
        \cite{lee2022optimizingquantumcircuitparameters}    & $0.562$                   \\ \hline
        \cite{lee_improved_2024}     & $0.595$                   \\ \hline
        \textbf{This work}     & $\bm{0.599}$                   \\ \hline
        \cite{gribling2025improvedapproximationratiosquantum}     & $0.603$                   \\ \hline
        \cite{apte2025improvedalgorithmsquantummaxcut}    & $0.611$                   \\ \hline
        \end{tabular}
    
    \vspace{2pt}
    (b) \algprobm{Quantum Max-Cut}
    \end{minipage}
    
    \caption{History of known results on two special types of qubit \algprobm{Maximal Entanglement} instances: (a) \algprobm{EPR} and (b) \algprobm{Quantum Max-Cut}. Note, \cite{ju2025improvedapproximationalgorithmsepr,apte2025improvedalgorithmsquantummaxcut,gribling2025improvedapproximationratiosquantum,apte2025_EPR} were all announced after the present paper was originally submitted. Up-to-date results for these problems are tracked at \cite{qmc_reference_site}.
    }
    \label{tab:overall}
\end{table}

\subsection*{Our Results}

We first give our results over general interaction graphs.
\begin{manualtheorem}{A}[\cref{thm_upper_bound_on_QUG_energy,thm_main_genral_graph}]\label{thm_A}
    For any instance of the \algprobm{Maximal Entanglement} problem over an interaction graph, \(G\), with \(H = \E_{e \sim E(G)} \ket{\psi_e}\bra{\psi_e}^e\) the normalized problem Hamiltonian, there exists an efficient algorithm that outputs a density matrix $\rho$ such that $\Tr(\rho H) \geq \frac{1}{d}\Tr(\rho_* H)$, where $\rho_*$ is a most excited state. Furthermore, there exist low-degree sum-of-squares certificates certifying that, for any quantum state \(\rho\), $\Tr(\rho H) \leq \frac{1}{d} + \frac{5(d-1)}{4d}\OPT_{\mathsc{Match}}(G)$, where \(\OPT_{\mathsc{Match}}(G)\) is the maximum matching of \(G\).
\end{manualtheorem}

This is significant because the approximation ratio of random assignment (i.e., the maximally mixed state, \(\rho^{\mathsc{mixed}}\)) is only such that \(\Tr(\rho^{\mathsc{mixed}} H) \geq \frac{1}{d^2}\Tr(\rho_* H)\). In this sense, the algorithm of \cref{thm_A} beats random assignment. We can be more precise about this and show that this algorithm beats random assignment even when we restrict to bounded degree graphs or regular graphs. In this setting random assignment gives \(\Tr(\rho^{\mathsc{mixed}} H) \geq \left(\frac{1}{d} - \Theta(\frac{1}{D})\right)\Tr(\rho_* H)\) (see, for example, \cref{remark_sdasdhucakjSAEkub}).\zacktodo{It would be nice to reference something like \cref{prop_prod_and_max_mixed_approx}.}

\begin{manualtheorem}{B}[\cref{thm_main_bounded_min_max_deg_graph,cor_main_bounded_deg_5}]\label{thm_B}
    In the situation of the above theorem and when \(G\) is unweighted and bounded in degree by $D$, there exists an efficient algorithm that outputs a density matrix $\rho$ such that $\Tr(\rho H) \geq \left(\frac{1}{d} + \Omega(\frac{1}{D})\right)\Tr(\rho_* H)$. Furthermore, over regular graphs of degree \(D \leq 5\) we have that \(\Tr(\rho H) > \frac{1}{2}\Tr(\rho_* H)\) for all \(d \geq 2\).
\end{manualtheorem}

The algorithm in \cref{thm_A,thm_B} is a simple matching based algorithm (adapted from \cite{lee_improved_2024}), which does not use an SDP. However, the analysis requires monogamy of entanglement results that we prove with degree-6 sum-of-squares certificates or, equivalently, the level-3 quantum Lasserre SDP. Indeed, these certificates, given in \cref{thm_C}, can be seen as our key contribution, and we hope they will be of independent interest. We note that, in general, these bounds are tight (see, for example, \cref{thm_sos_epr_star_bound_is_opt,remark_tri_bound_is_opt_EPR}).

\begin{manualtheorem}{C}[\cref{prop_QUG_SOS_star_bound,prop_QUG_SOS_triangle_bound}]\label{thm_C}
    In the situation of \cref{thm_A}, let $\rho$ be an arbitrary density matrix. For any vertex $a \in V$ and any subset of its neighbors $S \subseteq N(a)$, 
    \[\Tr\left(\rho \sum_{b \in S}\ket{\psi_{ab}}\bra{\psi_{ab}}^{ab}\right) \leq \frac{\abs{S}}{d} + \frac{d-1}{d}\text{.}\]
    Moreover, for any triangle $\{a,b,c\} \subseteq V$,
    \[\Tr\left(\rho \left(\ket{\psi_{ab}}\bra{\psi_{ab}}^{ab} + \ket{\psi_{ac}}\bra{\psi_{ac}}^{ac} + \ket{\psi_{bc}}\bra{\psi_{bc}}^{bc}\right)\right) \leq \frac{3}{d} + \frac{d-1}{d}\text{.}\]
\end{manualtheorem}

As a special case of the \algprobm{Maximal Entanglement} problem, when working over qubits (\(d=2\)) we show that the matching-based algorithm from \cref{thm_A} combined with the Parekh-Thompson product state algorithm \cite{parekh_beating_2021} achieves an approximation guarantee of \(0.595\). This beats the previously best known algorithm for this 2-LH, which had an approximation guarantee of \(\frac{1}{2}\) \cite{parekh_optimal_2022}. Additionally, in the case of \algprobm{Quantum Max-Cut}, we give slightly improved analysis to show that the algorithm in \cite{lee_improved_2024} achieves an approximation guarantee of \(0.599\), where it was previously only known to have an approximation guarantee of \(0.595\). Lastly, for the qubit \algprobm{EPR} problem we give an algorithm that achieves an approximation guarantee of \(0.72\) beating that of \cite{king_improved_2023}, which gave an algorithm with an approximation guarantee of \(\frac{1}{\sqrt{2}}\). These result are all delegated to \cref{section_the_qubit_case}.

\subsection*{Significance}

We believe our work can help better understand entanglement in arbitrary quantum states. Since the sum-of-squares certificates apply equally to true quantum states, an equivalent formulation of our results is that for an arbitrary state, we can characterize the ``amount of entanglement'' over an edge in the interaction graph by considering the supremum of the energy overall projectors onto maximally entangled states. In particular, \cref{thm_C} can be seen as giving new monogamy of entanglement style bounds. Previously, these bounds were only known for the qubit case \cite{anshu_beyond_2020,parekh_application_2021} or more restrictively, for anti-symmetric entanglement, i.e., when considering the triangle graph \cite{parekh_optimal_2022}. Globally, \cref{thm_A,thm_B} bound the expected entanglement over edges of a \(D\)-regular graph by \(1/d+\calO(1/D)\), for constant local dimension, \(d\). For cases when there exists an optimal product state approximation, i.e., the EPR problem, this beats the \(1/d+\calO(1/D^{1/3})\) upper bound achieved by \cite{brandao_product-state_2016}.

Algorithmically, we demonstrate that the matching-based algorithm achieves at least a constant factor of this upper bound, lower bounding the maximum energy by \(1/d^2 + \Omega(1/D)\). A natural follow-up question is if the certificates can be improved or if the algorithm providing product state witnesses can be improved. If the answer to both these questions is no, then it is possible that some parameterized family of instances of the \algprobm{Maximal Entanglement} problem are, for instance, candidate NLTS instances. A resolution to this (either way) would further help understand the relationship between different types of entanglement and the circuit complexity of generating said entangled states.

\subsection{Preliminaries And Notation}\label{section_prelims}

We use the notation \([n] \coloneq \{1, \dotsc, n\}\). We denote the \textit{standard basis} of \(\C^d\) as \(\{\ket{i}\ |\ i \in [d]\}\). For, $A \in \mathcal{L}((\C^d)^{\otimes n})$, a \emph{bounded linear operator} from \((\C^d)^{\otimes n}\) to itself, we use \(A^\sfT\) to denote the \emph{transpose} and \(A^\dagger \coloneq \overline{A^\sfT}\) to denote the \emph{adjoint/conjugate transpose}. \(A\) is \emph{Hermitian} if \(A^\dagger = A\) and it is a \emph{projector} if \(A^2 = A\). We use the notation \(\calD((\C^d)^{\otimes n}) \coloneq \{\rho \in \calL((\C^d)^{\otimes n})\ |\ \rho^\dagger = \rho,\ \rho \succcurlyeq 0,\ \Tr(\rho)=1\}\) to denote the subset of \emph{density matrices} on \(n\) qudits. A density matrix $\rho$ is called \emph{pure} if it is a projector, namely, \(\rho^2 = \rho = \ket{\psi}\bra{\psi}\) for some \(\ket{\psi} \in (\C^d)^{\otimes n}\). We use \(\{A,B\} = AB + BA\) to denote the \emph{anti-commutator}.

We use superscripts in two ways. First, let \(\rho \in \calD((\C^d)^{\otimes n})\) be a quantum state over \(n\) qudits, then we use the superscript notation to denote its reduced density matrices. That is, let \(S \subseteq [n]\), then \(\rho^S \coloneq \Tr_{[n] \setminus S}(\rho)\) and when \(S = \{a\}\), we will use \(\rho^a\). Second, when \(A \in \calL((\C^d)^{\otimes k})\) is a linear operator over \(k<n\) qudits, we will use the superscript notation with a sequence of non-repeating indices, \((a_1, \dotsc, a_k)\), to extend it to an operator over \(n\) qudits, \(A^{a_1 \dotsc a_k} \in \calL((\C^d)^{\otimes n})\), where the indices specify which qudits to apply the operator with all other qudits being acted on by the identity. Sometimes, to be more explicit, we will use \(A^{a_1 \dotsc a_k} \otimes I^{[n] \setminus \{a_1, \dotsc, a_k\}} \in \calL((\C^d)^{\otimes n})\). We do this for kets as well to specify the order of tensors, e.g., \(\ket{\psi}^{23} \otimes \ket{\varphi}^1 \coloneq \ket{\varphi} \otimes \ket{\psi} \in (\C^d)^{\otimes 3}\). In this setting we will never leave out indices. 

We make substantial use of the \emph{generalized EPR state},
\(
    \ket{\EPR_d} \coloneq \frac{1}{\sqrt{d}} \sum_{a = 1}^d \ket{a} \otimes \ket{a}
\in (\C^d)^{\otimes 2}\). 
When the local dimension is implied, we simplify to \(\ket{\EPR}\). We also let \(\EPR \coloneq \ket{\EPR}\bra{\EPR} \in \calL((\C^d)^{\otimes 2})\) denote the projector onto the subspace span by \(\ket{\EPR}\). 

For an algorithm \(\calA\), an \emph{approximation guarantee} is a constant \(\alpha\) such that for all problem instances \(\calI\), one has that \(\Tr(\rho H) \geq \alpha \Tr(\rho_* H)\) for \(\rho = \calA(\calI)\), the state output by the algorithm, \(H = H(\calI)\), the Hamiltonian defined by the instance, and \(\rho_*\) the optimal solution for \(\calI\)/most excited state of \(H\). We typically refer to the \emph{approximation ratio} (or \emph{algorithmic gap}) as the infimum over all instances, \(\alpha = \inf_\calI\left(\frac{\Tr(\rho H)}{\Tr(\rho_* H)}\right)\).

For two functions \(f,g : \calD \to \RR\) and \(a \in \calD\) we say \(f(x) \in \Theta_{x \to a}(g(x))\) if the limit \(C = \lim_{x \to a}\left(\frac{f(x)}{g(x)}\right)\) exists, is non-zero, and is bounded. When \(a = \infty\) we will drop the subscript and just write \(f(x) \in \Theta(g(x))\). We often use the notation \(f(x) \geq k(x) \pm \Theta_{x \to a}(g(x))\) for some explicit function \(k: \calD \to \R\) to mean that there exists a function \(h : \calD \to \RR\) such that \(\forall x \in \calD : f(x) \geq h(x)\) and \(\pm h(x) \mp k(x) \in \Theta_{x \to a}(g(x))\). We similarly use Big-\(\calO\) and Big-\(\Omega\).

\section{The \texorpdfstring{\textsc{Maximal Entanglement}}{Maximal Entanglement} Problem}

In this paper, we consider a subclass of the \algprobm{$2$-Local Hamiltonian} problem in which the edge interactions are projectors onto maximally entangled states over \(d\)-dimensional qudits. Because our problem is \(2\)-local, there exists a natural underlying interaction graph.

\begin{definition}[\algprobm{Maximal Entanglement} Problem]\label{def_QUG_prob}
    Over an \(n\)-qudit system of local dimension \(d\), we are given a positively weighted graph \(G = (V,E,w)\) with \(|V| = n\) and a sequence of unitary matrices, \((U_e \in \SU(d))_{e \in E}\), indexed by the edges and specified by \(\poly(n)\) bits. We then define a \(2\)-local Hamiltonian problem with local Hamiltonians defined by \(h_e \coloneq (I \otimes U_e) \ket{\EPR}\bra{\EPR} (I \otimes U_e^\dagger)\)\footnote{Note, we implicitly define a total ordering on \(V\) and apply the unitary to the qudit associated with the maximal vertex in the edge according to this ordering.} and the full (normalized) Hamiltonian \(H \coloneq \E_{(a,b) \sim E} \left[h_{ab}^{a b} \otimes I^{[V] \setminus \{a,b\}}\right] = \frac{1}{W} \sum_{(a,b) \in E} w_{ab} \left(h_{ab}^{a b} \otimes I^{[V] \setminus \{a,b\}}\right)\) (where \(W \coloneq \sum_{(a,b) \in E} w_{ab}\)). As an optimization problem, we have the following objective:
    \begin{equation}\label{eq_QUG_obj}
        \lambda_{\max}(H)=\max_{\ket{\psi} \in (\C^d)^{\otimes n}} \bra{\psi} H \ket{\psi} = \max_{\rho \in \calD((\C^d)^{\otimes n})} \Tr(H \rho)
    \end{equation}
\end{definition}

There are two special instances of the \algprobm{Maximal Entanglement} (ME) problem problem worth highlighting. First, when all the unitaries are the identity matrix (\(U_{e} = I\) for all \(e \in E\)) this is known as the \algprobm{EPR} problem. This problem is known to be stoquastic, i.e., as a minimization problem, its Hamiltonian (negative of the one we consider), written in the standard basis, has non-positive off-diagonal elements. It is well known that these Hamiltonians and indeed the \algprobm{EPR} problem belong to the class \cc{StoqMA}. It is believed that \(\cc{StoqMA} \subsetneq \cc{QMA}\) \cite{bravyi_merlin-arthur_2006,cubitt_complexity_2014}. Secondly, when working with qubits (\(d=2\)) and if every edge unitary is taken to be \(iY\) (where \(Y\) is the Pauli-\(Y\) matrix), then this problem becomes \algprobm{Quantum Max-Cut} \cite{gharibian_almost_2019}, which is \cc{QMA}-hard as an optimization problem. In particular, the \algprobm{Maximal Entanglement} problem is also \cc{QMA}-hard as an optimization problem.

To give more context for and to justify \cref{def_QUG_prob}, we look at some well-known properties of maximally entangled states and discuss their implications on the \algprobm{Maximal Entanglement} problem.

\begin{definition}[Maximally Entangled State]\label{def_max_ent_state}
    A bipartite state, \(\rho \in \calD((\C^d)^{\otimes 2})\), is said to be pure if \(\rho^2 = \rho\) and \emph{maximally entangled} if its reduced density matrices are maximally mixed, i.e., \(\Tr_1(\rho) = \Tr_2(\rho) = \frac{1}{d}I\).
\end{definition}

Let \(\ket{\psi} \in (\C^d)^{\otimes 2}\) be an arbitrary bipartite state with global phase. By the Schmidt decomposition, there exists orthonormal bases \(\{\ket{e_1}, \dotsc, \ket{e_d}\}\) and \(\{\ket{f_1}, \dotsc, \ket{f_d}\}\) for \(\C^d\) along with non-negative constants \((\lambda_i)_i\) such that
\(\ket{\psi} = \sum_{i = 1}^d \sqrt{\lambda_i} \ket{e_i} \otimes \ket{f_i}\).
The reduced density matrices are then \(\psi^1 = \sum_{i = 1}^m \lambda_i \ket{e_i}\bra{e_i}\) and \(\psi^2 = \sum_{i = 1}^m \lambda_i \ket{f_i}\bra{f_i}\). The $\psi^i$ are both the maximally mixed state exactly when \(\lambda_i = \frac{1}{d}\) for all \(i \in [d]\). This is all to say that a state is maximally entangled if and only if it can be written as \(\ket{\psi} = \frac{1}{\sqrt{d}}\sum_{i = 1}^d \ket{e_i} \otimes \ket{f_i}\) for some orthonormal bases \(\{\ket{e_1}, \dotsc, \ket{e_d}\}\) and \(\{\ket{f_1}, \dotsc, \ket{f_d}\}\). Additionally, if \(\lambda_i > 0\) for all \(i \in [d]\), then $\ket{\psi}$ has \emph{full Schmidt rank}.

\begin{proposition}[Facts About Full-Rank States]\label{prop_full_rank_states}
    Let \(\ket{\psi} \in (\C^d)^{\otimes 2}\) be a bipartite state with full Schmidt rank. We use \(GL(d)\) to denote the \emph{general linear group} over \(\C^d\) and \(\calU(d) \subseteq GL(d)\) the \emph{unitary group}.
    \begin{enumerate}
        \item\label{prop_item_full_rank_states_2} There exists a unique \(A \in GL(d)\) such that \(\ket{\psi} = (I \otimes A) \ket{\EPR}\). In particular, if \(\ket{\psi}\) is maximally entangled then \(A \in \calU(d)\).
        
        \item\label{prop_item_full_rank_states_1} For any \(A \in GL(d)\) we have that there exists a unique \(B \in GL(d)\) such that \((I \otimes A) \ket{\psi} = (B \otimes I) \ket{\psi}\) and vise versa. In particular, if \(\ket{\psi}\) is maximally entangled and \(A \in \calU(d)\), then \(B \in \calU(d)\). Additionally, if \(\ket{\psi} = \ket{\EPR}\) then \(B=A^{\sfT}\).
    \end{enumerate}
\end{proposition}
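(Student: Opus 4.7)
My plan is to work through the standard ``vec''/Choi--Jamiolkowski identification of $(\C^d)^{\otimes 2}$ with the space $M_d(\C)$ of $d \times d$ matrices, under which a state $\ket{\psi} = \sum_{ij} c_{ij}\, \ket{i}\otimes\ket{j}$ corresponds to the matrix $M_\psi = (c_{ij})_{ij}$. Under this map, $\ket{\EPR}$ becomes $I/\sqrt{d}$, the action $(A \otimes B)\ket{\psi}$ becomes $A M_\psi B^{\sfT}$, and full Schmidt rank of $\ket{\psi}$ is exactly invertibility of $M_\psi$. A key consequence I would record upfront is the ``ricochet'' identity $(I \otimes A)\ket{\EPR} = (A^{\sfT} \otimes I)\ket{\EPR}$, which drops out of the dictionary in one line.

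For the first item, I would observe that $(I \otimes A)\ket{\EPR}$ corresponds to $A^{\sfT}/\sqrt{d}$, so existence and uniqueness of $A$ amount to solving the matrix equation $M_\psi = A^{\sfT}/\sqrt{d}$, giving $A = \sqrt{d}\, M_\psi^{\sfT} \in GL(d)$. The unitarity claim when $\ket{\psi}$ is maximally entangled I would verify by directly computing $\Tr_1\!\bigl((I \otimes A)\EPR(I \otimes A^\dagger)\bigr) = \tfrac{1}{d} A A^\dagger$ and setting this equal to $I/d$, which forces $A \in \calU(d)$.

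For the second item, I would first apply the first item to write $\ket{\psi} = (I \otimes V)\ket{\EPR}$ for a unique $V \in GL(d)$. Substituting and applying the ricochet identity on both sides of $(I \otimes A)\ket{\psi} = (B \otimes I)\ket{\psi}$ reduces the equation to the purely matrix identity $V^{\sfT} A^{\sfT} = B V^{\sfT}$, yielding the unique solution $B = V^{\sfT} A^{\sfT} V^{-\sfT}$; the ``vice versa'' direction is the same argument with $A$ and $B$ swapped. When $A$ and $V$ are both unitary, so are $A^{\sfT}$ and $V^{\sfT}$, and $B$ is a unitary conjugate of a unitary, hence unitary. The special case $\ket{\psi} = \ket{\EPR}$ forces $V = I$, which immediately gives $B = A^{\sfT}$.

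The entire argument is mechanical once the vec/ricochet dictionary is in place; the only mild obstacle is bookkeeping --- keeping conventions consistent about which tensor factor carries which operator and whether to use transpose versus conjugate transpose --- but nothing substantive is at stake.
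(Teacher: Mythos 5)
Your proposal is correct, and it reaches the result by a different (more standard and more mechanical) route than the paper. The paper first invokes the Schmidt decomposition to write \(\ket{\psi} = (A \otimes B)\ket{\EPR}\) with \(B\) unitary, then identifies \(\ket{\psi}\) with a non-degenerate bilinear form \(\ket{\psi}^*\) on \(\C^d \times \C^d\) and obtains both the reduction to \((I \otimes BA^{\sfT})\ket{\EPR}\) and the commutation relation \((C \otimes I)\ket{\psi} = (I \otimes BA^{\sfT}C^{\sfT}(BA^{\sfT})^{-1})\ket{\psi}\) by computing adjoints with respect to that form; uniqueness comes from uniqueness of the adjoint over a non-degenerate form. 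Your vec/Choi dictionary is the coordinate version of exactly this: the bilinear form \(\ket{\psi}^*\) is represented by the coefficient matrix \(M_\psi\), the paper's operator \(BA^{\sfT}\) is your \(V = \sqrt{d}\,M_\psi^{\sfT}\), and ``adjoint under \(\ket{\psi}^*\)'' is conjugation by \(M_\psi\), so your formula \(B = V^{\sfT}A^{\sfT}V^{-\sfT}\) matches the paper's. What your approach buys is brevity and transparency --- existence, uniqueness (invertibility of \(M_\psi\) is exactly full Schmidt rank), the unitarity claims (via the partial-trace computation \(\Tr_1 = \tfrac{1}{d}AA^\dagger\)), and the \(B = A^{\sfT}\) special case all fall out of one linear-algebra identity --- at the cost of being basis-dependent in presentation; the paper's version makes the invariant structure (the bilinear form and its adjoint) explicit. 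Both are complete and correct; no gap in yours.
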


While well known\footnote{The special case of \cref{prop_item_full_rank_states_1}, when \(\ket{\psi} = \ket{\EPR}\), is commonly known as the transpose-trick.}, we include a proof of this proposition in \cref{appendix_proof_of_prop_full_rank_states} for completeness. Next, we have the following corollary, which adapts \cref{prop_full_rank_states} for pure density matrices (i.e., states without global phase).

\begin{corollary}\label{cor_max_ent_states} 
Let \(\SU(d) \subseteq \calU(d)\) denote the \emph{special unitary group}. Let \(\rho \in \calD((\C^{d})^{\otimes 2})\) be a maximally entangled pure state.

\begin{enumerate}
    \item There exists a unique (up to phase) \(U \in \SU(d)\) such that \(\rho = (I \otimes U) \ket{\EPR}\bra{\EPR} (I \otimes U^\dagger)\). Moreover, for every \(U \in \SU(d)\),  \((I \otimes U) \ket{\EPR} \bra{\EPR} (I \otimes U^\dagger)\) is a maximally entangled state. 
    \item For every \(A \in \SU(d)\), there exists a unique (up to phase) \(B \in \SU(d)\) such that \((A \otimes I) \rho (A^\dagger \otimes I) = (I \otimes B) \rho (I \otimes B^\dagger)\).
\end{enumerate}
\end{corollary}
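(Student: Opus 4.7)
My plan is to reduce both parts to Proposition~\ref{prop_full_rank_states} by lifting the density matrix $\rho$ to a purification $\ket{\psi}$ and then carefully tracking the residual global-phase freedom so as to land in $\SU(d)$ rather than $\calU(d)$. The two non-trivial observations I will use throughout are that $\rho = \ket{\psi}\bra{\psi}$ is invariant under $\ket{\psi} \mapsto e^{i\theta}\ket{\psi}$, and that any element of $\calU(d)$ can be rescaled by a scalar phase to acquire unit determinant, with the residual scalar ambiguity being exactly the $d$-th roots of unity, which is what ``unique up to phase'' will mean inside $\SU(d)$.

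\textbf{Part 1.} I would first write $\rho = \ket{\psi}\bra{\psi}$ for an arbitrary representative $\ket{\psi}$. Since $\rho$ is maximally entangled, so is $\ket{\psi}$, and in particular $\ket{\psi}$ has full Schmidt rank, so item~\ref{prop_item_full_rank_states_2} of Proposition~\ref{prop_full_rank_states} produces a unique $A \in \calU(d)$ with $\ket{\psi} = (I \otimes A)\ket{\EPR}$. Replacing $\ket{\psi}$ by $e^{i\theta}\ket{\psi}$ sends $A$ to $e^{i\theta}A$, and choosing $\theta$ so that $e^{id\theta}\det(A) = 1$ produces a $U \in \SU(d)$ representing the same $\rho$; the remaining freedom $\theta \mapsto \theta + 2\pi k/d$ is precisely the ``up to phase'' ambiguity. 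For the ``moreover'' claim I would observe that $(I \otimes U)\ket{\EPR}$ has the same Schmidt coefficients as $\ket{\EPR}$, all equal to $1/\sqrt{d}$, so the resulting pure state is automatically maximally entangled.

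\textbf{Part 2.} Fix $A \in \SU(d) \subseteq \calU(d)$ and any purification $\ket{\psi}$ of $\rho$. The ``vice versa'' direction of item~\ref{prop_item_full_rank_states_1} of Proposition~\ref{prop_full_rank_states} supplies a unique $B_0 \in GL(d)$ with $(A \otimes I)\ket{\psi} = (I \otimes B_0)\ket{\psi}$, and the ``in particular'' clause upgrades $B_0$ to $\calU(d)$ since $\ket{\psi}$ is maximally entangled. This equation scales identically on both sides under $\ket{\psi} \mapsto e^{i\phi}\ket{\psi}$, so $B_0$ depends only on $\rho$ and $A$ and not on the chosen representative. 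Rescaling $B_0$ by a scalar phase to enforce $\det B = 1$ then yields $B \in \SU(d)$, unique up to a $d$-th root of unity. Taking the outer product of the vector equation with its adjoint will then give the claimed matrix identity $(A \otimes I)\rho(A^\dagger \otimes I) = (I \otimes B)\rho(I \otimes B^\dagger)$.

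\textbf{Main obstacle.} There is no substantive mathematical obstacle beyond the phase bookkeeping above; the content of the corollary really does sit inside Proposition~\ref{prop_full_rank_states}. The only point I expect to write out carefully once is that the central scalars inside $\SU(d)$ form the finite cyclic group of $d$-th roots of unity rather than a full $U(1)$, so that ``unique up to phase'' remains a well-posed statement when we intersect the $\calU(d)$ ambiguity with $\SU(d)$.
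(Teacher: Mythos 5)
Your proposal is correct and follows essentially the same route as the paper: reduce both parts to Proposition~\ref{prop_full_rank_states} via a vector representative $\ket{\psi}$ of $\rho$, rescale by a $d$-th root of the conjugate determinant to land in $\SU(d)$, and derive uniqueness up to phase from the Proposition's uniqueness together with the phase-invariance of conjugation. The only cosmetic difference is in Part~2, where the paper re-enters the Proposition's proof to track $\det B$ directly (the explicit conjugation formula there already preserves the determinant), whereas you invoke the stated $\calU(d)$ conclusion and rescale afterwards; both are fine.
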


\begin{proof}
    By \cref{prop_full_rank_states}, we know that for every maximally entangled pure state \(\rho \in \calD((\C^{d})^{\otimes 2})\) there exists a \(\ket{\psi} \in (\C^{d})^{\otimes 2}\) such that \(\rho = \ket{\psi}\bra{\psi}\) and a unique \(U \in U(d)\) such that \(\ket{\psi} = (I \otimes U) \ket{\EPR}\), i.e., \(\rho = (I \otimes U) \ket{\EPR}\bra{\EPR} (I \otimes U^\dagger)\). Let \(z = \det(U)\), we then have that \(\sqrt[d]{\overline{z}}U \in \SU(d)\) and \((I \otimes \sqrt[d]{\overline{z}}U) \ket{\EPR}\bra{\EPR} (I \otimes \overline{\sqrt[d]{\overline{z}}} U^\dagger) = (I \otimes U) \ket{\EPR}\bra{\EPR} (I \otimes U^\dagger) = \rho\). The other direction follows from Schmidt decomposition directly.

    The second statement can be shown by modifying the proof of \cref{prop_full_rank_states}, \cref{prop_item_full_rank_states_1} with careful consideration of the determinant of \(B\) when using the fact that \(U \in \SU(d)\) in the decomposition \(\rho = \ket{\psi}\bra{\psi}\) and \(\ket{\psi} = (I \otimes U) \ket{\EPR}\). 
    
    The uniqueness up to phase in both parts follows from the uniqueness is \cref{prop_full_rank_states}, which is relaxed by the fact that conjugation is invariant under change of phase.
\end{proof}

With this, it follows that the local Hamiltonians in \cref{def_QUG_prob} are nothing but projectors onto arbitrary maximally entangled bipartite states as wanted. Furthermore, without loss of generality, the unitary can be applied to either qudit in the edge (up to transpose). Moreover, \cref{cor_max_ent_states} allows us to reason about solutions to the \algprobm{Maximal Entanglement} problem over certain interaction graphs with solutions to the EPR problem (or the \algprobm{Quantum Max-Cut} problem in the case of \(d=2\)).

\subsection{A Priori Analysis}

First, we show that any instance of the \algprobm{Maximal Entanglement (ME)} problem over a tree graph is equivalent to the \algprobm{EPR} problem over the same tree graph, in the following sense. 

\begin{lemma}[\algprobm{ME} Is Equivalent To \algprobm{EPR} On Tree Graphs]\label{lemma_equiv_of_QUG_EPR}
    Let \(\langle T = (V,E,w), (U_e)_e \rangle\) be an instance of the \algprobm{Maximal Entanglement} problem where \(T\) is a tree graph (i.e., no cycles).
    Then, for any \(\rho \in \calD((\C^d)^{\otimes n})\) we have that there exists a \(\rho' \in \calD((\C^d)^{\otimes n})\) such that, for all \((a,b) \in E\),
    \begin{equation}
        \Tr\left(h_{ab}^{ab}\rho\right) = \Tr\left(\EPR^{ab}_{ab} \rho'\right)
    \end{equation}
    and vice versa (where \(h_{ab}\) is defined as in \cref{def_QUG_prob}).
\end{lemma}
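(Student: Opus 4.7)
The plan is to build a tensor-product unitary $V = \bigotimes_{v \in V} V_v$, with each $V_v \in \SU(d)$ a single-qudit unitary, and then take $\rho' = V \rho V^\dagger$. For this choice, $\Tr(\EPR^{ab}\rho') = \Tr(V^\dagger \EPR^{ab} V \rho)$, so it suffices to arrange that $V^\dagger \EPR^{ab} V = h_{ab}^{ab}$ for every edge $(a,b) \in E$, equivalently
\[
(V_a \otimes V_b U_{ab})\,\EPR\,(V_a^\dagger \otimes U_{ab}^\dagger V_b^\dagger) = \EPR.
\]
Using the standard identity $(A \otimes B)\ket{\EPR} = (I \otimes B A^{\sfT})\ket{\EPR}$ (which is the $\ket{\EPR}$ case of \cref{prop_full_rank_states}, \cref{prop_item_full_rank_states_1}), this reduces to the single per-edge constraint $V_b U_{ab} V_a^{\sfT} \propto I$, i.e.\
\[
V_b = c_{ab}\, \overline{V_a}\, U_{ab}^\dagger \qquad \text{for some phase } c_{ab}.
\]

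The key observation is that because $T$ is a tree, this system of constraints can be solved with no consistency conditions. I would root $T$ at an arbitrary vertex $r$, set $V_r = I$, and then walk outwards (BFS or DFS), defining $V_b$ from its parent $a$ via the formula above, choosing the phase $c_{ab}$ so that $V_b \in \SU(d)$ (which is possible by \cref{cor_max_ent_states}, since changing $U_{ab}$ by a phase does not change the projector $h_{ab}$). Because every non-root vertex has a unique parent in $T$, each $V_v$ gets defined exactly once, and each edge's constraint is satisfied by construction.

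With $V := \bigotimes_v V_v$ in hand, $\rho' := V \rho V^\dagger$ is a density matrix and the computation above gives $\Tr(h_{ab}^{ab}\rho) = \Tr(\EPR^{ab}\rho')$ simultaneously for every $(a,b) \in E$. The converse direction is immediate: given $\rho'$, set $\rho := V^\dagger \rho' V$ using the same $V$, and the identity runs in reverse.

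The only substantive issue is ensuring well-definedness of the inductive construction, which is exactly where the tree hypothesis is essential: on a graph with a cycle $v_0, v_1, \dots, v_k = v_0$, iterating the per-edge formula around the cycle imposes a nontrivial consistency constraint on the product of the $U_{v_i v_{i+1}}^{\pm}$ (roughly, the holonomy of the edge-unitaries around the cycle must be trivial up to phase), and there is no reason for a generic instance to satisfy it. On a tree there are no cycles, so no such obstruction arises, and the argument goes through. I expect the only delicate bookkeeping to be tracking the edge orientation dictated by the vertex ordering in \cref{def_QUG_prob} (so that $U_{ab}$ acts on the correct qudit and transposes appear in the right places), but this is notational rather than conceptual.
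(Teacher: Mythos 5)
Your proposal is correct and follows essentially the same route as the paper: root the tree, inductively define a single-qudit unitary at each vertex from its parent's unitary using the identity that transfers an operator across $\ket{\EPR}$ (\cref{prop_full_rank_states}/\cref{cor_max_ent_states}), and conjugate $\rho$ by the resulting tensor product of local unitaries, with the tree structure guaranteeing the per-edge constraints never conflict. Your version is just slightly more explicit about the algebra ($V_b = c_{ab}\,\overline{V_a}\,U_{ab}^\dagger$) than the paper's appeal to the abstract $\overline{U_b}$ from \cref{cor_max_ent_states}.
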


\begin{proof}
    Fix some root vertex \(r\). We will assume, without loss of generality, that the unitaries for the ME instance are always applied to the vertices with greater depth.
    We inductively construct a local unitary over the depth of the tree. Let \(U_r = I\). For each vertex \(a \in V\), such that \(a\) is a child of \(r\) (i.e., has depth 1) let \(U_{a} = U_{(r,a)}\), where \(U_{(r,a)}\) is the unitary in the ME instance assigned to the edge \((r,a)\). For each vertex, \(a \in V\), with depth \(2\), let \(b \in V\) be it parent vertex, then let \(U_{a} = U_{b,a} \overline{U_{b}}\), where \(\overline{U_{b}}\) is the unitary guaranteed by \cref{cor_max_ent_states} to have the property that \((U_b \otimes \overline{U_{b}}) \EPR (U_b \otimes \overline{U_{b}})^\dagger = \EPR\). We continue this process inductively, for increasing depths, so we have the following state with the desired property.
    \begin{equation}\label{eq_conj_by_loc_units_to_get_eqiv_state}
        \rho' = \left(\bigotimes_{a \in V} \left(U_{a}^\dagger\right)^{a}\right) \rho \left(\bigotimes_{a \in V}U_{a}^{a}\right)
    \end{equation}
    The claim then follows by the cyclic property of the trace.
\end{proof} 

\begin{figure}[t]
    \centering
    \scalebox{0.7}{\begin{tikzpicture}[scale=2]
    \node[circle, draw, fill=orange30] (1) at (0,0) {$1$};
    \node[circle, draw, fill=orange30] (2) at ($(1) + (1*360/5 + 54: 1)$) {$2$};
    \node[circle, draw, fill=orange30] (3) at ($(1) + (2*360/5 + 54: 1)$) {$3$};
    \node[circle, draw, fill=orange30] (4) at ($(1) + (3*360/5 + 54: 1)$) {$4$};
    \node[circle, draw, fill=orange30] (5) at ($(1) + (4*360/5 + 54: 1)$) {$5$};
    \node[circle, draw, fill=orange30] (6) at ($(1) + (5*360/5 + 54: 1)$) {$6$};

    \node[circle, draw, fill=purple20] (1b) at (4,0) {$1$};
    \node[circle, draw, fill=purple20] (2b) at ($(1b) + (1*360/5 + 54: 1)$) {$2$};
    \node[circle, draw, fill=purple20] (3b) at ($(1b) + (2*360/5 + 54: 1)$) {$3$};
    \node[circle, draw, fill=purple20] (4b) at ($(1b) + (3*360/5 + 54: 1)$) {$4$};
    \node[circle, draw, fill=purple20] (5b) at ($(1b) + (4*360/5 + 54: 1)$) {$5$};
    \node[circle, draw, fill=purple20] (6b) at ($(1b) + (5*360/5 + 54: 1)$) {$6$};

    \draw (2) -- (1)node[midway,above,sloped,text=green!50!black] {$h_{12}$} -- (6)node[midway,above,sloped,text=green!50!black] {$h_{16}$}
    (3) -- (1)node[midway,above,sloped,text=green!50!black] {$h_{13}$} 
    (4) -- (1)node[midway,above,sloped,text=green!50!black] {$h_{14}$} -- (5)node[midway,above,sloped,text=green!50!black] {$h_{15}$};

    \draw (2b) -- (1b)node[midway,above,sloped,text=blue!50!black] {\small $\operatorname{EPR}$} -- (6b)node[midway,above,sloped,text=blue!50!black] {\small $\operatorname{EPR}$}
    (3b) -- (1b)node[midway,above,sloped,text=blue!50!black] {\small $\operatorname{EPR}$} 
    (4b) -- (1b)node[midway,above,sloped,text=blue!50!black] {\small $\operatorname{EPR}$} -- (5b)node[midway,above,sloped,text=blue!50!black] {\small $\operatorname{EPR}$};

    \draw ($(1)+(1,0.5)$) edge[bend left,<->,line width=1] ($(1b)+(-1,0.5)$);
    \node at ($(1)+(2,1)$) {\large $\Tr(\text{\textcolor{orange50}{$\rho$}} \text{\textcolor{green!50!black}{$H$}}) = \Tr(\text{\textcolor{purple50}{$\rho'$}} \text{\textcolor{blue!50!black}{$H^{\operatorname{EPR}}$}})$};
\end{tikzpicture}}
    \caption{A visual representation of \cref{lemma_equiv_of_QUG_EPR} on the star graph.
    }
    \label{fig_ME_equals_EPR_star}
\end{figure}

While this property will turn out to be vastly important, it is not sufficient for our analysis as we will need to consider interaction graphs where we can not, in general, show this type of equivalence (for example, the triangle graph). To circumvent this, we give another very important property that will turn out to be of great importance for our analysis. First, we give the following lemma in terms of the EPR problem. Then, we will argue that the same property is true for arbitrary projectors onto maximally entangled states.

\begin{lemma}\label{lemma_anticom_of_epr_give_sum_and_proj}
There exists a projector \(P \in \calL((\C^d)^{\otimes 3})\) such that
    \begin{equation}\label{eq_anticom_of_epr_give_sum_and_proj}
        \{\EPR \otimes I, I \otimes \EPR\} = \frac{1}{d}(\EPR \otimes I + I \otimes \EPR) - \frac{2(d-1)}{d^2} P\text{.}
    \end{equation}
\end{lemma}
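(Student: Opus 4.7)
The plan is to exhibit $P$ as the orthogonal projector onto an explicit $d$-dimensional subspace and verify the claimed identity on a natural block decomposition of $(\C^d)^{\otimes 3}$. Set $A := \EPR \otimes I$ and $B := I \otimes \EPR$; these are rank-$d$ projectors whose ranges are spanned respectively by $\ket{\alpha_k} := \ket{\EPR} \otimes \ket{k}$ and $\ket{\beta_k} := \ket{k} \otimes \ket{\EPR}$ for $k \in [d]$, so that $A = \sum_k \ket{\alpha_k}\bra{\alpha_k}$ and $B = \sum_k \ket{\beta_k}\bra{\beta_k}$. A short Gram computation using $\EPR\ket{ij} = \frac{\delta_{ij}}{\sqrt{d}}\ket{\EPR}$ yields $\langle \alpha_k | \alpha_l \rangle = \langle \beta_k | \beta_l \rangle = \delta_{kl}$ and $\langle \alpha_k | \beta_l \rangle = \delta_{kl}/d$, so the $2d$-dimensional subspace $\mathcal{W} := \mathrm{span}\{\ket{\alpha_k}, \ket{\beta_k}\}_{k \in [d]}$ splits as an orthogonal sum $\bigoplus_{k=1}^d V_k$ with $V_k := \mathrm{span}\{\ket{\alpha_k}, \ket{\beta_k}\}$; both $A$ and $B$ preserve this decomposition and vanish on $\mathcal{W}^\perp$.

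Inside each block $V_k$ I would work in the orthogonal basis $\ket{\omega_k} := \ket{\alpha_k} - \ket{\beta_k}$, $\ket{\mu_k} := \ket{\alpha_k} + \ket{\beta_k}$. From $A\ket{\alpha_k} = \ket{\alpha_k}$, $A\ket{\beta_k} = \frac{1}{d}\ket{\alpha_k}$ and the symmetric formulas for $B$, one reads off that $A+B$ is diagonal in this basis with eigenvalue $\frac{d-1}{d}$ on $\ket{\omega_k}$ and $\frac{d+1}{d}$ on $\ket{\mu_k}$. The projector identity $\{A,B\} = (A+B)^2 - (A+B)$, which uses only $A^2 = A$ and $B^2 = B$, then gives $\{A,B\}$ eigenvalues $-\frac{d-1}{d^2}$ on $\ket{\omega_k}$ and $\frac{d+1}{d^2}$ on $\ket{\mu_k}$. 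Comparing with the eigenvalues $\frac{d-1}{d^2}$ and $\frac{d+1}{d^2}$ of $\frac{1}{d}(A+B)$, the defect $\frac{1}{d}(A+B) - \{A,B\}$ equals $\frac{2(d-1)}{d^2}$ on $\ket{\omega_k}$ and $0$ on $\ket{\mu_k}$.

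This is exactly $\frac{2(d-1)}{d^2}$ times the orthogonal projector onto $\mathrm{span}_k \ket{\omega_k}$. Using $\langle \omega_k | \omega_l \rangle = \frac{2(d-1)}{d}\delta_{kl}$, the explicit candidate is $P := \frac{d}{2(d-1)}\sum_k \ket{\omega_k}\bra{\omega_k}$, a manifestly rank-$d$ orthogonal projector. Since $A$, $B$, $\{A,B\}$, and $P$ all vanish on $\mathcal{W}^\perp$ and agree blockwise on $\mathcal{W}$, the identity extends from $\mathcal{W}$ to all of $(\C^d)^{\otimes 3}$. The main bookkeeping hazard is verifying the cross-block orthogonality $V_k \perp V_l$ for $k \neq l$ and the action of $A, B$ on $\ket{\alpha_k}, \ket{\beta_k}$, but both reduce to a single application of $\EPR\ket{ij} = \frac{\delta_{ij}}{\sqrt{d}}\ket{\EPR}$, so I do not expect any genuine obstacle.
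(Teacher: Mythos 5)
Your argument is correct, and it takes a genuinely different route from the paper's. The paper defines $P$ by rearranging the claimed identity and then verifies $P^2=P$ by brute force: it expands $\EPR\otimes I$ and $I\otimes\EPR$ in the standard basis, writes $P$ as a combination of four sums $S_1,\dotsc,S_4$, and computes all pairwise products $S_iS_j$. You instead exploit the operator structure: writing $A=\EPR\otimes I$, $B=I\otimes\EPR$, you use $\{A,B\}=(A+B)^2-(A+B)$ and diagonalize $A+B$ on the $2d$-dimensional joint range $\mathcal{W}=\bigoplus_k V_k$, where the block structure and the eigenvalues $\tfrac{d\pm1}{d}$ follow from the single overlap computation $\langle\alpha_k|\beta_l\rangle=\delta_{kl}/d$. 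All the checks go through: the $V_k$ are mutually orthogonal, $A$ and $B$ preserve each $V_k$ and annihilate $\mathcal{W}^\perp$, and since $\{A,B\}$ is a polynomial in $A+B$ everything is simultaneously diagonal in the $\{\ket{\omega_k},\ket{\mu_k}\}$ basis, so the eigenvalue bookkeeping is legitimate. Your approach buys strictly more information than the paper's: you get that $P$ is the \emph{orthogonal} projector onto the explicit rank-$d$ subspace $\mathrm{span}_k\{\ket{\alpha_k}-\ket{\beta_k}\}$ (the paper only records $P^2=P$, with Hermiticity implicit), and you also obtain the full spectrum of $\EPR\otimes I+I\otimes\EPR$, which directly recovers the $n=3$ path-graph case of the star bound. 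The paper's computation is more pedestrian but generalizes more mechanically to the convex-combination remark following \cref{cor_anticom_of_UGC_give_sum_and_proj}, where the clean two-projector identity $\{A,B\}=(A+B)^2-(A+B)$ no longer applies term by term. The only degenerate case in your argument is $d=1$, where $\ket{\alpha_k}$ and $\ket{\beta_k}$ coincide, but there the coefficient $\tfrac{2(d-1)}{d^2}$ vanishes and the identity is trivial.
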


The proof of this lemma is done by direct calculation and given in \cref{appendix_proof_of_lemma_anticom_of_epr_give_sum_and_proj}. We note that in the \(d=2\) case, \(P=\Pi_{\vcenter{\hbox{\scalebox{0.4}{$\ydiagram{1,1}$}}}}^{13}\otimes I^2\) is the projector onto the anti-symmetric subspace over the first and third subsystem and identity over the second (i.e., the edge interaction for \algprobm{Quantum Max-Cut} \cite{gharibian_almost_2019}). For \(d > 2\), we note that \(P\) does not, in general, act trivially on the second subsystem. We then have the following corollary.

\begin{corollary}\label{cor_anticom_of_UGC_give_sum_and_proj}
    Let \(h_1\) and \(h_2\) be projectors onto maximally entangled 2-qudit states. There exists a projector \(P \in \calL((\C^d)^{\otimes 3})\)  such that
    \begin{equation}\label{eq_anticom_of_QUG_give_sum_and_proj}
        \{h_1 \otimes I, I \otimes h_{2}\} = \frac{1}{d}(h_{1} \otimes I + I \otimes h_{2}) - \frac{2(d-1)}{d^2} P\text{.}
    \end{equation}
\end{corollary}

\begin{proof}
    By \cref{cor_max_ent_states}, let \(U_{1}, U_{2} \in \SU(d)\) be such that \(h_{1} = (U_{1} \otimes I) \ket{\EPR}\bra{\EPR} (U^\dagger_{1} \otimes I)\) and \(h_{2} = (I\otimes U_{2}) \ket{\EPR}\bra{\EPR} (I \otimes U^\dagger_{2})\). We then conjugate both sides of \cref{eq_anticom_of_epr_give_sum_and_proj} by \((U_{1} \otimes I \otimes U_{2})\) to get the following for \(P\) as in \cref{lemma_anticom_of_epr_give_sum_and_proj}.
    \begin{align}
        \{h_{1} \otimes I, I \otimes h_{2}\} = \frac{1}{d}(h_{1} \otimes I + I \otimes h_{2}) - \frac{2(d-1)}{d^2} (U_{1} \otimes I \otimes U_{2})P(U^\dagger_{1} \otimes I \otimes U^\dagger_{2})
    \end{align}
    We note \((U_{1} \otimes I \otimes U_{2})P(U^\dagger_{1} \otimes I \otimes U^\dagger_{2})\) is a projector because \(P\) is one.
\end{proof}

\begin{remark}
    We note that \cref{cor_anticom_of_UGC_give_sum_and_proj} can be further generalized to the case when \(h_1 = \sum_a \alpha_a \ket{\psi_a} \bra{\psi_a}\) and \(h_2 = \sum_a \beta_a \ket{\varphi_a} \bra{\varphi_a}\) are convex combinations of rank-one projectors and \(P\) is positive semi-definite. 
\end{remark}

To our knowledge, we are the first to discover this property at this level of generality (i.e., for \(d \geq 2\)). If we again consider the \(d=2\) case and let \(U_{1} = w U_{2}^{\sfT}\) for some phase, \(w \in \C\) with \(|w| = 1\), (which is the case for when \(h_{1} = h_{2} = \Pi_{\vcenter{\hbox{\scalebox{0.4}{$\ydiagram{1,1}$}}}}\) are the edge interaction for \algprobm{Quantum Max-Cut} as \(U_{1}\) and \(U_{2}\) can be taken to be \(iY\)) we have that \(P=\Pi_{\vcenter{\hbox{\scalebox{0.4}{$\ydiagram{1,1}$}}}}^{13}\otimes I^2\) is again the projector onto the anti-symmetric subspace over the first and third subsystem and identity over the second. This fact, for \algprobm{Quantum Max-Cut}, has been previously discovered \cite{parekh_optimal_2022,wright_sosproofsemail_2023,takahashi_su2-symmetric_2023}.

\section{Certificates for Monogamy of Entanglement Bounds}

We first set the groundwork for using the sum-of-squares (SOS) proof technique.
To talk about the (non-commutative) sum-of-squares hierarchy, we must first define a notion of degree. We do this by picking a basis for \(\calL(\C^d)\), with one element being the identity, denoted \(\calB_d = \{I\} \cup \{T_a\ |\ a \in [d^2-1]\}\). We will often denote the identity element using \(T_0 \coloneq I\). This can be extended to a basis for \(\calL((\C^d)^{\otimes n}) \cong (\calL(\C^d))^{\otimes n}\),
\begin{equation}
    \calB_d^n = \left\{\bigotimes_{a = 1}^n T_{b_a-1}\ \middle|\ \forall a \in [n] : b_a \in [d^2]\right\}
\end{equation}

\begin{definition}
    For a basis element \(T \in \calB_d^n\), we define its \emph{degree}, denoted \(\omega(T)\), to be the number of non-identity terms or the number of systems on which \(T\) acts non-trivially. We extend this notion to an arbitrary operator to be the largest degree among all of its non-zero components.
\end{definition}

It is clear that we can bound the degree of an arbitrary operator by the number of qudits such that it acts non-trivially. We note that, while defined using \(\calB_d\), this definition is independent of the choice of \(\calB_d\), up to requiring the identity element.

With this, we can define the SOS hierarchy and the notion of a pseudo-density matrix, which mirrors the notation of a pseudo-distribution in the commutative setting \cite{barak_proofs_2016}.

\begin{definition}[Degree-\(2t\) Pseudo-Density Matrix]\label{def_pdm}
    An operator \(\Tilde{\rho} \in \mathcal{L}((\C^d)^{\otimes n})\) is called a \emph{degree-\(2t\) pseudo-density matrix} over \(n\), \(d\)-dimensional qudits if it is normalized, i.e., \(\Tr(\Tilde{\rho}) = 1\), and lies in the closed convex cone defined by:
    \begin{enumerate}
        \item\label{def_item_pdm_self_adj} Self-Adjoint: \(\Tilde{\rho}^\dagger = \Tilde{\rho}\), and
        \item\label{def_item_pdm_positivity} Positivity:  \(\Tr(\Tilde{\rho} A^\dagger A) \geq 0\) for all \(A \in \mathcal{L}((\C^d)^{\otimes n})\) such that \(\omega(A) \leq t\).
    \end{enumerate} 
    Additionally, we use \(\Tilde{\mathcal{D}}^{(2t)}((\C^d)^{\otimes n})\) to denote all such degree-\(2t\) pseudo-density matrices.
\end{definition}

\begin{proposition}[The SoS/Pseudo-Density Matrix Hierarchy]\label{prop_degree_2n_pdm_is_true_dm}
    We have the following hierarchy.
    \begin{equation*}
        \Tilde{\mathcal{D}}^{(2)}((\C^d)^{\otimes n}) \supset \Tilde{\mathcal{D}}^{(4)}((\C^d)^{\otimes n}) \supset \cdots \supset\Tilde{\mathcal{D}}^{(2(n-1))}((\C^d)^{\otimes n}) \supset \Tilde{\mathcal{D}}^{(2n)}((\C^d)^{\otimes n}) = \mathcal{D}((\C^d)^{\otimes n})
    \end{equation*}
\end{proposition}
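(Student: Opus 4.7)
The plan is to split the statement into two independent claims: (i) the nested containments $\Tilde{\mathcal{D}}^{(2(t+1))}((\C^d)^{\otimes n}) \subseteq \Tilde{\mathcal{D}}^{(2t)}((\C^d)^{\otimes n})$ for each $t$, and (ii) the identification $\Tilde{\mathcal{D}}^{(2n)}((\C^d)^{\otimes n}) = \mathcal{D}((\C^d)^{\otimes n})$ at the top of the hierarchy. Both will fall out of \cref{def_pdm} combined with the earlier observation that any $A \in \mathcal{L}((\C^d)^{\otimes n})$ satisfies $\omega(A) \leq n$, since its degree is bounded by the number of tensor factors on which it acts non-trivially.

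For the nested containments, I would argue directly from \cref{def_item_pdm_positivity} of \cref{def_pdm}. Given $\Tilde{\rho} \in \Tilde{\mathcal{D}}^{(2(t+1))}((\C^d)^{\otimes n})$, the self-adjointness and trace-one conditions are independent of $t$, so only positivity needs to be checked at level $t$. But $\{A : \omega(A) \leq t\} \subseteq \{A : \omega(A) \leq t+1\}$, so the positivity inequalities defining $\Tilde{\mathcal{D}}^{(2t)}$ form a subfamily of those defining $\Tilde{\mathcal{D}}^{(2(t+1))}$, giving the containment.

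For the top-level equality, the inclusion $\mathcal{D}((\C^d)^{\otimes n}) \subseteq \Tilde{\mathcal{D}}^{(2n)}((\C^d)^{\otimes n})$ is immediate: a true density matrix $\rho$ is self-adjoint, trace-one, and satisfies $\Tr(\rho A^\dagger A) \geq 0$ for every $A$, regardless of degree. For the converse, the key observation is that for $\Tilde{\rho} \in \Tilde{\mathcal{D}}^{(2n)}((\C^d)^{\otimes n})$, the positivity constraint already ranges over \emph{all} $A \in \mathcal{L}((\C^d)^{\otimes n})$, since every such $A$ has degree at most $n$. I would then invoke the standard fact that an operator $X$ satisfies $\Tr(X A^\dagger A) \geq 0$ for all $A$ if and only if $X \succcurlyeq 0$ (one direction is the definition of PSD; the other follows because every PSD operator $M$ admits a factorization $M = A^\dagger A$ with $A = M^{1/2}$). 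Combined with self-adjointness and $\Tr(\Tilde{\rho}) = 1$, this certifies $\Tilde{\rho} \in \mathcal{D}((\C^d)^{\otimes n})$.

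I do not anticipate a real obstacle: the whole argument is bookkeeping from \cref{def_pdm} and the degree bound. The only place that could trip a reader is the reduction ``positivity against all $A^\dagger A$ is equivalent to PSD'', so I would state that lemma explicitly rather than treat it as folklore. If the paper intends the containments to be \emph{strict}, a witnessing example would be needed (e.g., an SoS integrality gap instance), but since the displayed hierarchy terminates in equality I will read $\supset$ as non-strict inclusion and leave strictness unaddressed unless the authors explicitly require it.
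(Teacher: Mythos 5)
Your proposal is correct. The paper states this proposition without proof, and your argument is exactly the standard one that is implicitly intended: the containments follow because the degree-$2t$ positivity constraints are a subfamily of the degree-$2(t+1)$ ones, and the top-level equality follows because every $A \in \mathcal{L}((\C^d)^{\otimes n})$ has $\omega(A) \leq n$, so degree-$2n$ positivity is full positivity against all $A^\dagger A$, which (together with self-adjointness and unit trace) is equivalent to being a genuine density matrix. Your reading of $\supset$ as non-strict inclusion is the right one; strictness is neither claimed nor needed elsewhere in the paper.
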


For \(\Tilde{\rho} \in \Tilde{\mathcal{D}}^{(2t)}((\C^d)^{\otimes n})\), while not a true density matrix, we can still view \(\Tilde{\rho}\) through it's reduced \(t\)-body moments, which are valid density matrices.

\subsection{Sum-of-Squares Certificates}

In this section, we give two main SOS certificates, dubbed the ``Star Bound'' and the ``Triangle Bound.'' We believe these certificates represent the notion of monogamy of entanglement from the Hamiltonian perspective. In doing this, we also prove the optimally of the degree-six SOS SDP\footnote{See, for example, \cite{navascues_convergent_2008,burgdorf_optimization_2016,takahashi_su2-symmetric_2023,watts_relaxations_2024}, for the relationship between non-commutative SOS hierarchies and SDPs.} on the star graph.

\begin{lemma}[The SOS Star Bound For The EPR Problem]\label{lemma_EPR_d_sos_star_bound}
    Let \(\Tilde{\rho} \in \Tilde{\calD}^{(6)}((\C^d)^{\otimes n})\) be a degree-six pseudo-density matrix. For any vertex \(a \in [n]\) and any subset \(S \subseteq [n] \setminus \{a\}\), we have that
    \begin{equation}
        \Tr\left(\tilde{\rho} \sum_{b \in S} \EPR^{a b}\right) \leq \frac{|S|+d-1}{d}
    \end{equation}
\end{lemma}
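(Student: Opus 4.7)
The plan is to prove the bound via a quadratic identity for $M \coloneq \sum_{b \in S} \EPR^{ab}$ combined with the noncommutative Cauchy–Schwarz inequality for pseudo-density matrices. Writing $\alpha \coloneq \tfrac{|S|+d-1}{d}$, the goal reduces to $\Tr(\tilde\rho M) \le \alpha$.

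First, I would expand $M^2 = \sum_{b,c \in S} \EPR^{ab}\EPR^{ac}$. The diagonal terms collapse via $(\EPR^{ab})^2 = \EPR^{ab}$ to give $M$. The off-diagonal sum $\sum_{b\ne c}\EPR^{ab}\EPR^{ac}$ is invariant under swapping $b$ and $c$, so it equals $\tfrac{1}{2}\sum_{b\ne c}\{\EPR^{ab},\EPR^{ac}\}$. Each anticommutator is rewritten using \cref{cor_anticom_of_UGC_give_sum_and_proj} (applied with the shared qudit $a$ in the middle) as
\[
\{\EPR^{ab},\EPR^{ac}\} \;=\; \tfrac{1}{d}\bigl(\EPR^{ab}+\EPR^{ac}\bigr) \;-\; \tfrac{2(d-1)}{d^{2}}\, P_{b,a,c},
\]
where each $P_{b,a,c}$ is a projector on qudits $\{b,a,c\}$. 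Summing and noting that each $\EPR^{ab}$ appears $|S|-1$ times yields the key identity
\[
\alpha M - M^{2} \;=\; \tfrac{d-1}{d^{2}}\sum_{b\ne c\in S} P_{b,a,c}.
\]

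Second, I would read off two pseudo-density positivity facts. Since each $P_{b,a,c}$ is a projector of degree $\le 3$, we have $P_{b,a,c} = P_{b,a,c}^{\dagger} P_{b,a,c}$ with $\omega(P_{b,a,c}) \le 3$, so \cref{def_pdm} gives $\Tr(\tilde\rho P_{b,a,c}) \ge 0$. Summing these inequalities and multiplying by the positive constant $\tfrac{d-1}{d^{2}}$ yields $\Tr(\tilde\rho M^{2}) \le \alpha\, \Tr(\tilde\rho M)$. Similarly, $\EPR^{ab}$ is a degree-$2$ projector, so $\Tr(\tilde\rho M) \ge 0$. This is the step that genuinely requires degree $6$: the witnesses $P_{b,a,c}$ have degree up to $3$, and testing $\Tr(\tilde\rho A^{\dagger}A)\ge 0$ for $\omega(A)\le 3$ is exactly the content of \cref{def_pdm} at degree $6$.

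Third, I would invoke the noncommutative Cauchy–Schwarz inequality for pseudo-density matrices: the positivity axiom makes $(A,B)\mapsto \Tr(\tilde\rho\, A^{\dagger}B)$ a positive sesquilinear form on operators of degree $\le 3$, so $|\Tr(\tilde\rho\, A^{\dagger}B)|^{2} \le \Tr(\tilde\rho\, A^{\dagger}A)\,\Tr(\tilde\rho\, B^{\dagger}B)$. Taking $A=I$ and $B = M$ (both of degree $\le 3$), this gives $\Tr(\tilde\rho M)^{2} \le \Tr(\tilde\rho M^{2})$. Chaining the two inequalities produces
\[
\Tr(\tilde\rho M)^{2} \;\le\; \Tr(\tilde\rho M^{2}) \;\le\; \alpha\, \Tr(\tilde\rho M),
\]
and dividing by $\Tr(\tilde\rho M)$ (or noting the bound is trivial if it vanishes) yields $\Tr(\tilde\rho M) \le \alpha$, as desired.

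The main obstacle is really just bookkeeping in the first step: correctly counting multiplicities when passing from the double sum $\sum_{b\ne c}(\EPR^{ab}+\EPR^{ac})$ back to a multiple of $M$, and tracking the scalar $\tfrac{d-1}{d^{2}}$ out of $\tfrac{1}{2}\cdot\tfrac{2(d-1)}{d^{2}}$. Once the identity is in hand, everything else is a textbook application of SOS positivity and Cauchy–Schwarz. It is worth noting that the identity $\alpha M - M^{2} \succeq 0$ already implies the operator bound $M \preceq \alpha I$ for true states, so the content of the lemma is precisely that this bound lifts to the degree-$6$ SOS relaxation via the short SOS proof above.
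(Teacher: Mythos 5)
Your proof is correct and follows essentially the same route as the paper: you derive the same identity $M^2 = \alpha M - \tfrac{2(d-1)}{d^2}\sum P$ from the anticommutator formula of \cref{lemma_anticom_of_epr_give_sum_and_proj}, and you use degree-six SOS positivity on the degree-$3$ projectors in exactly the same way. Your Cauchy--Schwarz step with $A=I$, $B=M$ is just the optimized-over-$C$ form of the paper's $\Tr\left(\tilde\rho\,(CI-M)^2\right)\ge 0$ argument with $C=\alpha$, so the two proofs are interchangeable.
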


\begin{proof}
    It suffices to show that this holds for the star graph on \(n\) vertices with the root vertex labeled by 1. This is because partial traces of pseudo-density matrices are themselves pseudo-density matrices of the same degree. In particular, let \(H = \sum_{a = 2}^{n} \EPR^{1a}\), we then will show that that \(\Tr(H \tilde{\rho}) \leq \frac{n+d-2}{d}\) for any \(\Tilde{\rho} \in \Tilde{\calD}^{(6)}((\C^d)^{\otimes n})\). Note, \(H\) is defined to be hermitian (i.e., self-adjoint) and has degree \(\omega(H) = 2\).

    Letting \(M = CI - H\) with \(C\) to be determined later, we use \(\Tr\left(\Tilde{\rho} M^2\right) = \Tr\left(\Tilde{\rho} M^\dagger M\right) \geq 0\) to get an upper bound on \(\Tr\left(\Tilde{\rho} H\right)\). We start by considering \(H^2\), which gives the following:
    \begin{align}
        H^2 &= \sum_{a=2}^n \EPR^{1a} + \sum_{2 \leq a < b \leq n} \{\EPR^{1a}, \EPR^{1b}\} \\
        &= H + \frac{1}{d} \sum_{2 \leq a < b \leq n} \left(\EPR^{1a} + \EPR^{1b} - \frac{2(d-1)}{d} P^{1ab}\right) & \text{(\cref{lemma_anticom_of_epr_give_sum_and_proj})}\\
        &= H + \frac{1}{d} (n-2) H - \frac{2(d-1)}{d^2}\sum_{2 \leq a < b \leq n} P^{1ab}\\
        &= \left(\frac{n+d-2}{d}\right) H - \frac{2(d-1)}{d^2}\sum_{2 \leq a < b \leq n} P^{1ab}
    \end{align}
    Putting it all together:
    \begin{align}
        0 &\leq \Tr\left(\Tilde{\rho} M^2\right) \\
        &= C^2 - 2C \Tr\left(\Tilde{\rho} H \right) +  \left(\frac{n+d-2}{d}\right) \Tr\left(\Tilde{\rho}H\right) - \frac{2(d-1)}{d^2}\sum_{2 \leq a < b \leq n} \Tr\left(\Tilde{\rho} \left(P^{1ab}\right)^2\right) \\
        &\leq C^2 - 2C \Tr\left(\Tilde{\rho} H \right) +  \left(\frac{n+d-2}{d}\right) \Tr\left(\Tilde{\rho}H\right) \label{eq_sos_proof_positivity_prop_1}\\
        \Rightarrow \Tr\left(\Tilde{\rho}H\right) &\leq \frac{C^2}{2C - \left(\frac{n+d-2}{d}\right)}
    \end{align}
    Here, at \cref{eq_sos_proof_positivity_prop_1}, we use the fact that \(\omega(P^{1ab}) \leq 3\) and thus \(\Tr\left(\Tilde{\rho} \left(P^{1ab}\right)^2\right) \geq 0\) by the positivity constraint of degree-six pseudo-density matrices (\cref{def_pdm}, \cref{def_item_pdm_positivity}). Finally, setting \(C = \frac{n+d-2}{d}\) gives that \(\Tr\left(\Tilde{\rho}H\right) \leq \frac{n+d-2}{d}\), completing the proof.
\end{proof}

\begin{remark}
    The degree-six assumption comes from the fact that the projector, \(P\), in \cref{lemma_anticom_of_epr_give_sum_and_proj} acts non-trivially on at most three qudits, which gives an upper bound for its degree. If it is shown that \(P\) is actually degree-two, which is the case when \(d=2\), then this certificate requires only a degree-four assumption.
\end{remark}

We then get the immediate corollary of \cref{lemma_EPR_d_sos_star_bound} using \cref{lemma_equiv_of_QUG_EPR}. It can be easily verified that conjugating a pseudo-density matrix by local unitaries, as was done in \cref{eq_conj_by_loc_units_to_get_eqiv_state}, is still a pseudo-density matrix of the same degree and thus \cref{lemma_equiv_of_QUG_EPR} applies equally for pseudo-density matrices. This same reasoning was used in \cite{parekh_optimal_2022} to prove a similar result for the \(d=2\) case specifically, through the well-known \algprobm{Quantum Max-Cut} star bound for degree-4 pseudo-density matrices \cite{parekh_application_2021}.

\begin{proposition}[The SOS Star Bound]\label{prop_QUG_SOS_star_bound}
    Let \(\Tilde{\rho} \in \Tilde{\calD}^{(6)}((\C^d)^{\otimes n})\) be a degree-six pseudo-density matrix. For any sequence of unitaries, \((U_{ab} \in \SU(d)\ |\ 1 \leq a < b \leq n)\), let \((h_{ab})_{ab}\), for 
    \begin{align}
        h_{ab} \coloneq \left((I \otimes U_{ab})\ket{\EPR}\bra{\EPR}(I \otimes U^\dagger_{ab})\right) \in \calL((\C^d)^{\otimes 2})\text{,}
    \end{align}
    denote the corresponding sequence of projectors with \(h_{ab}^{ab} = h_{ab}^{ab} \otimes I^{[n] \setminus \{a,b\}} \in \calL((\C^d)^{\otimes n})\) denoting the projector applied to the \(ab\) subsystem. Then for any \(a \in [n]\) and any subset \(S \subseteq [n] \setminus \{a\}\), we have that
    \begin{equation}
        \Tr\left(\tilde{\rho} \sum_{b \in S} h_{ab}^{ab}\right) \leq \frac{|S|+d-1}{d} = \frac{|S|}{d} + \frac{d-1}{d}
    \end{equation}
\end{proposition}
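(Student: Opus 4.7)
The plan is to reduce the general maximally-entangled-state star bound to the EPR version proved in \cref{lemma_EPR_d_sos_star_bound}, using the tree equivalence of \cref{lemma_equiv_of_QUG_EPR}. Two observations make this work: partial traces preserve the pseudo-density matrix property at the same degree, so I can restrict $\tilde\rho$ to the $|S|+1$ qudits in $\{a\}\cup S$ without loss; and the star on $\{a\}\cup S$ is a tree, so the construction used to prove \cref{lemma_equiv_of_QUG_EPR} supplies local unitaries $(U_v)_{v\in\{a\}\cup S}$ with $U_a = I$ and $U_b = U_{ab}$ such that conjugating $\tilde\rho$ by $\bigotimes_v U_v^v$ yields a state $\tilde\rho'$ whose expectation on each $\EPR^{ab}$ agrees with that of $\tilde\rho$ on $h_{ab}^{ab}$.

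The steps in order would be: (i) pass to the reduction of $\tilde\rho$ onto $\{a\}\cup S$, inheriting the degree-six pseudo-density matrix structure; (ii) define $\tilde\rho' \coloneq \bigl(\bigotimes_v (U_v^\dagger)^v\bigr)\,\tilde\rho\,\bigl(\bigotimes_v U_v^v\bigr)$ via the tree construction; (iii) verify that $\tilde\rho' \in \Tilde{\calD}^{(6)}((\C^d)^{\otimes(|S|+1)})$; (iv) apply \cref{lemma_EPR_d_sos_star_bound} to $\tilde\rho'$ with root $a$ and leaf set $S$, obtaining $\Tr\bigl(\tilde\rho'\sum_{b\in S}\EPR^{ab}\bigr) \leq (|S|+d-1)/d$, and translate back via the edge-wise equality from (ii).

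The only step with any content is (iii); the authors flag it as ``easily verified.'' Self-adjointness and $\Tr(\tilde\rho') = 1$ are immediate, so what must be checked is that the degree-six positivity condition survives conjugation by a tensor product of single-qudit unitaries. Fixing $\calB_d$ to consist of the identity together with traceless Hermitian matrices (e.g., a generalized Gell-Mann basis), each map $X \mapsto U_v X U_v^\dagger$ preserves the subspace of traceless operators, so for every $T \in \calB_d^n$ the operator $U T U^\dagger$ with $U = \bigotimes_v U_v^v$ decomposes as a linear combination of basis elements sharing the same non-identity factors. Hence $\omega(U A U^\dagger) = \omega(A)$ for every $A$, and for any $A$ with $\omega(A)\leq 3$ one has
\[
    \Tr\bigl(\tilde\rho' A^\dagger A\bigr) = \Tr\bigl(\tilde\rho\,(UAU^\dagger)^\dagger(UAU^\dagger)\bigr) \geq 0,
\]
using the degree-six positivity of $\tilde\rho$. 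This is bookkeeping rather than a real obstacle, after which step (iv) immediately gives the claimed bound $\Tr\bigl(\tilde\rho \sum_{b\in S} h_{ab}^{ab}\bigr) \leq \frac{|S|}{d} + \frac{d-1}{d}$.
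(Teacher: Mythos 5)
Your proposal is correct and takes essentially the same route as the paper: reduce to the EPR star bound of \cref{lemma_EPR_d_sos_star_bound} by conjugating with the local unitaries supplied by the tree construction of \cref{lemma_equiv_of_QUG_EPR}, after checking that such conjugation preserves the degree-six pseudo-density-matrix property. Your explicit verification of that last point (degree preservation via a traceless Hermitian basis) fills in a step the paper only asserts is ``easily verified.''
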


We note that this bound is optimal in the following sense.

\begin{theorem}\label{thm_sos_epr_star_bound_is_opt}
    The maximum energy of the \algprobm{Maximal Entanglement} problem (and in particular the \(\EPR\) problem) on an unweighted star graph over \(n\) vertices is \(\frac{n+d-2}{d}\). Moreover, the degree-six SOS SDP is optimal for this graph. 
\end{theorem}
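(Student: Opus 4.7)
The plan is to pair the upper bound, which is essentially already available, with an explicit construction that saturates it. The upper bound $\lambda_{\max}(H) \le (n+d-2)/d$ is immediate from \cref{prop_QUG_SOS_star_bound} applied with $S = \{2,\dots,n\}$, since every true density matrix is in particular a degree-six pseudo-density matrix. The main content is therefore exhibiting a pure state realizing this value, after which the SDP optimality assertion will follow for free.

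For achievability I would first invoke \cref{lemma_equiv_of_QUG_EPR}: a star is a tree, so it suffices to construct a pure state on $n$ qudits with $\bra{\psi}\sum_{a=2}^n \EPR^{1a}\ket{\psi} = (n+d-2)/d$ for the EPR Hamiltonian centered at vertex~$1$. A leaf-symmetric ansatz suggests itself from the equality conditions in the proof of \cref{lemma_EPR_d_sos_star_bound} (the witness must be a top eigenvector of $H$ and simultaneously lie in $\ker(P^{1ab})$ for every pair of leaves $a,b$): for each $k \in [d]$, try
\[
\ket{\psi_k} \;=\; \alpha\,\ket{k}^{\otimes n} \;+\; \beta \sum_{i\neq k}\sum_{a=2}^n \ket{v_{k,i,a}},
\]
where $\ket{v_{k,i,a}}$ denotes the computational basis vector with entry $i$ on qudits $1$ and $a$ and entry $k$ on the other $n-2$ qudits. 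These summands are pairwise orthogonal, so normalization reduces to $\alpha^2 + (d-1)(n-1)\beta^2 = 1$. Expanding $\EPR^{1a}\ket{\psi_k}$ using $\EPR\ket{ii} = d^{-1}\sum_j\ket{jj}$ and taking the inner product yields $\bra{\psi_k}\EPR^{1a}\ket{\psi_k} = (\alpha+(d-1)\beta)^2/d$ for every leaf $a$. A Lagrange-multiplier optimization over the normalization constraint forces $\alpha = (n-1)\beta$, giving $\alpha = \sqrt{(n-1)/(n+d-2)}$ and $\beta = 1/\sqrt{(n-1)(n+d-2)}$, and total energy exactly $(n+d-2)/d$, matching the upper bound. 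The SDP optimality part then requires no extra work: the degree-six SoS value is sandwiched between the ground-state energy (by relaxation) and $(n+d-2)/d$ (by \cref{prop_QUG_SOS_star_bound}), and these now coincide.

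The main obstacle is \emph{guessing} the ansatz in general $d$: the optimum is neither a product state nor a single EPR pair nor a symmetric GHZ/W state, and no obvious symmetry reduction dictates its form; the guess above is what the $d=2$, $n=3$ top eigenvector $\sqrt{2/3}\,\ket{000} + \sqrt{1/6}(\ket{110}+\ket{101})$ of $\EPR^{12}+\EPR^{13}$ suggests once one inspects it. Once the ansatz is fixed, both the energy expansion and the two-variable optimization are routine. A conceptually cleaner alternative would be to establish achievability without an explicit formula by showing that the top eigenspace of $H$ intersects $\bigcap_{2 \le a<b \le n}\ker(P^{1ab})$ nontrivially, but this requires a more concrete handle on the projectors $P^{1ab}$ than the abstract existence provided by \cref{lemma_anticom_of_epr_give_sum_and_proj}.
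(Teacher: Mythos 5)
Your proposal is correct and follows essentially the same route as the paper: the upper bound comes from \cref{prop_QUG_SOS_star_bound} (true states being degree-six pseudo-density matrices), achievability is reduced to the EPR star via \cref{lemma_equiv_of_QUG_EPR}, and your witness $\ket{\psi_k}$ with $\alpha=\sqrt{(n-1)/(n+d-2)}$, $\beta=1/\sqrt{(n-1)(n+d-2)}$ is exactly the paper's state (take $k=d$; it equals $\sqrt{d/((n-1)(n+d-2))}\sum_{a}\ket{\EPR}^{1a}\otimes\ket{d}^{\otimes(n-2)}$). Your energy computation and the concluding sandwich argument for SDP optimality are both correct; you in fact verify the energy more explicitly than the paper does.
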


\begin{proof}[Proof (Sketch).]
    By \cref{lemma_equiv_of_QUG_EPR}, it suffices to find a witness state with energy \(\frac{n+d-2}{d}\) on the EPR problem over the star graph with \(n\) vertices/qudits and the root node being labeled by 1. One such state is

    \begin{align}
        \ket{\psi} &= \frac{1}{\sqrt{(n-1)(n+d-2)}} \sum_{k=1}^{d-1} \sum_{a = 2}^n \left(\ket{k}^1 \otimes \ket{k}^a \otimes \bigotimes_{b \in [n] \setminus \{1,a\}} \ket{d}^b\right) + \sqrt{\frac{n-1}{n+d-2}} \left(\bigotimes_{a \in [n]} \ket{d}^a\right) \\
        &= \sqrt{\frac{d}{(n-1)(n+d-2)}}\sum_{a = 2}^n \ket{\EPR}^{1a}\otimes \bigotimes_{b \in [n] \setminus \{1,a\}} \ket{d}^b\label{eq_sdncdinekjnasckljec}
    \end{align}
    where the superscripts give the ordering of the qudits. The optimality of the SOS SDP then follows by \cref{prop_QUG_SOS_star_bound}. Note, the choice of $\ket{d}$ above is arbitrary; any consistent single qudit state would work.
\end{proof}

\begin{figure}[t!]
    \centering
    \scalebox{0.6}{\begin{tikzpicture}[scale=2]
    \begin{scope}[local bounding box=graph]
        \node[circle, draw] (1) at (0,0) {$1$};
        \node[circle, draw] (2) at (0,1) {$2$};
        \node[circle, draw, fill=green20] (3) at (0.866,-0.5) {$3$};
        \node[circle, draw, fill=green20] (4) at (-0.866,-0.5) {$4$};
    
        \node[circle, draw] (1b) at (3,0) {$1$};
        \node[circle, draw, fill=green20] (2b) at (3,1) {$2$};
        \node[circle, draw] (3b) at (3.866,-0.5) {$3$};
        \node[circle, draw, fill=green20] (4b) at (3-0.866,-0.5) {$4$};
    
        \node[circle, draw] (1c) at (6,0) {$1$};
        \node[circle, draw, fill=green20] (2c) at (6,1) {$2$};
        \node[circle, draw, fill=green20] (3c) at (6.866,-0.5) {$3$};
        \node[circle, draw] (4c) at (6-0.866,-0.5) {$4$};
    
        \draw (3) -- (1) -- (2) node[midway,above=7,sloped,text=orange!70!black] {\small $\ket{\operatorname{EPR}}$} (1) -- (4);
        \draw (2b) -- (1b) -- (3b) node[midway,above=7,sloped,text=orange!70!black] {\small $\ket{\operatorname{EPR}}$} (1b) -- (4b);
        \draw (2c) -- (1c) -- (3c) (4c) -- (1c) node[midway,above=7,sloped,text=orange!70!black] {\small $\ket{\operatorname{EPR}}$};
    
    
        \node[text=green!50!black] at ($(3) + (0,0.32)$) {$\ket{d}$};
        \node[text=green!50!black] at ($(4) + (0,0.32)$) {$\ket{d}$};
    
        \node[text=green!50!black] at ($(2b) + (0.32,0)$) {$\ket{d}$};
        \node[text=green!50!black] at ($(4b) + (0,0.32)$) {$\ket{d}$};
    
        \node[text=green!50!black] at ($(2c) + (0.32,0)$) {$\ket{d}$};
        \node[text=green!50!black] at ($(3c) + (0,0.32)$) {$\ket{d}$};
    
        \node[text=black,scale=2] at (1.5,0.1853) {$\bm{+}$};
        \node[text=black,scale=2] at (4.5,0.1853) {$\bm{+}$};
    
        \begin{scope}[on background layer]
            \fill[orange30, rounded corners=9] 
            ($(1) + (-0.1575,-0.1575)$) -- ($(1) + (0.1575,-0.1575)$) -- 
            ($(2) + (0.1575,0.1575)$) -- ($(2) + (-0.1575,0.1575)$) -- cycle;
            \fill[orange30, rounded corners=9] 
            ($(1b) + (0.1525*-0.866,0.1525*0.5)+(0.1525*0.5,0.1525*0.866)$) -- 
            ($(1b) + (0.1525*-0.866,0.1525*0.5)+(-0.1525*0.5,-0.1525*0.866)$) -- 
            ($(3b) + (-0.1525*-0.866,-0.1525*0.5)+(-0.1525*0.5,-0.1525*0.866)$) -- 
            ($(3b) + (-0.1525*-0.866,-0.1525*0.5)+(0.1525*0.5,0.1525*0.866)$) -- cycle;
            \fill[orange30, rounded corners=9] 
            ($(1c) + (0.1525*0.866,0.1525*0.5)+(0.1525*-0.5,0.1525*0.866)$) -- 
            ($(1c) + (0.1525*0.866,0.1525*0.5)+(-0.1525*-0.5,-0.1525*0.866)$) -- 
            ($(4c) + (-0.1525*0.866,-0.1525*0.5)+(-0.1525*-0.5,-0.1525*0.866)$) -- 
            ($(4c) + (-0.1525*0.866,-0.1525*0.5)+(0.1525*-0.5,0.1525*0.866)$) -- cycle;
        \end{scope}
    \end{scope}

    \node[anchor=east, scale=2] at ($(graph.west)+(-0.3,0)$)
    {$\displaystyle\lvert\psi\rangle \propto$};
\end{tikzpicture}}
    \caption{A visual representation of the state in \cref{eq_sdncdinekjnasckljec} for the star graph on \(4\) vertices.%
    }
    \label{fig_state}
\end{figure}

We then move on to the triangle bound, for which we cannot use \cref{lemma_equiv_of_QUG_EPR}. We now use \cref{cor_anticom_of_UGC_give_sum_and_proj} directly. Note that we could have proven \cref{prop_QUG_SOS_star_bound} using this corollary, too.

\begin{proposition}[The SOS Triangle Bound]\label{prop_QUG_SOS_triangle_bound}
    In the situation of \cref{prop_QUG_SOS_star_bound}, take any three distinct indices \(a,b,c \in V\), then
    \begin{equation}
        \Tr\left(\tilde{\rho} \left(h_{ab}^{ab} + h_{bc}^{bc} + h_{ac}^{ac}\right)\right) \leq \frac{d+2}{d} = \frac{3}{d} + \frac{d-1}{d}
    \end{equation}
\end{proposition}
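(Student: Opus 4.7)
The plan is to mirror the sum-of-squares proof of \cref{lemma_EPR_d_sos_star_bound}, now applied to $H = h_{ab}^{ab} + h_{bc}^{bc} + h_{ac}^{ac}$. The key structural feature of the triangle graph is that each of the three pairs of edges shares exactly one vertex, which is precisely the setting where \cref{cor_anticom_of_UGC_give_sum_and_proj} applies. As in the star-bound proof, the strategy is to compute $H^2$ in closed form, set $M = CI - H$, and exploit $\Tr(\tilde\rho M^2) \geq 0$ together with the positivity constraints of a degree-six pseudo-density matrix (\cref{def_pdm}, \cref{def_item_pdm_positivity}).

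First I would expand
\[
H^2 = \sum_{e \in \{ab,bc,ac\}} (h_e^e)^2 + \sum_{e \neq e'} \{h_e^e, h_{e'}^{e'}\} = H + \sum_{e < e'} \{h_e^e, h_{e'}^{e'}\},
\]
using that the $h_e$ are projectors. For each of the three pairs $(ab,bc), (ab,ac), (bc,ac)$, \cref{cor_anticom_of_UGC_give_sum_and_proj} yields a projector $P_i \in \calL((\C^d)^{\otimes 3})$ acting non-trivially on only the three vertices $\{a,b,c\}$, with
\[
\{h_e^e, h_{e'}^{e'}\} = \tfrac{1}{d}(h_e^e + h_{e'}^{e'}) - \tfrac{2(d-1)}{d^2} P_i^{abc}.
\]
Summing the three anti-commutators, each $h_e^e$ appears in exactly two of the pairs, so the linear part collects to $\frac{2}{d} H$, yielding
\[
H^2 = \tfrac{d+2}{d} H - \tfrac{2(d-1)}{d^2}\bigl(P_1^{abc} + P_2^{abc} + P_3^{abc}\bigr).
\]

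Next I would set $M = CI - H$ and compute $\Tr(\tilde\rho M^2) = C^2 - 2C\Tr(\tilde\rho H) + \Tr(\tilde\rho H^2)$. Substituting the expression for $H^2$ and invoking $\omega(P_i^{abc}) \leq 3$, the positivity axiom for degree-six pseudo-density matrices gives $\Tr(\tilde\rho (P_i^{abc})^2) \geq 0$, so the $P_i$ terms can be dropped to yield
\[
0 \leq \Tr(\tilde\rho M^2) \leq C^2 - 2C\Tr(\tilde\rho H) + \tfrac{d+2}{d}\Tr(\tilde\rho H).
\]
Rearranging gives $\Tr(\tilde\rho H) \leq C^2 / (2C - \tfrac{d+2}{d})$, and the optimal choice $C = \tfrac{d+2}{d}$ produces the stated bound $\Tr(\tilde\rho H) \leq \tfrac{d+2}{d}$.

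The only conceptual obstacle is verifying that \cref{cor_anticom_of_UGC_give_sum_and_proj} applies uniformly to all three pairs: each pair of triangle edges shares a single vertex, and the corollary only requires two projectors onto maximally entangled states on adjacent qudits, producing a projector on three qudits of degree at most $3$. This is exactly the degree budget that a degree-six pseudo-density matrix can certify via the positivity constraint, just as in the star bound; no extra argument beyond bookkeeping is needed.
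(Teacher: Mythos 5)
Your proposal is correct and follows essentially the same route as the paper: expand $H^2$ using \cref{cor_anticom_of_UGC_give_sum_and_proj} on each of the three edge pairs to get $H^2 = \frac{d+2}{d}H - \frac{2(d-1)}{d^2}\sum_{i=1}^3 P_i^{abc}$, then apply the $\Tr(\tilde\rho M^2) \geq 0$ argument with $M = CI - H$ and $C = \frac{d+2}{d}$, discarding the projector terms via degree-six positivity. Your explicit check that each pair of triangle edges shares exactly one vertex (so the corollary applies after relabeling) is a point the paper leaves implicit, but otherwise the two arguments coincide.
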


\begin{proof}
    Let \(H = h_{ab}^{ab} + h_{bc}^{bc} + h_{ac}^{ac}\) and \(\Tilde{\rho} \in \Tilde{\calD}^{(6)}((\C^d)^{\otimes n})\).
    Then, let \(M = CI - H\), for which we will again use that \(\Tr(\Tilde{\rho}M^2) \geq 0\). We start by considering \(H^2\), which gives the following.
    \begin{align}
        H^2 &= H + \{h_{ab}^{ab}, h_{bc}^{bc}\} + \{h_{ab}^{ab}, h_{ac}^{ac}\} + \{h_{bc}^{bc}, h_{ac}^{ac}\} \\
        &= H + \frac{2}{d} H - \frac{2(d-1)}{d^2} \sum_{i=1}^3 P^{abc}_i & \text{(\cref{cor_anticom_of_UGC_give_sum_and_proj})}
    \end{align}
    Putting it all together, we get the following.
    \begin{align}
        0 &\leq \Tr\left(\Tilde{\rho} M^2\right) \\
        &= C^2 - 2C \Tr\left(\Tilde{\rho} H \right) +  \left(\frac{d+2}{d}\right) \Tr\left(\Tilde{\rho}H\right) - \frac{2(d-1)}{d^2} \sum_{i=1}^3 \Tr\left(\tilde{\rho} P^{abc}_i\right) \\
        &\leq C^2 - 2C \Tr\left(\Tilde{\rho} H \right) +  \left(\frac{d+2}{d}\right) \Tr\left(\Tilde{\rho}H\right) & \text{(SOS Positivity)}\\
        \Rightarrow \Tr\left(\Tilde{\rho}H\right) &\leq \frac{C^2}{2C - \left(\frac{d+2}{d}\right)}
    \end{align}
    Setting \(C = \frac{d+2}{d}\) then gives that \(\Tr\left(\Tilde{\rho}H\right) \leq \frac{d+2}{d}\), completing the proof.
\end{proof}

\begin{remark}\label{remark_tri_bound_is_opt_EPR}
    While optimal for the EPR problem, if we restrict to the \algprobm{Quantum Max-Cut} Hamiltonian in the \(d=2\) case, for example, we can get a tighter bound of \(\frac{3}{2}\) as \(\sum_{i=1}^3 P^{abc}_i = H\), matching that of \cite{parekh_optimal_2022,wright_sosproofsemail_2023,lee_improved_2024}. We note, however, that our weaker bound is sufficient even for the analysis in \cite{lee_improved_2024}.
\end{remark}

\zacktodo{We can provide a lower bound matching this for the EPR problem. In particular it is the same idea as \cref{thm_sos_epr_star_bound_is_opt}.
\[C\left(\ket{\EPR}^{ab} \otimes \ket{d}^c + \ket{\EPR}^{ac} \otimes \ket{d}^b + \ket{\EPR}^{bc} \otimes \ket{d}^a\right)\]
where \(C\) is the value that makes it a unit vector (I think it should be \(\frac{d}{3(d+2)}\)).
I think this gives a very natural follow up question. Is the state the maximize the energy just the superposition over all matchings. 
}

\section{Analysis of The Matching-Based Algorithm}\label{section_matching_alg}

In this section, we analyze the performance of the matching-based algorithm originally proposed in \cite{lee_improved_2024} for \algprobm{Quantum Max-Cut}, which we adapt to the \algprobm{Maximal Entanglement} problem in \cref{algo_matching_algo_for_QUG}.

\begin{algorithm}[\algprobm{Maximal Entanglement} Problem Matching Algorithm]\label{algo_matching_algo_for_QUG} \emph{Input:} Graph, \(G=(V,E,w)\), and unitaries \((U_e \in \SU(d)\ |\ e \in E)\).
    \begin{enumerate}
        \item\label{step_1_algo_matching_algo_for_QUG} Find the maximum matching of \(G\),  denoted by \(m : E \ra \{0,1\}\) (e.g., using Blossom algorithm \cite{edmonds_maximum_1965}). 
        \item \emph{Output:} \(\displaystyle\rho\coloneq \!\!\!\!\!\! \bigotimes_{\substack{(a,b) \in E:\\ m((a,b)) = 1}}\!\!\!\!\left((I \otimes U_{ab})\ket{\EPR}\bra{\EPR}(I \otimes U^\dagger_{ab})\right)^{ab} \otimes \!\!\!\!\!\!\!\!\bigotimes_{\substack{c \in V:\\\forall d \in V : m((c,d))=0}} \!\!\!\!\!\!\!\! I^c/d\)
    \end{enumerate}
\end{algorithm}

The key insights needed for this analysis are from \cite{edmonds_maximum_1965,lee_improved_2024}. We state the needed lemma below, where \(\OPT_{\mathsc{Match}}(G) = \E_{e \sim E} m(e) = \frac{1}{W}\sum_{e \in E} m(e) w_e\) denotes the normalized/expected maximum matching of a graph, \(G\), and \(W \coloneq \sum_{e \in E} w_e\).

\begin{lemma}[\cite{edmonds_maximum_1965,lee_improved_2024}]\label{lemma_lee_sdp_matching_lemma}
    Given a weighted graph, \(G=(V,E,w)\), define \(N(a) = \{b \in V\ |\ (a,b) \in E\}\) and \(E(S) = \{(a,b) \in E\ |\ \{a,b\} \subseteq S\}\) for \(S \subseteq V\). If \((x_e)_{e \in E}\) is a sequence of scalars such that
    \begin{enumerate}
        \item \(\forall e \in E : x_e \geq 0\),
        \item \(\forall a \in V : \sum\limits_{b \in N(a)} x_{ab} \leq 1\), and
        \item\label{lemma_item_tri_bound_for_matching} \(\forall \{a,b,c\} \subseteq V : \sum\limits_{e \in E(\{a,b,c\})} x_{e} \leq 1\)
    \end{enumerate}
    then
    \begin{equation}
        \frac{4}{5}\E_{e \sim E} \left[x_e\right] = \frac{4}{5 W} \sum_{e \in E} w_e x_e \leq \OPT_{\mathsc{Match}}(G)\text{.}
    \end{equation}
\end{lemma}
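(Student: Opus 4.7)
My plan is to invoke Edmonds' characterization of the matching polytope: the convex hull of characteristic vectors of matchings in \(G\) is exactly the set of \(y \in \mathbb{R}^E\) satisfying \(y \geq 0\), \(\sum_{e \ni v} y_e \leq 1\) for every \(v \in V\), and \(\sum_{e \in E(S)} y_e \leq (|S|-1)/2\) for every odd subset \(S \subseteq V\). The strategy is to show that \(\tfrac{4}{5}x\) lies in this polytope. Once that is in hand, writing \(\tfrac{4}{5}x = \sum_M \lambda_M\, \chi_M\) as a convex combination of matchings gives \(\tfrac{4}{5}\sum_e w_e x_e = \sum_M \lambda_M\, w(M) \leq \max_M w(M) = W \cdot \OPT_{\mathsc{Match}}(G)\), which rearranges to the statement of the lemma.

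The nonnegativity \(\tfrac{4}{5}x \geq 0\) and the vertex-degree bound \(\sum_{e \ni v} \tfrac{4}{5}x_e \leq \tfrac{4}{5} \leq 1\) follow immediately from (1) and (2). The crux is verifying the odd-set inequality \(\sum_{e \in E(S)} x_e \leq \tfrac{5k}{4}\) for every odd set \(S\) with \(|S| = 2k+1\). For \(k=0\) the claim is vacuous since \(E(S) = \emptyset\). For \(k=1\) (i.e., \(|S|=3\)), hypothesis (3) gives \(\sum_{e \in E(S)} x_e \leq 1 \leq \tfrac{5}{4}\). For \(k \geq 2\), summing (2) over \(a \in S\) and using that each edge with both endpoints in \(S\) is counted twice yields
\[
2\sum_{e \in E(S)} x_e \;\leq\; \sum_{a \in S} \sum_{b \in N(a)} x_{ab} \;\leq\; |S| \;=\; 2k+1,
\]
so \(\sum_{e \in E(S)} x_e \leq k + \tfrac{1}{2} \leq \tfrac{5k}{4}\), the last inequality being equivalent to \(k \geq 2\).

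The main conceptual step is identifying this clean reduction to Edmonds' theorem; beyond that, the proof is a two-case analysis in \(|S|\), with the triangle hypothesis (3) handling exactly the case \(|S|=3\) that is left uncovered by the vertex-summing argument. The constant \(5/4\) is tight: on the pentagon with all \(x_e = 1/2\) and unit weights, the vertex bound \(\sum_{e \in E(V)} x_e = 5/2\) is met with equality and matches the odd-set constraint \((|V|-1)/2 = 2\) exactly after scaling by \(4/5\). I do not expect a significant obstacle beyond setting up the right polyhedral framing; the potential pitfall is being tempted to classify extreme points of the LP with constraints (1)--(3) directly (which is doable via a half-integrality argument on the fractional matching polytope with triangular supports excluded), but the Edmonds-polytope route above is strictly cleaner.
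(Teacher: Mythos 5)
Your proof is correct, and it is essentially the argument this lemma is imported from: the paper itself gives no proof (it cites Edmonds' matching-polytope theorem and Lee--Parekh), and the cited route is exactly yours --- scale by $4/5$, check the odd-set constraints by summing the degree constraints for $|S|\geq 5$, and use the triangle hypothesis for $|S|=3$. Your tightness observation on the pentagon is also the standard witness that the constant $4/5$ cannot be improved without additional odd-set certificates, which is precisely what the paper later exploits in its $K_5$ refinement.
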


Similar to \cite{lee_improved_2024}, we will show that any degree-six pseudo-density matrix over a \algprobm{Maximal Entanglement} instance admits such a sequence of scalars, which is the key step needed to prove the forthcoming \cref{thm_upper_bound_on_QUG_energy}. To this end, we make the following definitions.

\begin{definition}\label{def_xy_values_for_matching}
    Let \(\Tilde{\rho} \in \Tilde{\calD}^{(6)}((\C^d)^{\otimes n})\) be a degree-six pseudo-density matrix. For any sequence of unitaries, \((U_{ab} \in \SU(d)\ |\ 1 \leq a < b \leq n)\), let \((h_{ab})_{ab}\), for 
    \begin{align}
        h_{ab} = \left((I \otimes U_{ab})\ket{\EPR}\bra{\EPR}(I \otimes U^\dagger_{ab})\right) \in \calL((\C^d)^{\otimes 2})\text{,}
    \end{align}
    denote the corresponding sequence of projectors with \(h_{ab}^{ab} = h_{ab}^{ab} \otimes I^{[n] \setminus \{a,b\}} \in \calL((\C^d)^{\otimes n})\) denoting the projector applied to the \(ab\) subsystem. We additionally define the following scalars:
    \begin{equation}
        x_{ab} = \Tr(\Tilde{\rho} h_{ab}^{ab}),\ \ \ \ y_{ab} = x_{ab} - \frac{1}{d},\ \ \ \ y^+_{ab} = \max(0, y_{ab})
    \end{equation}
\end{definition}

Given a \algprobm{Maximal Entanglement} instance over a weighted graph, \(G=(V,E,w)\), we can define the full sequence of unitaries and scalars, as in \cref{def_xy_values_for_matching}, by letting \(U_{ab} = I\) when \((a,b) \notin E\). We note that the significance of this choice is entirely unimportant and is only done for simplicity of the expressions. With this notation in hand, we then give the following two lemmas.

\begin{lemma}\label{lemma_QUG_SOS_star_bound_as_fractional_matching}
    In the situation of \cref{def_xy_values_for_matching}, for any \(a \in [n]\) and \(S \subseteq [n] \setminus \{a\}\), we have that
    \begin{equation}
        \sum_{b \in S} y_{ab}^+ \leq \frac{d-1}{d}\text{.}
    \end{equation}
\end{lemma}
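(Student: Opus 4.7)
The goal is to upgrade the Star Bound of \cref{prop_QUG_SOS_star_bound}, which controls the sum $\sum_{b \in S} x_{ab}$, to a bound on the truncated sum $\sum_{b \in S} y_{ab}^+$. Observe that by definition $y_{ab}^+ = y_{ab}$ when $y_{ab} > 0$ and $y_{ab}^+ = 0$ otherwise, so only the ``above average'' neighbors contribute. The plan is therefore to restrict attention to the subset of $S$ on which $y_{ab} > 0$ and apply the Star Bound there.

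Concretely, define $S^+ \coloneq \{b \in S : y_{ab} > 0\} = \{b \in S : x_{ab} > 1/d\}$. Then
\begin{equation*}
    \sum_{b \in S} y_{ab}^+ \;=\; \sum_{b \in S^+} y_{ab} \;=\; \sum_{b \in S^+} x_{ab} \;-\; \frac{|S^+|}{d}.
\end{equation*}
Apply \cref{prop_QUG_SOS_star_bound} to the vertex $a$ and the neighbor subset $S^+ \subseteq [n] \setminus \{a\}$ (which is legal because the Star Bound holds for \emph{any} such subset, with the sequence of unitaries being the one coming from the instance, extended by identity off $E$ as remarked just before the lemma). This yields
\begin{equation*}
    \sum_{b \in S^+} x_{ab} \;\leq\; \frac{|S^+|}{d} + \frac{d-1}{d}.
\end{equation*}
Substituting back cancels the $|S^+|/d$ term and delivers the desired bound $\sum_{b \in S} y_{ab}^+ \leq (d-1)/d$.

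There is essentially no obstacle once the Star Bound is in hand: the argument is a one-line reduction that exploits the fact that the Star Bound scales linearly in $|S|$ with an $|S|$-independent slack of $(d-1)/d$, so truncating to $S^+$ exactly absorbs the linear contribution. The only thing to double-check is that applying the Star Bound to $S^+$ is permitted, which it is because the proposition is stated uniformly over arbitrary neighbor subsets and over any sequence of unitaries indexed by pairs (no dependence on the underlying graph of the instance).
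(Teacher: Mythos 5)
Your proposal is correct and is exactly the argument the paper intends when it says the lemma ``follows directly'' from \cref{prop_QUG_SOS_star_bound}: restrict to the subset where $y_{ab}>0$, apply the Star Bound there, and cancel the $|S^+|/d$ term. You have simply made explicit the details the paper leaves implicit.
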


\begin{proof}
    We make the observation that \(\sum_{b \in S} y_{ab}^+ = \sum_{b \in S : y_{ab} \geq 0} y_{ab}\). Then applying \cref{prop_QUG_SOS_star_bound} to the second summation, and in particular the subset \(\{b \in S : y_{ab} \geq 0\}\), proves the claim.
\end{proof}

\begin{lemma}\label{lemma_QUG_SOS_triangle_bound_as_fractional_matching_const_2}
    In the situation of \cref{def_xy_values_for_matching}, for any three distinct values \(\{a,b,c\} \subseteq [n]\), we have that
    \begin{equation}
        y_{ab}^+ + y_{bc}^+ + y_{ac}^+ \leq \frac{d-1}{d}\text{.}
    \end{equation}
\end{lemma}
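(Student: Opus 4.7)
The plan is to prove this triangle-type bound on $y^+$ values by splitting into cases on the signs of the raw $y$-values, using Proposition \ref{prop_QUG_SOS_triangle_bound} in the all-non-negative case and falling back on Lemma \ref{lemma_QUG_SOS_star_bound_as_fractional_matching} otherwise.

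First I would rewrite Proposition \ref{prop_QUG_SOS_triangle_bound} in terms of the $y$ variables: subtracting $3/d$ from both sides of $\Tr(\tilde\rho(h_{ab}^{ab}+h_{bc}^{bc}+h_{ac}^{ac})) \leq 3/d + (d-1)/d$ gives $y_{ab} + y_{bc} + y_{ac} \leq (d-1)/d$. If all three of $y_{ab},\ y_{bc},\ y_{ac}$ happen to be non-negative, then $y_e^+ = y_e$ for each of the three edges and the desired inequality follows immediately.

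Otherwise, at least one of the three is strictly negative; by symmetry of the statement in $\{a,b,c\}$, I may assume without loss of generality that $y_{ab} < 0$, so $y_{ab}^+ = 0$. The two surviving edges $ac$ and $bc$ share the vertex $c$, so applying Lemma \ref{lemma_QUG_SOS_star_bound_as_fractional_matching} with central vertex $c$ and $S = \{a,b\}$ yields $y_{ac}^+ + y_{bc}^+ \leq (d-1)/d$, which completes this case (and also handles subcases where more than one of the $y$'s are negative, since dropping additional clipped terms only helps).

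I do not expect a serious obstacle here: the nontrivial work was already done in proving the star and triangle SOS bounds of Section 3. The only subtlety is that Proposition \ref{prop_QUG_SOS_triangle_bound} directly controls $\sum y_e$ and not $\sum y_e^+$, which is exactly why the case split is needed — whenever clipping genuinely removes mass from one edge, the remaining two edges form a path through a common vertex and are controlled by the star bound. This parallels the case analysis used in \cite{lee_improved_2024} for the $d=2$ setting.
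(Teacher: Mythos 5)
Your proof is correct and follows essentially the same route as the paper's: the all-nonnegative case is handled by the triangle bound of \cref{prop_QUG_SOS_triangle_bound} (where $y^+_e = y_e$), and any case with a clipped edge reduces to the star bound of \cref{lemma_QUG_SOS_star_bound_as_fractional_matching} applied at the shared vertex of the two remaining edges. The paper organizes this as three cases on the number of strictly positive $y^+$ values, but the content is identical.
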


\begin{proof}
    We split this into three cases depending on the number of non-zero \(y^+_{uv} > 0\) values. In particular, let \(t = |\{(u,v) \in \{(a,b),(b,c),(a,c)\}\ |\ y^+_{uv} > 0\}|\). First, for \(t=1\), we have that, \(-\frac{1}{d} \leq y_{uv} \leq \frac{d-1}{d}\) by the fact that the edge Hamiltonian is a projector. Otherwise, if \(t=2\), this follows directly by \cref{lemma_QUG_SOS_star_bound_as_fractional_matching}. Finally, for the \(t=3\) case, we use \cref{prop_QUG_SOS_triangle_bound} which tells us that
    
    \begin{equation}
        y^+_{ab} + y^+_{bc} + y^+_{ac} = y_{ab} + y_{bc} + y_{ac} = x_{ab} + x_{bc} + x_{ac} - \frac{3}{d} \leq \frac{d+2}{d} - \frac{3}{d} = \frac{d-1}{d}\text{.}
    \end{equation}
\end{proof}

Thus, by \cref{lemma_lee_sdp_matching_lemma,lemma_QUG_SOS_star_bound_as_fractional_matching,lemma_QUG_SOS_triangle_bound_as_fractional_matching_const_2} we have that for any \algprobm{Maximal Entanglement} instance and any degree-six pseudo-density matrix, the scalars defined in \cref{def_xy_values_for_matching} (in particular, \((\frac{d}{d-1}y^+_e)_e\)) satisfy the premise in \cref{lemma_lee_sdp_matching_lemma}, and thus we have that \(\frac{4d}{5(d-1)}\E_{e \sim E} \left[y^+_e\right] \leq \OPT_{\mathsc{Match}}(G)\).

\begin{theorem}\label{thm_upper_bound_on_QUG_energy}
    Let \(\langle G = (V,E,w), (U_e)_e \rangle\) be an instance of the \algprobm{Maximal Entanglement} problem with normalized Hamiltonian \(H\). Then for any degree-six pseudo-density matrix \(\Tilde{\rho} \in \Tilde{\calD}^{(6)}((\C^d)^{\otimes n})\), we have that
    \begin{equation}
        \Tr(H \Tilde{\rho}) \leq \frac{1}{d} + \frac{5(d-1)}{4d}\OPT_{\mathsc{Match}}(G)\text{.}
    \end{equation}
\end{theorem}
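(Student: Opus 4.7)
The proof will consist of stringing together the definitions, the star/triangle bounds expressed as fractional-matching feasibility, and the matching LP inequality of Lee--Parekh. In fact nearly all the heavy lifting has already been done in the lemmas immediately preceding the theorem; what remains is a short chain of inequalities.

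The plan is to begin by expanding the normalized Hamiltonian: since $H = \E_{e \sim E}[h_e^e \otimes I^{[n]\setminus e}]$, linearity of the trace gives $\Tr(H\tilde\rho) = \E_{e \sim E}[x_e]$ with $x_e = \Tr(\tilde\rho\, h_e^e)$ as in \cref{def_xy_values_for_matching}. Next I would pass from $x_e$ to $y_e^+$ by the trivial bound $x_e = \tfrac{1}{d} + y_e \leq \tfrac{1}{d} + y_e^+$, so that
\begin{equation*}
    \Tr(H\tilde\rho) \;\leq\; \frac{1}{d} + \E_{e \sim E}\bigl[y_e^+\bigr].
\end{equation*}

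Then I would invoke the scalars $x'_e \coloneq \tfrac{d}{d-1}y_e^+$. These are nonnegative by construction, and \cref{lemma_QUG_SOS_star_bound_as_fractional_matching} gives $\sum_{b \in N(a)} x'_{ab} \leq 1$ for every vertex $a$ while \cref{lemma_QUG_SOS_triangle_bound_as_fractional_matching_const_2} gives $\sum_{e \in E(\{a,b,c\})} x'_e \leq 1$ for every triple. Thus $(x'_e)_{e \in E}$ satisfies the three hypotheses of \cref{lemma_lee_sdp_matching_lemma}, and I would apply it to conclude
\begin{equation*}
    \frac{4}{5}\,\E_{e \sim E}\!\left[\frac{d}{d-1}\,y_e^+\right] \;\leq\; \OPT_{\mathsc{Match}}(G),
\end{equation*}
which rearranges to $\E_{e \sim E}[y_e^+] \leq \tfrac{5(d-1)}{4d}\,\OPT_{\mathsc{Match}}(G)$. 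Substituting into the earlier display yields the theorem.

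There is no real obstacle: the triangle-bound-as-matching feasibility (step needing \cref{prop_QUG_SOS_triangle_bound} and the case analysis in \cref{lemma_QUG_SOS_triangle_bound_as_fractional_matching_const_2}) is the only nontrivial ingredient, and it has already been established. The one subtlety worth verifying while writing is that padding unitaries with $U_{ab} = I$ for non-edges in $G$, as suggested in the paragraph preceding the theorem, does not change $\Tr(H\tilde\rho)$ (since the Hamiltonian only sums over $e \in E$) yet allows $(y_e^+)_{e \in E}$ to be extended to a feasible collection on the complete graph for \cref{lemma_lee_sdp_matching_lemma}; restricting back to $E$ afterwards preserves the averaged inequality because non-edges carry weight $w_e = 0$ in the normalized expectation.
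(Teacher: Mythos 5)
Your proposal is correct and follows essentially the same route as the paper: expand $\Tr(H\tilde\rho)=\E_{e\sim E}[y_e+\tfrac1d]$, bound $y_e\le y_e^+$, and apply \cref{lemma_lee_sdp_matching_lemma} to the scalars $\tfrac{d}{d-1}y_e^+$ (whose feasibility is exactly \cref{lemma_QUG_SOS_star_bound_as_fractional_matching,lemma_QUG_SOS_triangle_bound_as_fractional_matching_const_2}) to get $\E_{e\sim E}[y_e^+]\le\tfrac{5(d-1)}{4d}\OPT_{\mathsc{Match}}(G)$. The remark about padding non-edges with identity unitaries matches the paper's setup and is harmless.
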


\begin{proof}
    As mentioned above, this follows from \cref{lemma_lee_sdp_matching_lemma,lemma_QUG_SOS_star_bound_as_fractional_matching,lemma_QUG_SOS_triangle_bound_as_fractional_matching_const_2}.
    \begin{align}
        \Tr(H \Tilde{\rho})
        &= \E_{(a,b) \sim E} \left[\frac{1}{d} + y_{ab}\right] \\
        &\leq \frac{1}{d} + \E_{(a,b) \sim E} \left[y^+_{ab}\right] \label{eq_max_with_zero_greater_than_y} \\
        &\leq \frac{1}{d} + \frac{5(d-1)}{4d}\OPT_{\mathsc{Match}}(G) 
    \end{align}
    Note, to get \cref{eq_max_with_zero_greater_than_y}, we use the fact that \(y_e \leq y^+_e\).
\end{proof}


In particular, over \(D\)-regular graphs we can use the fact that \(\OPT_{\mathsc{Match}}(G) \leq O(\frac{1}{D})\), which gives the bound \(\Tr(H \Tilde{\rho}) \leq \frac{1}{d} + O(\frac{1}{D})\). To explore this further, we parameterize the approximation ratio in terms of the expected maximum matching of the interaction graph.

We next consider bounding the approximation ratio of \(\cref{algo_matching_algo_for_QUG}\). It will be useful to bound this value over different classes of graphs such as graphs with fixed maximum matching value, \(M\), and \(D\)-regular graphs. In both cases we implicitly assume that there are infinite families of such graphs in order for the asymptotics to be well-defined.
We let \(\alpha_d\) denote the approximation ratio over all graphs and local-dimension, \(d\), of the Hamiltonian, we let \(\alpha_d(M)\) denote the approximation ratio over all graphs with maximum matching \(\OPT_{\mathsc{Match}}(G) = M\), and lastly, we let \(\alpha_d(D)\) denote the approximation ratio over all (unweighted) \(D\)-regular graphs. We seek to lower bound these constants.

\begin{theorem}\label{thm_main_genral_graph}
    Let \(\langle G = (V,E,w), (U_e)_e \rangle\) be an instance of the \algprobm{Maximal Entanglement} and let \(M \in (0,1]\), then \cref{algo_matching_algo_for_QUG} has an approximation ratio of \(\alpha_d(M) \geq \frac{4}{5}\frac{\left(d^2-1\right) M+1}{d ((d-1) M+\frac{4}{5})} = \frac{1}{d} + \Omega_{M \to 0}(M)\) on the energy of the Hamiltonian over all graphs, \(G\), with \(\OPT_{\mathsc{Match}}(G) = M\).
\end{theorem}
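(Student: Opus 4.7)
The plan is to compute $\Tr(\rho H)$ for the output $\rho$ of \cref{algo_matching_algo_for_QUG} exactly in terms of $M = \OPT_{\mathsc{Match}}(G)$, divide by the upper bound on $\Tr(\rho_* H)$ given by \cref{thm_upper_bound_on_QUG_energy}, and then simplify.

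First I would evaluate $\Tr(\rho h_e^e)$ edge by edge. For an edge $e = (a,b)$ \emph{in} the matching, the two-qudit marginal $\rho^{ab}$ equals $h_{ab}$ by construction, so $\Tr(\rho h_{ab}^{ab}) = \Tr(h_{ab}^2) = 1$. For an edge $(a,b) \in E$ \emph{not} in the matching, the endpoints $a$ and $b$ belong to distinct tensor factors of $\rho$: each is either an unmatched vertex carrying $I/d$ or half of a maximally entangled pair with some third vertex. In either case the single-qudit marginal is $I/d$, because maximally entangled pure states have maximally mixed marginals (\cref{def_max_ent_state}). Hence $\rho^{ab} = I^a/d \otimes I^b/d = I^{ab}/d^2$, giving $\Tr(\rho h_{ab}^{ab}) = \Tr(h_{ab})/d^2 = 1/d^2$. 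This per-edge marginal computation is the technical core of the argument; the only subtle point is remarking that in a maximum matching no non-matching edge can have both endpoints unmatched, so every such edge falls into one of the cases above.

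Averaging with the edge weights $w_e/W$ then yields
\begin{equation*}
\Tr(\rho H) = M \cdot 1 + (1 - M) \cdot \frac{1}{d^2} = \frac{1 + (d^2 - 1) M}{d^2}.
\end{equation*}
Combining this with the upper bound $\Tr(\rho_* H) \leq \frac{1}{d} + \frac{5(d-1)}{4d} M$ from \cref{thm_upper_bound_on_QUG_energy} and clearing a common factor of $\frac{4d}{5}$ from numerator and denominator gives the claimed lower bound
\begin{equation*}
\alpha_d(M) \;\geq\; \frac{\Tr(\rho H)}{\Tr(\rho_* H)} \;\geq\; \frac{4}{5} \cdot \frac{(d^2-1) M + 1}{d\bigl((d-1) M + \tfrac{4}{5}\bigr)}.
\end{equation*}

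Finally, to justify the asymptotic $\alpha_d(M) = \frac{1}{d} + \Theta_{M \to 0}(M)$, I would Taylor-expand the right-hand side about $M = 0$. The zeroth-order term evaluates to $\tfrac{4/5}{4d/5} = \tfrac{1}{d}$, and a direct differentiation of the rational function yields a first-order coefficient of $\tfrac{(d-1)(4d-1)}{4d}$, which is strictly positive for all $d \geq 2$; this confirms the $\Theta(M)$ correction. I do not expect any significant obstacle; the only care required is in the reduced-density-matrix case analysis for non-matching edges, which uses maximality of the matching in an essential but minor way.
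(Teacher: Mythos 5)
Your proposal is correct and follows essentially the same route as the paper: lower-bound the algorithm's energy by $\frac{1}{d^2} + \frac{d^2-1}{d^2}M$ via the per-edge marginal computation, divide by the certificate $\Tr(\rho_* H) \leq \frac{1}{d} + \frac{5(d-1)}{4d}M$ from \cref{thm_upper_bound_on_QUG_energy}, and expand about $M=0$ to recover the slope $\frac{(d-1)(4d-1)}{4d}$ (the paper leaves the marginal computation implicit but does exactly this). One minor remark: your appeal to maximality of the matching is unnecessary, since even a non-matching edge with \emph{both} endpoints unmatched still has marginal $I^{ab}/d^2$ and hence contributes $1/d^2$.
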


\begin{proof}
    We let \(\rho_* \in \calD((\C^d)^{\otimes n})\) be the most excited state of the problem Hamiltonian of the ME instance, \(H\), and \(\rho \in \calD((\C^d)^{\otimes n})\) be the output solution of \cref{algo_matching_algo_for_QUG}. Let \(m: E \to \{0,1\}\) be the corresponding matching from \cref{step_1_algo_matching_algo_for_QUG} of \cref{algo_matching_algo_for_QUG}. We can observe that the energy of \(\rho\) is then given by the following:
    \begin{align}
        \Tr(\rho H) &= \E_{(a,b) \sim E}\left[\frac{1}{d^2} + \frac{d^2 - 1}{d^2}m((a,b))\right] \\
        &= \frac{1}{d^2} + \frac{d^2 - 1}{d^2}\OPT_{\mathsc{Match}}(G) \label{eq_ssdasdasdgjyd}
    \end{align}
    Indeed, when \(m((a,b)) = 1\) then the energy along the edge \((a,b)\) is \(\Tr(\rho h_{ab}^{ab}) =  1\) and when \(m((a,b)) = 0\) we observe that the reduced density matrix is nothing but the maximally mixed state, \(\rho^{ab} = \frac{1}{d^2}I\), and thus \(\Tr(\rho h_{ab}^{ab}) = \frac{1}{d^2}\).
    
    We then give a lower bound on the approximation ratio using \cref{thm_upper_bound_on_QUG_energy}. Recall that \(\alpha_d(M)\) is a constant, depending only on \(d\) and \(M\), such that
    \begin{equation}\label{eq_sjdcancsoiudfhsdf}
        \Tr(\rho H) \geq \alpha_d(M) \left(\frac{5(d-1)}{4d}\OPT_{\mathsc{Match}}(G) + \frac{1}{d}\right) \geq \alpha_d(M) \Tr(\rho_* H)
    \end{equation}
    for all \(G\) with \(\OPT_{\mathsc{Match}}(G) = M\). Solving for \(\alpha_d(M)\), we get the following, where the infimum is taken over all graphs with maximum matching, \(\OPT_{\mathsc{Match}}(G) = M\).
    
    \begin{align}
        \alpha_d(M) &= \inf_G\left(\frac{\Tr(\rho H)}{\Tr(\rho_* H)}\right) \\
        &\geq \inf_G\left(\frac{\frac{1}{d^2} + \frac{d^2-1}{d^2} \OPT_{\mathsc{Match}}(G)}{\frac{1}{d} + \frac{5(d-1)}{4d}\OPT_{\mathsc{Match}}(G)}\right) & \text{(Apply \cref{eq_ssdasdasdgjyd,eq_sjdcancsoiudfhsdf})} \\
        &= \frac{4}{5}\frac{\left(d^2-1\right) M+1}{d ((d-1) M+\frac{4}{5})} \label{eq_remove_inf_G} \\
        & = \frac{1}{d} + \Omega_{M\ra 0}(M) \label{eq_big_O_in_matching}
    \end{align}
    To get \cref{eq_remove_inf_G} we make the observation that no property of \(G\), other than \(\OPT_{\mathsc{Match}}(G) = M\), is used, and thus we can drop the infimum. Then, the asymptotics in \cref{eq_big_O_in_matching} follow by observing that the following limit is non-zero.
    \begin{equation}
        \limsup_{M \ra 0}\left(\left(\frac{4}{5}\frac{\left(d^2-1\right) M+1}{d ((d-1) M+\frac{4}{5})} - \frac{1}{d}\right)\bigg/M\right) = \frac{4d^{2}-5d+1}{4d} > 0
    \end{equation}
\end{proof}

\begin{remark}
    To better understand the relationship between the approximation ratio and the maximum matching we considered the asymptotics of \(\alpha_d(M)\) parameterized by \(M\) in \cref{thm_main_genral_graph}. Intuitively, we expect \cref{algo_matching_algo_for_QUG} to perform worse when the maximum matching is poor, hence why we consider the asymptotics as \(M \to 0\). When not parameterized by the maximum matching we get an approximation ratio of \(\alpha_d  \geq \frac{1}{d}\). This follows by observing that the infimum of \(\alpha_d(M)\) over all graphs reduces to taking the limit as \(M \ra 0\). Indeed, the expression \(\frac{4}{5}\frac{\left(d^2-1\right) M+1}{d \left(\left(d-1\right) M+\frac{4}{5}\right)}\) is minimized when \(M \to 0\).
    
    On the other hand, for any fixed constant \(M > 0\), taking the limit as \(d \ra \infty\) gives a bound on the approximation ratio being \(\lim_{d \to \infty}(\alpha_d(M)) \geq \frac{4}{5}\).
    We, however, expect \cref{algo_matching_algo_for_QUG} to be optimal over graphs with a maximum matching of \(M = 1\). This discrepancy hints that our analysis is likely not tight and better certificates can probably be found. Indeed, better certificates would improve on the \(\frac{4}{5}\) constant in \cref{lemma_lee_sdp_matching_lemma}. See \cref{section_the_qubit_case} for a more in-depth analysis on this matter in the qubit case.
\end{remark}

Lastly, we use \cref{thm_main_genral_graph} to consider the case of bounded degree graphs. We believe this to be an interesting case because as was shown in \cite{brandao_product-state_2016,parekh_optimal_2022} product state algorithms/ansatz work well in the high degree setting or bounded minimum degree setting. Therefore, understanding how well an algorithm with no global entanglement performs in the bounded degree setting can shed light on approximation algorithms as a whole.

\begin{theorem}\label{thm_main_bounded_min_max_deg_graph}
    For an instance of the \algprobm{Maximal Entanglement} problem with unweighted \(D\)-regular graph, \(G = (V,E)\), \cref{algo_matching_algo_for_QUG} has an approximation ratio of \(\alpha_d(D) \geq \frac{4}{5}\frac{\left(d^2 \left(D+2\right)+D \left(D+2\right)-2\right)}{d \left(D+3\right) \left(d+\frac{4}{5} D-1\right)} = \frac{1}{d} + \Omega(\frac{1}{D})\) on the energy of the Hamiltonian.
\end{theorem}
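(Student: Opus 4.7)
The plan is to specialize \cref{thm_main_genral_graph} to unweighted $D$-regular graphs by exploiting the combinatorial structure $|E| = nD/2$ to bound $M = \OPT_{\mathsc{Match}}(G)$ on both sides in terms of $D$.

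My first step would be to use the trivial upper bound $M \leq 1/D$, valid because $\nu(G) \leq n/2$ for any graph while $|E| = nD/2$ for $D$-regular $G$. Combined with \cref{thm_upper_bound_on_QUG_energy}, this yields the graph-independent opt-energy bound $\Tr(\rho_* H) \leq \frac{1}{d} + \frac{5(d-1)}{4dD} = \frac{4D+5d-5}{4dD}$. Next, I would write down the algorithm's output energy directly: since $\rho$ is a tensor product of maximally entangled edge-projectors (on matched edges) with maximally mixed single-qudit states (on unmatched vertices), a short computation using the fact that partial traces of maximally entangled states are maximally mixed gives $\Tr(\rho H) = M + (1 - M)/d^2 = \frac{(d^2-1)M + 1}{d^2}$, where $M$ is the actual normalized matching of $G$.

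Third, to obtain a uniform lower bound on the ratio I would use a matching-theoretic lower bound of the form $M \geq (D+2)/(D(D+3))$ for $D$-regular graphs, and divide the two energy bounds. Algebraically, $(d^2-1) \cdot \frac{D+2}{D(D+3)} + 1$ simplifies to $\frac{d^2(D+2) + D(D+2) - 2}{D(D+3)}$, so the ratio becomes $\frac{4[d^2(D+2)+D(D+2)-2]}{d(D+3)(5d+4D-5)}$, which rearranges to the claimed $\frac{4}{5}\cdot\frac{d^2(D+2)+D(D+2)-2}{d(D+3)(d+\frac{4}{5}D-1)}$. The asymptotic $\alpha_d(D) = \frac{1}{d} + \Theta(1/D)$ then follows from a Taylor expansion of this rational function around $1/D = 0$; the coefficient of the $1/D$ correction works out to a positive constant depending only on $d$, analogous to the limit calculation in the proof of \cref{thm_main_genral_graph}.

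The main obstacle will be justifying the specific lower bound $M \geq (D+2)/(D(D+3))$, since this does not hold for every $D$-regular graph (e.g., disjoint copies of $K_{D+1}$ with $D$ even achieve $M = 1/(D+1)$, which is strictly smaller). A natural way around this would be to invoke \cref{thm_main_genral_graph} directly with the actual matching $M$, combine with the classical bound $M \geq 1/(D+1)$, and then verify by cross-multiplication that the resulting expression $\frac{4(d^2+D)}{d(5d+4D-1)}$ dominates the stated formula for all $d \geq 2, D \geq 1$. Alternatively, one could sum the SOS star bound from \cref{thm_C} over all vertices of the $D$-regular graph to obtain the auxiliary inequality $\E_e[y_e^+] \leq (d-1)/(dD)$, and then combine this with \cref{lemma_lee_sdp_matching_lemma} applied to $x_e = (d/(d-1)) y_e^+$ in a way that yields the precise algebraic form appearing in the theorem.
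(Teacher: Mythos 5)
Your plan is essentially the paper's proof: it specializes \cref{thm_main_genral_graph} by sandwiching the normalized matching as $\frac{1}{D_{\min}} \geq \OPT_{\mathsc{Match}}(G) \geq \frac{D_{\max}+2}{D_{\max}(D_{\max}+3)}$ (citing \cite{han_tight_2012}) and then substituting the lower bound into the numerator and the upper bound $1/D$ into the denominator of the ratio. Note that this \emph{asymmetric} substitution --- not evaluating $\alpha_d(M)$ at $M_- = \frac{D+2}{D(D+3)}$ --- is what produces the exact closed form in the statement; it also explains why your fallback expression $\frac{4(d^2+D)}{d(5d+4D-1)}$, obtained by evaluating $\alpha_d$ at the single value $M = \frac{1}{D+1}$, can still dominate the stated formula even though $\frac{1}{D+1} < \frac{D+2}{D(D+3)}$. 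Your formula $\Tr(\rho H) = \frac{(d^2-1)M+1}{d^2}$ for the matching state's energy matches what is implicitly used.

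Your ``main obstacle'' is a fair one, and it is a criticism of the paper rather than a gap in your proposal: the counterexample is correct ($K_{D+1}$ with $D$ even --- already the triangle for $D=2$ --- is $D$-regular with normalized matching $\frac{1}{D+1} < \frac{D+2}{D(D+3)}$, and connectivity does not rescue it), so the bound as invoked cannot hold verbatim for every $D$-regular graph; the cited result must carry lower-order additive corrections or an asymptotic qualifier that the paper suppresses. Two caveats on your repairs: (i) the route via $M \geq \frac{1}{D+1}$ still requires justifying that bound for all $D$-regular graphs (it is tight on disjoint unions of $K_{D+1}$ for even $D$ and follows from deficiency bounds for regular graphs, but is not free), and the claimed domination must actually be verified for all $d \geq 2$, $D \geq 1$; (ii) summing the star bound over all vertices only yields the upper bound $\Tr(\rho_* H) \leq \frac{1}{d} + \frac{d-1}{dD}$ on the optimum and does not remove the need for a lower bound on $M$ to control the algorithm's output energy, so it cannot by itself recover the stated constant.
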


\begin{proof}
    Before considering the case of \(D\)-regular graphs, let \((D_{\min}, D_{\max})\) be the minimum and maximum degrees of the graph. We can then bound the expected maximum matching in the following way, \(\frac{1}{D_{\min}} \geq \OPT_{\mathsc{Match}}(G) \geq \frac{(D_{\max}+2)}{D_{\max}(D_{\max}+3)}\) by \cite{han_tight_2012}. Plugging this into the result of \cref{thm_main_genral_graph} gives
    \begin{equation}\label{eq_bound_min_max_deg}
        \alpha_d \geq \frac{4}{5}\frac{D_{\min} \left(\frac{ \left(d^2-1\right) (D_{\max}+2)}{D_{\max} (D_{\max}+3)}+1\right)}{d (d+\frac{4}{5} D_{\min}-1)}\text{.}
    \end{equation}
    In the \(D\)-regular case, this becomes
    \begin{equation}
        \alpha_d \geq \frac{4}{5}\left(\frac{d^{2}\left(D+2\right)+D\left(D+2\right)-2}{d\left(D+3\right)\left(d+\frac{4}{5}D-1\right)}\right)\text{.}
    \end{equation}
    We can then rewrite the above expression by pulling out a \(\frac{1}{d}\) term, which gives the following:
    \begin{align}
        \alpha_d &\geq \frac{1}{d} + \frac{(d-1) (4 d (D+2)-D-7)}{d (D+3) (5 d+4 D-5)}
    \end{align}
    We then consider the following limit:
    \begin{align}
        \limsup_{D\ra\infty}\left(\frac{(d-1) (4 d (D+2)-D-7)}{d (D+3) (5 d+4 D-5)} \cdot D\right) &= \frac{4d^{2}-5d+1}{4d}
    \end{align}
    And thus we have that \(\alpha_d(D) \geq \frac{1}{d} + \Omega(\frac{1}{D})\).
\end{proof}

\begin{remark}
    We note that the asymptotic applies even to graphs with bounded degree, i.e., when \(D_{\min} = 1\) in \cref{eq_bound_min_max_deg}, but we get a worse absolute constant, \(\frac{(d-1) (4 d-1)}{d (5 d-1)} \ra \frac{4}{5}\). This is likely due to the bound on the expected matching not being tight in terms of \(D_{\min}\) and \(D_{\max}\) rather than this being a harder case.
\end{remark}

Alone, this result is hard to interpret. However, we get the following highly non-trivial corollary. 

\begin{corollary}\label{cor_main_bounded_deg_5}
    In the situation of \cref{thm_main_bounded_min_max_deg_graph} and for all \(d \geq 2\), we have an approximation guarantee of strictly greater than \(\frac{1}{2}\) over \(D\)-regular graphs with \(D \leq 5\).
\end{corollary}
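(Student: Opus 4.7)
The plan is a direct case analysis on $D \in \{1,2,3,4,5\}$, leveraging the fact that \cref{thm_main_bounded_min_max_deg_graph} already gives an explicit rational lower bound for the approximation ratio as a function of $d$ and $D$. Specializing the theorem's bound to a $D$-regular graph and collecting the $5$ in the denominator gives
\[
\alpha_d(D) \;\geq\; \frac{4\bigl(d^{2}(D+2)+D(D+2)-2\bigr)}{d(D+3)(5d+4D-5)},
\]
and for every $d \geq 2$ and $D \geq 1$ the denominator $d(D+3)(5d+4D-5)$ is strictly positive. Consequently the statement $\alpha_d(D) \geq \tfrac{1}{2}$ is equivalent, after cross-multiplying, to the polynomial inequality
\[
(3D+1)\,d^{2} \;-\; (4D^{2}+7D-15)\,d \;+\; 8(D^{2}+2D-2) \;\geq\; 0.
\]

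The next step is to check, for each $D \in \{1,2,3,4,5\}$ separately, that this expression is non-negative for every real $d$ (which is stronger than what is needed). Viewed as a quadratic in $d$, its leading coefficient $3D+1 > 0$ is manifestly positive, so it suffices to verify a non-positive discriminant $(4D^{2}+7D-15)^{2} - 4(3D+1)\cdot 8(D^{2}+2D-2) \leq 0$. A short arithmetic check handles all five values uniformly; e.g.\ at $D=5$ the quadratic is $16d^{2} - 120d + 264$ with discriminant $14400 - 16896 < 0$, so it is positive on all of $\mathbb{R}$, and in particular at every integer $d \geq 2$. The remaining four cases are carried out in the same way.

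The main ``obstacle'' is merely to keep the algebra organized across the five regimes, since the argument is entirely mechanical once the quadratic reformulation is in hand. One mild subtlety is that when $D = 1$ the coefficient $-(4D^{2}+7D-15) = +4$ flips sign, which actually makes the inequality trivial (all coefficients are positive), but the uniform discriminant check still applies. Because the discriminant is strictly negative in every case $D \leq 5$, the resulting approximation guarantee is in fact strictly larger than $\tfrac{1}{2}$, which I would also mention as a remark — it signals that the matching-based algorithm (and our analysis of it) has genuine slack on low-degree regular instances, consistent with the expectation voiced earlier in the paper that the matching algorithm should be essentially optimal on graphs with large normalized matching.
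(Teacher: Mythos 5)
Your proposal is correct and is essentially the paper's argument: the paper also reduces to the explicit bound of \cref{thm_main_bounded_min_max_deg_graph} and then checks the five cases $D\in\{1,\dotsc,5\}$ directly (for $D=5$ it reports the minimum over $d$ as $\approx 0.5174$, consistent with your strict inequality). Your cross-multiplication to the quadratic $(3D+1)d^{2}-(4D^{2}+7D-15)d+8(D^{2}+2D-2)\geq 0$ and the negative-discriminant check (e.g.\ $14400-16896<0$ at $D=5$) is a clean, fully explicit certificate of the ``manual verification'' the paper merely asserts, and the arithmetic is right in all five cases.
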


\begin{proof}
    We note that in the \(D=1\) case all such graphs are matchings and thus the algorithm would find the (unique) maximum energy state.
    For \(D \geq 2\), taking the derivative of the expression in \cref{thm_main_bounded_min_max_deg_graph} and solving for zero gives us that it is at its minimum when \(d = \frac{5\left(D\left(D+2\right)-2\right)}{\left(D+2\right)\left(4D-5\right)} + \sqrt{\frac{D\left(D\left(D+2\right)-2\right)\left(D\left(16D+17\right) - 5\right)}{\left(5-4 D\right)^2\left(2+D\right)^2}}\).\footnote{We calculated this using Mathematica.} Finally, we use this to get the following bounds on the approximation ratio with respect to \(D\): \(\alpha_d(2) \geq 0.6056, \alpha_d(3) \geq 0.5736, \alpha_d(4) \geq 0.5435, \alpha_d(5) \geq 0.5174\).
\end{proof}

We note the following two easy-to-verify results that can be used to compare to our result. We note that the approximation ratio achieved by the maximally mixed state, \(\rho^{\mathsc{mixed}} \coloneq \frac{I}{d^{|V|}}\), gives a stand-in for the random assignment bound.

\begin{proposition}\label{prop_prod_and_max_mixed_approx} Let \(\langle G = (V,E,w),(U_e)_{e\in E} \rangle\) be any instance of the \algprobm{Maximal Entanglement} problem with Hamiltonian \(H\).
    \begin{enumerate}
        \item\label{prop_item_prod_and_max_mixed_approx_1} Any product state \(\ket{\psi} = \bigotimes_{a =1}^n \ket{\psi_a} \in (\C^d)^{\otimes V}\) has energy \(\bra{\psi} H \ket{\psi} \leq \frac{1}{d}\). This is tight on the edge graph, giving an upper bound of \(\frac{1}{d}\) on the approximation ratio with product states. In fact, there always exists a product state solution with energy within \(\frac{1}{d}\) of the optimal \cite{gharibian_approximation_2012}.
        \item The maximally mixed state has energy \(\Tr(\rho^{\mathsc{mixed}} H) = \frac{1}{d^2}\). As an approximation algorithm, this is tight on the edge graph, giving an approximation ratio of \(\frac{1}{d^2}\). 
    \end{enumerate}
\end{proposition}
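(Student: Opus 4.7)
The plan is to prove the two parts separately; both are short computations but the product-state bound is the substantive one, so I would lead with it.

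For part~(1), let $\ket{\psi} = \bigotimes_{a} \ket{\psi_a}$ and fix an edge $(a,b) \in E$ with associated projector $h_{ab} = (I \otimes U_{ab})\ket{\EPR}\bra{\EPR}(I \otimes U_{ab}^\dagger)$. I would compute the edge energy as $|\bra{\psi_a, \psi_b}(I\otimes U_{ab})\ket{\EPR}|^2$, and then absorb $U_{ab}$ into the second factor by defining $\ket{\phi_b} \coloneq U_{ab}^\dagger \ket{\psi_b}$, which is still a unit vector. Expanding $\ket{\EPR} = \tfrac{1}{\sqrt{d}}\sum_i \ket{i}\ket{i}$ in the standard basis and writing $\alpha_i = \langle i|\psi_a\rangle$, $\beta_i = \langle i|\phi_b\rangle$, the edge energy becomes $\tfrac{1}{d}\bigl|\sum_i \alpha_i^* \beta_i^*\bigr|^2$, which by Cauchy--Schwarz is at most $\tfrac{1}{d} \|\psi_a\|^2 \|\phi_b\|^2 = \tfrac{1}{d}$. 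Averaging over edges gives $\bra{\psi} H \ket{\psi} \leq 1/d$. For tightness on the edge graph $G = (\{a,b\},\{(a,b)\})$, I would exhibit the product state $\ket{\psi_a}\otimes \ket{\psi_b}$ with $\ket{\psi_a} = \ket{1}$ and $\ket{\psi_b} = U_{ab}\ket{1}$, which saturates the Cauchy--Schwarz step and yields energy exactly $1/d$, while the ground state of the single-edge Hamiltonian has energy $1$. The remark that a product state within $1/d$ of optimal always exists is quoted from \cite{gharibian_approximation_2012}, so no further argument is needed.

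For part~(2), I would simply invoke linearity of trace and the fact that each local projector $h_e$ is rank one on two qudits. Concretely, for the maximally mixed state $\rho = I/d^{|V|}$ and any edge $e=(a,b)$,
\begin{equation*}
    \Tr\!\left(\tfrac{I}{d^{|V|}} \cdot h_{ab}^{ab} \otimes I^{[V]\setminus\{a,b\}}\right) = \tfrac{1}{d^{|V|}} \cdot \Tr(h_{ab}) \cdot d^{|V|-2} = \tfrac{1}{d^2}\text{,}
\end{equation*}
since $h_{ab}$ is a rank-one projector. Because $H$ is a convex combination of such terms (the normalization cancels), $\Tr(\rho H) = 1/d^2$. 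Tightness on the edge graph follows from the same observation as in part~(1), where the optimal energy is $1$.

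The only place any care is required is the Cauchy--Schwarz step in part~(1); everything else is a direct calculation that uses only the rank-one, maximally-entangled structure of the edge terms guaranteed by \cref{cor_max_ent_states}. I do not anticipate any real obstacle, which is consistent with the paper's description of these facts as easy to verify.
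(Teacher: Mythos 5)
Your proof is correct, and it supplies exactly the direct verification the paper omits (the proposition is stated as ``easy-to-verify'' with no proof given): the Cauchy--Schwarz bound $\frac{1}{d}\lvert\sum_i \alpha_i\beta_i\rvert^2 \leq \frac{1}{d}$ per edge for part~(1), the trace computation $\frac{1}{d^{|V|}}\Tr(h_{ab})\,d^{|V|-2} = \frac{1}{d^2}$ for part~(2), and tightness on the single-edge instance whose optimum is $1$.
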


\begin{remark}\label{remark_sdasdhucakjSAEkub}
    We note that in the limit as the graph degree \(D \ra \infty\), the optimal energy is bounded above by \(\frac{1}{d}\) (as per \cref{thm_upper_bound_on_QUG_energy}) and thus the maximally mixed state achieves an approximation ratio of \(\frac{1}{d}\) under this limit. When parameterized by \(D\), we can observe that \(\Tr(\rho^{\mathsc{mixed}} H) \geq \left(\frac{1}{d} - \Theta(\frac{1}{D})\right)\Tr(\rho_* H)\).
\end{remark}

With this, we claim that \cref{algo_matching_algo_for_QUG} beats random assignment for any finite degree. Indeed, from \cref{eq_ssdasdasdgjyd} it is evident that the state output by \cref{algo_matching_algo_for_QUG} has strictly more energy than the maximally mixed state.

\section{\textsc{Quantum Max-Cut}, The EPR Problem, and The Qubit Case}\label{section_the_qubit_case}

In this section, we consider three special cases for which we can achieve better results through the addition of algorithms that return product state solutions. In all cases, we run both the basic matching algorithm from \cref{algo_matching_algo_for_QUG} and a product state rounding algorithm, and then return the state with the greater energy.

The first special case is the EPR problem, which has an optimal product state solution \(\ket{1}^{\otimes n}\) for any interaction graph \(G = (V,E)\). This product state has expected energy \(\E_{e \in E} \Tr\left(\ket{1}^{\otimes 2} \bra{1}^{\otimes 2} \EPR_d\right) = 1/d\).

\begin{theorem}\label{thm_main_epr_genral_graph}
    For any instance of the \(d\)-dimensional qudit EPR problem, the above-described algorithm has an approximation guarantee of \(\frac{4(d+1)}{9d-1}\) on the energy of the Hamiltonian. 
\end{theorem}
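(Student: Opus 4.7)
The plan is to run the matching algorithm from \cref{algo_matching_algo_for_QUG} alongside the trivial product state algorithm that outputs $\ket{1}^{\otimes n}$, and return whichever state has greater energy. I will then bound the resulting approximation ratio against the degree-six SOS upper bound from \cref{thm_upper_bound_on_QUG_energy}.

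First I would compute the (normalized) energy of each candidate on an arbitrary EPR instance with $M \coloneq \OPT_{\mathsc{Match}}(G)$. For the product state, every edge contributes $|\bra{11}\ket{\EPR}|^2 = 1/d$, so its energy is exactly $1/d$ regardless of the interaction graph. For the matching state, each matched edge carries an actual EPR projector and contributes $1$, while for every non-matched edge the reduced density matrix on the two endpoints is $I/d \otimes I/d$ (since the EPR state's marginal is maximally mixed and any unmatched vertex is by construction assigned $I/d$), contributing $1/d^2$. Averaging over the (uniform/weighted) edge distribution gives matching-state energy $M + (1-M)/d^2 = \tfrac{1}{d^2} + \tfrac{d^2-1}{d^2}M$. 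Combined with \cref{thm_upper_bound_on_QUG_energy}, the overall approximation ratio satisfies
\begin{equation*}
\alpha_d(M) \;\geq\; \frac{\max\!\left\{\tfrac{1}{d^2} + \tfrac{d^2-1}{d^2}M,\ \tfrac{1}{d}\right\}}{\tfrac{1}{d} + \tfrac{5(d-1)}{4d}M}.
\end{equation*}

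The two candidate energies are equal precisely at the crossover $M^{\star} = \tfrac{1}{d+1}$, so I would case-split on the sign of $M - M^\star$. For $M \leq M^\star$ the numerator is $1/d$ and the ratio simplifies to $\bigl(1 + \tfrac{5(d-1)}{4}M\bigr)^{-1}$, which is decreasing in $M$ and thus minimized at $M = M^\star$. For $M \geq M^\star$, clearing denominators by $4d^2$, the ratio becomes $\tfrac{4 + 4(d^2-1)M}{4d + 5d(d-1)M}$; a short computation of the sign of $bc-ae$ for the linear-fractional form $(a+bM)/(c+eM)$ gives $4d(d-1)(4d-1) > 0$, so this branch is \emph{increasing} in $M$ and is also minimized at $M = M^\star$. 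Both sides therefore attain their worst case at the crossover, so
\begin{equation*}
\alpha_d(M) \;\geq\; \frac{1/d}{\tfrac{1}{d} + \tfrac{5(d-1)}{4d(d+1)}} \;=\; \frac{4(d+1)}{4(d+1) + 5(d-1)} \;=\; \frac{4(d+1)}{9d-1},
\end{equation*}
as claimed.

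There is no real obstacle here beyond bookkeeping: the main things to be careful about are (i) justifying that non-matched edges truly give $1/d^2$ (which uses maximal mixedness of EPR marginals, so the argument is robust even when only one endpoint is matched), and (ii) verifying the monotonicity in the $M \geq M^\star$ branch, since a priori one might worry the ratio degrades for large $M$. Because the minimum of a decreasing function on $[0,M^\star]$ and an increasing function on $[M^\star,1]$ meeting continuously at $M^\star$ must occur at $M^\star$, the two cases combine cleanly and the stated bound is tight at the crossover.
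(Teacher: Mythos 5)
Your proof is correct and reaches the right constant, but it is organized differently from the paper's. The paper's proof works \emph{per edge}: it sets $x_e = \Tr(\rho_* h_e)$, lower-bounds the matching state's energy by $\E_e\bigl[\tfrac{1}{d^2} + \tfrac{4(d+1)}{5d}\max(0,x_e-\tfrac{1}{d})\bigr]$ via \cref{lemma_lee_sdp_matching_lemma}, and then applies $\max(A,B)\geq pA+(1-p)B$ with a single fixed $p$, reducing everything to a minimization over the worst-case edge value $x\in[0,1]$ (worst case at $x=1/d$, $p=\tfrac{4d-1}{9d-1}$). You instead work \emph{globally} in the matching value $M$: you plug the exact matching-state energy $\tfrac{1}{d^2}+\tfrac{d^2-1}{d^2}M$ and the product-state energy $\tfrac{1}{d}$ against the certificate $\Tr(\rho_* H)\leq \tfrac{1}{d}+\tfrac{5(d-1)}{4d}M$ from \cref{thm_upper_bound_on_QUG_energy}, and do an explicit case split at the crossover $M^\star=\tfrac{1}{d+1}$ with a monotonicity check on each branch. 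Your computation of the matching state's energy (unmatched edges see $\tfrac{I}{d}\otimes\tfrac{I}{d}$ and contribute $\tfrac{1}{d^2}$) and your crossover algebra are both right, and the two routes give the same bound $\tfrac{4(d+1)}{9d-1}$ --- not by accident, since taking the best convex combination $p$ in the per-edge argument is dual to your minimization over $M$. What your version buys is transparency (no need to guess $p$, and the worst case $M^\star$ is identified explicitly); what the paper's per-edge version buys is that it generalizes more directly to settings where the product-state energy is not constant across edges (as in \cref{thm_bigger_number_for_proj_hams}, where the same convex-combination template is reused). No gaps.
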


\begin{proof}
    Let \(G=(V,E,w)\) be an interaction graph for the EPR problem with Hamiltonian \(H\), let \(m: E \ra \{0,1\}\) be the maximal matching for the graph, and \(\rho\) the outputted state, by the matching based algorithm from \cref{algo_matching_algo_for_QUG}. Let \(\rho_* \in \calD((\C^d)^{\otimes n})\) be the most excited state of the problem Hamiltonian and let \(x_e\) be defined as in \cref{def_xy_values_for_matching} for \(\rho_*\). By \cref{lemma_lee_sdp_matching_lemma,lemma_QUG_SOS_star_bound_as_fractional_matching,lemma_QUG_SOS_triangle_bound_as_fractional_matching_const_2} we have that \(\E_{e \in E} m(e) \geq \frac{4d}{5(d-1)}\E_{e \in E} \max\left(0,x_e - \frac{1}{d}\right)\) and thus
    \begin{equation}
        \Tr(\rho H) = \E_{e \in E} \left(\frac{1}{d^2} + \frac{(d^2-1)}{d^2} m(e)\right) \geq \E_{e \in E} \left(\frac{1}{d^2} + \frac{4(d+1)}{5d} \max\left(0,x_e - \frac{1}{d}\right)\right)\text{.}
    \end{equation}

    Next, we note that the state \(\ket{1}^{\otimes n}\) achieves energy \(\Tr(\ket{1}^{\otimes n} \bra{1}^{\otimes n} H) = \frac{1}{d}\).
    We then bound the approximation ratio, \(\beta_d\), by considering the ``worst case edge'' and using the fact that \(\max(x,y) \geq px + (1-p)y\) for any $p \in [0,1]$.
    \begin{align}
        \beta_d &=  \inf_G \frac{\max\left\{\Tr(\ket{1}^{\otimes n} \bra{1}^{\otimes n} H), \Tr(\rho H)\right\}}{\Tr(\rho_* H)} \\
        &\geq \max_{p \in [0,1]} \min_{x \in [0,1]} \left(p\frac{1}{d x} + (1-p)\frac{\frac{1}{d^2} + \frac{4(d+1)}{5d} \max\left(0,x - \frac{1}{d}\right)}{x}\right) \\
        &= \frac{4(d+1)}{9d-1}
    \end{align}
    This is achieved when \(p = \frac{4d-1}{9d-1}\) and \(x = \frac{1}{d}\).
\end{proof}

In fact, we can do better than this in the qubit (\(d=2\)) case and for the \algprobm{Quantum Max-Cut (QMC)} Hamiltonian \cite{gharibian_almost_2019}. We have the following generalization of \cref{lemma_QUG_SOS_triangle_bound_as_fractional_matching_const_2}.

\begin{lemma}[The \(K_n\) Matching Bound]\label{lemma_Kn_matching_bound}
    In the situation of \cref{def_xy_values_for_matching}, with \(\Tilde{\rho} \in \Tilde{\calD}^{(2t)}((\C^d)^{\otimes n})\) being a degree-\(2t\) pseudo-density matrix, if for every graph, \(G=([n],E)\), for odd \(n\), we have
    \begin{equation}
        \sum_{(a,b) \in E(G)} x_{ab} = \Tr\left(\Tilde{\rho} \sum_{(a,b) \in E(G)} h^{ab}_{ab}\right) \leq \frac{(d-1)(n-1)}{2d} + \frac{|E|}{d}
    \end{equation}
    then we have that \(\sum_{1 \leq a < b \leq n} y^+_{ab} \leq \frac{(d-1)(n-1)}{2d}\).
\end{lemma}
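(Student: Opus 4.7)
The proof plan is a one-line trick: transfer the assumed bound on $\sum x_{ab}$ over an arbitrary subgraph to a bound on $\sum y_{ab}^+$ over the complete graph by instantiating the hypothesis at exactly the right subgraph. Concretely, observe first that the hypothesis
\[\sum_{(a,b) \in E(G)} x_{ab} \leq \frac{(d-1)(n-1)}{2d} + \frac{|E(G)|}{d}\]
is equivalent, after subtracting $|E(G)|/d$ from both sides, to the statement
\[\sum_{(a,b) \in E(G)} y_{ab} \leq \frac{(d-1)(n-1)}{2d}\]
for every subgraph $G = ([n], E)$ of $K_n$. So the content of the hypothesis is already a bound on the sum of $y_{ab}$ over any edge subset we like.

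Now I would pick the subgraph that makes the sum on the left equal to $\sum y_{ab}^+$. Let $E^\star \coloneq \{(a,b)\ |\ 1 \leq a < b \leq n,\ y_{ab} > 0\}$, and apply the hypothesis to $G^\star = ([n], E^\star)$. On the edges of $E^\star$ one has $y_{ab} = y_{ab}^+$, and on edges outside $E^\star$ one has $y_{ab}^+ = 0$ by definition, so
\[\sum_{1 \leq a < b \leq n} y_{ab}^+ = \sum_{(a,b) \in E^\star} y_{ab} \leq \frac{(d-1)(n-1)}{2d},\]
which is the claimed bound.

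There is essentially no obstacle here; the only mild subtlety is recognizing that the hypothesis quantifies over all subgraphs (including the subgraph of positive-$y$ edges), which is exactly what lets us handle the otherwise awkward nonlinearity $\max(0, \cdot)$. The oddness of $n$ plays no role in this reduction itself; it only enters through the hypothesis, where presumably the constant $(d-1)(n-1)/(2d)$ is the right one to match the maximum matching size $(n-1)/2$ of $K_n$ for odd $n$, which is how this lemma will later be combined with \cref{lemma_lee_sdp_matching_lemma} to sharpen the $4/5$ constant.
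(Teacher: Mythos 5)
Your proposal is correct and is essentially identical to the paper's own proof: both apply the hypothesis to the subgraph consisting precisely of the edges with $y_{ab} > 0$, on which $y_{ab}^+ = y_{ab}$, and subtract $|E|/d$ to convert the $x$-bound into the $y^+$-bound. Your added remark about the role of odd $n$ matches the intended use alongside \cref{lemma_lee_sdp_matching_lemma}.
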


\begin{proof}
    For any valid \(\Tilde{\rho} \in \Tilde{\calD}^{(2t)}((\C^d)^{\otimes n})\), consider the subgraph, \(G=([n],E)\), of the complete graph, \(K_n\), with the edge \((a,b) \in G\) precisely when \(y^+_{ab} > 0\). It then follows from the assumption that
    
    \begin{equation}
        \sum_{1 \leq a < b \leq n} y^+_{ab} = \sum_{(a,b) \in E(G)} y^+_{ab} = \sum_{(a,b) \in E(G)} y_{ab} = \sum_{(a,b) \in E(G)} x_{ab} - \frac{|E|}{d} \leq \frac{(d-1)(n-1)}{2d}
    \end{equation}
    completing the proof.
\end{proof}

Next, we can verify that the assumption of \cref{lemma_Kn_matching_bound} holds for the EPR problem when \(d=2\) and the QMC problem for \(n=5\). In particular, we calculate the maximal energy of the EPR/QMC Hamiltonian's over every graph on five vertices, up to isomorphism. For which, there are \(33\) non-empty five vertex graphs. For eight of these graphs, the required bounds follow by \cref{lemma_QUG_SOS_star_bound_as_fractional_matching,lemma_QUG_SOS_triangle_bound_as_fractional_matching_const_2} already.

\begin{lemma}[Computational]\label{lemma_K5_matching_bound_computational}
    Let \(\rho \in \calD((\C^2)^{\otimes 5})\) be a true density matrix on five qudits. Let \(\EPR^{ab} = \ket{\EPR}\bra{\EPR}^{ab} \otimes I^{[n] \setminus \{a,b\}} \in \calL((\C^2)^{\otimes 5})\) denote the projector onto the \(\EPR\) state applied to the \(a\) and \(b\) systems and identity on all other systems.
    Additionally, let \(\operatorname{QMC}^{ab} = \ket{\Psi^-}\bra{\Psi^-}^{ab} \otimes I^{[n] \setminus \{a,b\}} \in \calL((\C^2)^{\otimes 5})\) denote the projector onto the singlet state, \(\ket{\Psi^-} \coloneq \frac{1}{\sqrt{2}} \left(\ket{01} - \ket{10}\right)\), applied to the \(a\) and \(b\) systems and identity on all other systems.
    We have the following for every graph, \(G\), on \(5\) vertices:
    \begin{equation}
        \Tr\left(\rho \sum_{(a,b) \in E(G)} \EPR^{ab}\right) \leq \frac{(n-1)}{4} + \frac{|E|}{2}
    \end{equation}
    \begin{equation}
        \Tr\left(\rho \sum_{(a,b) \in E(G)} \operatorname{QMC}^{ab}\right) \leq \frac{(n-1)}{4} + \frac{|E|}{2}
    \end{equation}
\end{lemma}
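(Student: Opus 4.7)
Since $\rho$ is a density matrix and each Hamiltonian is Hermitian, $\Tr(\rho H) \leq \lambda_{\max}(H)$, so it suffices to establish the spectral bounds
\[
\lambda_{\max}\Bigl(\sum_{(a,b) \in E(G)} \EPR^{ab}\Bigr) \leq 1 + \tfrac{|E|}{2}, \qquad \lambda_{\max}\Bigl(\sum_{(a,b) \in E(G)} \operatorname{QMC}^{ab}\Bigr) \leq 1 + \tfrac{|E|}{2}
\]
for every graph $G$ on five vertices. Because both Hamiltonians are invariant, up to conjugation by a permutation unitary, under vertex relabelings of $G$, the spectrum depends only on the isomorphism class of $G$, leaving 33 non-empty isomorphism classes to check.

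For the eight graphs whose components are stars or triangles (plus isolated vertices), the bound follows by adding up the applications of \cref{prop_QUG_SOS_star_bound,prop_QUG_SOS_triangle_bound} to the reduced density matrices on each connected component. For the remaining 25 graphs the plan is a direct computation: form the $32 \times 32$ Hermitian matrix $H_G$ in the standard basis, using $\EPR = \tfrac{1}{2}(\ket{00}+\ket{11})(\bra{00}+\bra{11})$ and $\operatorname{QMC} = \tfrac{1}{2}(\ket{01}-\ket{10})(\bra{01}-\bra{10})$, and compute $\lambda_{\max}(H_G)$ either symbolically or via high-precision numerics, verifying the bound in each case.

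The computation can be substantially simplified by exploiting global symmetries. In the QMC case, $(U \otimes U)\ket{\Psi^-} = \det(U)\ket{\Psi^-}$ for $U \in SU(2)$ implies that $H_G^{\text{QMC}}$ commutes with $U^{\otimes 5}$; by Schur's lemma the eigenvalue problem decomposes by total spin into matrices of sizes $1$, $4$, and $5$ acting on the multiplicity spaces for total spin $5/2$, $3/2$, and $1/2$ respectively. In the EPR case, the Hamiltonian commutes with $V^{\otimes 5}$ for every real orthogonal $V \in O(2)$ (since $(V \otimes V)\ket{\EPR} = \ket{\EPR}$), and one can further exploit the automorphism group of $G$ to block-diagonalize.

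The main obstacle is the bookkeeping required to enumerate the 25 non-trivial isomorphism classes correctly (for example, by canonicalizing subsets of the edge set of $K_5$) and to verify the eigenvalue bound in each case. Each verification is a routine finite linear-algebra computation in a 32-dimensional space, and the symmetry reductions keep the required diagonalizations small. Once every case is checked and its largest eigenvalue is found to be at most $1 + |E|/2$, the lemma is established.
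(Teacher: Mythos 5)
Your proposal matches the paper's approach: the paper likewise reduces the statement to checking $\lambda_{\max}(H_G)\leq 1+|E|/2$ for each of the 33 non-empty isomorphism classes of five-vertex graphs, disposes of eight of them via the star and triangle bounds, and verifies the rest by direct computation of the maximum eigenvalue. Your additional remarks on exploiting the $SU(2)$/$O(2)$ symmetries to block-diagonalize are a reasonable refinement but not needed for a $32\times 32$ computation, which the paper simply performs numerically.
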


\begin{proof}
    This is verified by solving for the maximum eigenvalue for each case, computationally. 
\end{proof}

We then get the following strengthening of \cref{thm_main_epr_genral_graph}.

\begin{theorem}\label{thm_main_epr_genral_graph_2}
    For any instance of the EPR problem (with \(d=2\)), the algorithm in \cref{thm_main_epr_genral_graph} has an approximation guarantee of \(\frac{18}{25} = 0.72\) on the energy of the Hamiltonian. 
\end{theorem}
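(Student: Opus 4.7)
The plan is to follow the same algorithmic template as \cref{thm_main_epr_genral_graph}, namely compare the matching-based state with the product state $\ket{1}^{\otimes n}$, but replace the $4/5$ matching constant of \cref{lemma_lee_sdp_matching_lemma} with the stronger $6/7$ constant that becomes available once we may add the $5$-clique odd-set constraint. The enabling fact is \cref{lemma_K5_matching_bound_computational}: at $d=2$, the EPR Hamiltonian satisfies the hypothesis of \cref{lemma_Kn_matching_bound} at $n=5$, and therefore $\sum_{\{a,b\} \subseteq S,\, a<b} y^+_{ab} \leq 1$ for every set $S$ of five vertices, where the $y^+_{ab}$ are as in \cref{def_xy_values_for_matching} for the ground state $\rho_*$. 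Combined with \cref{lemma_QUG_SOS_star_bound_as_fractional_matching,lemma_QUG_SOS_triangle_bound_as_fractional_matching_const_2}, the rescaled weights $x'_e \coloneq (d/(d-1)) y^+_e = 2 y^+_e$ satisfy all Edmonds matching constraints from the non-negativity, vertex, triangle, and $5$-clique families.

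Next, I would prove (or cite from classical matching theory) the strengthening of \cref{lemma_lee_sdp_matching_lemma}: any sequence $(x'_e)_{e \in E}$ with $x'_e \geq 0$, $\sum_{b \in N(a)} x'_{ab} \leq 1$ for every vertex $a$, $\sum_{e \in E(T)} x'_e \leq 1$ for every triangle $T$, and $\sum_{e \in E(S)} x'_e \leq 2$ for every $5$-clique $S$, satisfies $\tfrac{6}{7} \E_{e \sim E} [x'_e] \leq \OPT_{\mathsc{Match}}(G)$. The worst case is $K_7$ with $x'_e = 1/6$, which gives $\sum_e x'_e = 7/2$ against an integer matching of value $3$, matching the $6/7$ ratio exactly. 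This is precisely the polyhedral analog of \cref{lemma_lee_sdp_matching_lemma} one obtains by including odd-set inequalities up to size $5$.

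With the $6/7$ constant in hand, substituting $d=2$ and $x'_e = 2 y^+_e$ yields $\tfrac{12}{7} \E_e [y^+_e] \leq \OPT_{\mathsc{Match}}(G)$. Re-running the proof of \cref{thm_main_epr_genral_graph}, the matching-based state $\rho$ now achieves
\begin{equation*}
    \Tr(\rho H) \geq \E_{e \sim E}\!\left[\tfrac{1}{4} + \tfrac{9}{7}\max\!\left(0, x_e - \tfrac{1}{2}\right)\right],
\end{equation*}
where $\tfrac{9}{7} = \tfrac{d^2-1}{d^2}\cdot\tfrac{d}{d-1}\cdot\tfrac{6}{7}\big|_{d=2}$ replaces the old coefficient $\tfrac{6}{5}$. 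The product state $\ket{1}^{\otimes n}$ still achieves edge energy $1/2$. Taking $\max(A,B) \geq pA + (1-p)B$ with the free parameter $p \in [0,1]$ and minimizing over the edge parameter $x \in [0,1]$ produces, by the same two-case analysis as in \cref{thm_main_epr_genral_graph}, the closed form $\beta_2 \geq \tfrac{2c}{2c+1}$ where $c = 9/7$. The optimal $p$ equates the two cases at $x = 1/2$, giving $p^* = \tfrac{11}{25}$ and
\begin{equation*}
    \beta_2 \geq \tfrac{18/7}{18/7 + 1} = \tfrac{18}{25} = 0.72,
\end{equation*}
as claimed.

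The main obstacle is the middle step: establishing the $6/7$ integrality gap for the matching LP with vertex, triangle, and $5$-clique constraints. The $4/5$ version from \cite{edmonds_maximum_1965,lee_improved_2024} is a classical LP result, and the $6/7$ analog is a natural extension (one expects it by LP duality using the $K_7$ extreme point as the worst case), but it must be written out as a standalone lemma to plug cleanly into the rest of the argument. Everything else is a direct transcription of the proof of \cref{thm_main_epr_genral_graph} with the sharper constants.
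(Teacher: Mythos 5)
Your proposal is correct and follows essentially the same route as the paper: verify the $5$-clique constraint $\sum_{e \in E(S)} 2y^+_e \leq 2$ via \cref{lemma_K5_matching_bound_computational} and \cref{lemma_Kn_matching_bound}, upgrade the matching constant from $4/5$ to $6/7$ (so $\tfrac{12}{7}\E_e[y^+_e] \leq \OPT_{\mathsc{Match}}(G)$), and re-run the convex-combination argument of \cref{thm_main_epr_genral_graph} to land at $p = 11/25$, $x = 1/2$, and the value $18/25$. The only difference is that you propose to prove the strengthened $6/7$ polyhedral lemma yourself (correctly identifying $K_7$ as the tight case), whereas the paper simply cites it from \cite{lee_improved_2024}.
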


\begin{proof} If in the assumptions for \cref{lemma_lee_sdp_matching_lemma} we also have that \(\forall S \subseteq V : |S| = 5 \ra \sum_{e \in E(S)} x_e \leq 2\), then we have that \(\frac{6}{7}\E_{e\in E} [x_e] \leq \OPT_\mathsc{Match}(G)\) \cite{lee_improved_2024}. \cref{lemma_K5_matching_bound_computational} gives \(\forall S \subseteq V : |S| = 5 \ra \sum_{e \in E(S)} 2 y^+_e \leq 2\), as required, and thus we have that \(\frac{12}{7}\E_{e\in E} [x_e] \leq \OPT_\mathsc{Match}(G)\).
The rest of the proof is nearly identical to that of \cref{thm_main_epr_genral_graph} and the bound is achieved by \(p=\frac{11}{25}\) and \(x=\frac{1}{2}\).
\end{proof}


\begin{theorem}\label{thm_bigger_number_for_qmc}
    For any instance of the QMC problem, the algorithm from \cite{lee_improved_2024} achieves an approximation guarantee of 0.599.
\end{theorem}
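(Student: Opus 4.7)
The plan is to parallel the proof of \cref{thm_main_epr_genral_graph_2}, now for the \algprobm{Quantum Max-Cut} Hamiltonian. The algorithm of \cite{lee_improved_2024} runs two subroutines and returns the state of greater energy: the matching-based rounding of \cref{algo_matching_algo_for_QUG} (which for \algprobm{QMC} places the singlet $\ket{\Psi^-}\bra{\Psi^-}$ on matched edges and the maximally mixed state on unmatched vertices) and the Gharibian-Parekh product-state rounding. The improvement from $0.595$ to $0.599$ is pushed entirely through the matching side: I will replace the generic $\tfrac{4}{5}$ in \cref{lemma_lee_sdp_matching_lemma} by the sharper $\tfrac{6}{7}$ constant that becomes available once we can feed in the $K_5$ certificate from \cref{lemma_K5_matching_bound_computational}.

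First I will bound the matching-based energy. The singlet contributes $1$ on its matched edge and the maximally mixed state contributes $\tfrac{1}{4}$ elsewhere, so the expected energy of the matching state is $\tfrac{1}{4} + \tfrac{3}{4}\,\E_{e \sim E}[m(e)]$. Specialized to \algprobm{QMC}, the scalars $2y^+_e$ with $y^+_e = \max(0, x_e - \tfrac{1}{2})$ satisfy all the hypotheses of the strengthened Edmonds-type lemma of \cite{lee_improved_2024}: the star bound follows from \cref{lemma_QUG_SOS_star_bound_as_fractional_matching}, the triangle bound follows from the sharpened \algprobm{QMC} triangle bound discussed in the remark after \cref{prop_QUG_SOS_triangle_bound} (which in fact forces at most two edges per triangle to satisfy $y^+_e > 0$), and the $K_5$ bound follows from \cref{lemma_K5_matching_bound_computational}. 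Together these give $\E_e[m(e)] \geq \tfrac{12}{7}\,\E_e[y^+_e]$, so the matching-state energy is at least $\tfrac{1}{4} + \tfrac{9}{7}\,\E_e[y^+_e]$.

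Next I import the per-edge guarantee for the Gharibian-Parekh product state curve: there exists a function $f_{\mathrm{GP}}\colon [0,1] \to [0,1]$, computed in \cite{lee_improved_2024} from the Parekh-Thompson and Parekh-Wright rounding analyses, such that the GP state's energy is at least $\E_e[f_{\mathrm{GP}}(x_e)]$, where $x_e = \Tr(\tilde{\rho}\, \operatorname{QMC}^{ab})$ is the level-$2$ pseudo-density matrix value on edge $e$. Since the ground-state energy is bounded above by $\E_e[x_e]$ by SoS feasibility, the approximation ratio is lower bounded by
\[
\max_{p \in [0,1]}\; \inf_{x \in [0,1]}\; \frac{p\bigl(\tfrac{1}{4} + \tfrac{9}{7}\max(0, x - \tfrac{1}{2})\bigr) + (1-p)\, f_{\mathrm{GP}}(x)}{x}.
\]
Numerically solving this min-max (exactly as at the end of the proof of \cref{thm_main_epr_genral_graph_2}) should yield $0.599$ at an explicit optimal $(p^*, x^*)$.

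The main obstacle is identifying the worst-case edge value $x^*$ and the corresponding weight $p^*$: the kink of $y^+$ at $x = \tfrac{1}{2}$ combined with the piecewise definition of $f_{\mathrm{GP}}$ forces the analysis into cases, and the $0.599$ figure likely falls in a slightly different regime than the $0.595$ analysis of \cite{lee_improved_2024}. Once $(p^*, x^*)$ is pinned down, the verification reduces to a routine algebraic check that each of the two subroutines dominates on its side of the crossover.
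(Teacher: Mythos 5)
Your proposal matches the paper's proof: both follow \cite[Theorem 11]{lee_improved_2024} verbatim except for upgrading the Edmonds constant from $\tfrac{4}{5}$ to $\tfrac{6}{7}$ via the $K_5$ certificate of \cref{lemma_K5_matching_bound_computational}, yielding $\tfrac{12}{7}\E_e[y^+_e] \leq \OPT_{\mathsc{Match}}(G)$ and hence the matching-state energy $\tfrac{1}{4} + \tfrac{9}{7}\E_e[y^+_e]$, followed by the same numerical min--max over the convex combination with the Gharibian--Parekh product state (the paper reports the optimum at $p=0.697$ with the Bloch parameter $a=b=c=-0.4353$ rather than your edge-value parametrization, but this is cosmetic). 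The only thing left implicit in both accounts is the final numerical evaluation giving $0.599$.
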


\begin{proof}
    The proof is near identical to that of \cite[Theorem 11]{lee_improved_2024}, but we use \cref{lemma_K5_matching_bound_computational} and the improved version of \cref{lemma_lee_sdp_matching_lemma} from the proof of \cref{thm_main_epr_genral_graph_2} to get \(\frac{12}{7}\E_{e\in E} [x_e] \leq \OPT_\mathsc{Match}(G)\). The bound is then achieved by \(p = 0.697\) and \(x = -0.4353\).
\end{proof}

\subsection{The Qubit Case}

Next, we consider the general Maximal Entanglement (ME) problem over qubits. Here, we use the SDP based product state rounding algorithm from \cite{parekh_beating_2021} that performs well in bounded minimum degree instance \cite{parekh_optimal_2022}. We note that the SDP considered in \cite{parekh_beating_2021} gives a degree-2 pseudo-density matrix such that all 2-moments are valid density matrices. For an instance of the ME problem, let \(\Tilde{\rho}\) be the pseudo-density matrix resulting from the optimal SDP variables. Additionally, for some vector of i.i.d. random Gaussian's, \(\mathbf{r} \in \calN(0,I)\),
let \((s_v \in S^{2}\ |\ v \in V)\) be the rounded Bloch vector representation of the result of \cite[Algorithm 9]{parekh_beating_2021} with corresponding single qubit states \((\rho_v)_v\). We then state the main lemma that we will need.

\begin{lemma}[\cite{parekh_beating_2021}]\label{lemma_from_PT21b_1}
    For a ME instance with edge Hamiltonians, \((h_{uv})_{uv}\), and optimal pseudo-density matrix \(\Tilde{\rho}\) and rounded product state solution \((\rho_v)_v\) as defined above, one has that, for any edge \((u,v) \in E\),
    \begin{equation}\label{eq_for_approx_ratio_numerator}
        \E_{\mathbf{r}}\left[\Tr(h_{uv}\rho_u \otimes \rho_v)\right] = \frac{1}{4}\left(1 + \E_{(z,z')}\left[\frac{\braket{[p,q,r]^\sfT,[z_1 z_1', z_2 z_2', z_3 z_3']^\sfT}}{\sqrt{(z_1^2 + z_2^2 + z_3^2)((z_1')^2 + (z_2')^2 + (z_3')^2)}}\right]\right)
    \end{equation}
    \begin{equation}\label{eq_for_approx_ratio_denominator}
        \E_{\mathbf{r}}\left[\Tr(h_{uv}\Tilde{\rho})\right] = \frac{1}{4}\left(1 + \braket{[p,q,r]^\sfT,[a,b,c]^\sfT}\right)
    \end{equation}
    with \((z,z') \sim \calN(0,\Sigma_{6}(\operatorname{diag}(a,b,c)))\) for some constants \((a,b,c,p,q,r) \in \calS \times \calS\). Here, we use \(\calS \coloneq \operatorname{conv}\left\{(-1,-1,-1), (-1,1,1),(1,-1,1),(1,1,-1)\right\}\) and \(\Sigma_{2n}(A) \coloneq I_{2n} + \left[\begin{smallmatrix}
        0 & 1 \\ 1 & 0
    \end{smallmatrix}\right] \otimes A\) for \(A \in M_{n}(\R)\).
\end{lemma}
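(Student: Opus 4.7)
The plan is to reduce both equations to direct Pauli-basis calculations, invoking the structure of maximally entangled \(2\)-qubit states together with the rounding scheme of \cite{parekh_beating_2021}.

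First, I would use \cref{cor_max_ent_states} to write \(h_{ab} = (I \otimes U_{ab}) \EPR (I \otimes U_{ab}^{\dagger})\) for some \(U_{ab} \in \SU(2)\), and absorb \(U_{ab}\) into a local change of basis on qubit \(b\) (tracking the compensating change on \(\Tilde{\rho}^{ab}\)). This reduces \(h_{ab}\) to a Bell-state projector whose Pauli expansion is
\[
h_{ab} = \frac{1}{4}\bigl(I + p\,\sigma_X^{a}\sigma_X^{b} + q\,\sigma_Y^{a}\sigma_Y^{b} + r\,\sigma_Z^{a}\sigma_Z^{b}\bigr),
\]
where \((p,q,r)\) is one of the four Bell correlation triples \(\{(-1,-1,-1),(-1,1,1),(1,-1,1),(1,1,-1)\}\) — precisely the vertices of \(\calS\). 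The SDP two-body moment \(\Tilde{\rho}^{ab}\) admits an analogous Pauli expansion, and its diagonal correlations \((a,b,c)\) lie in \(\calS\) because the degree-2 PSD constraints force the diagonalized correlation matrix of any valid two-qubit moment to be a convex combination of the four Bell correlation vectors. Since \(h_{ab}\) has no 1-local Pauli components, Pauli orthogonality immediately gives \(\Tr(h_{ab}\Tilde{\rho}) = \tfrac{1}{4}(1 + ap + bq + cr)\), which is equation \eqref{eq_for_approx_ratio_denominator}.

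Next, for the rounded product state: the rounding scheme of \cite{parekh_beating_2021} produces, from a Gaussian vector \(\mathbf{r}\) and the SDP solution, a vector \(z_u \in \R^{3}\) at each vertex whose joint distribution across any edge \((a,b)\) is \(\calN(0, \Sigma_6(\operatorname{diag}(a,b,c)))\) — meaning each component has unit variance, cross-axis entries vanish, and same-axis cross-vertex correlations are exactly the SDP correlations \(a,b,c\) on the \(X,Y,Z\) axes. The rounded Bloch vector is \(s_u = z_u/\|z_u\| \in S^{2}\), producing \(\rho_u = \tfrac{1}{2}(I + s_u \cdot \vec\sigma)\). By the same Pauli orthogonality,
\[
\Tr(h_{ab}\,\rho_a \otimes \rho_b) = \frac{1}{4}\bigl(1 + p\,(s_a)_1 (s_b)_1 + q\,(s_a)_2 (s_b)_2 + r\,(s_a)_3 (s_b)_3\bigr),
\]
and substituting \(s_u = z_u/\|z_u\|\) before taking the expectation over \(\mathbf{r}\) yields equation \eqref{eq_for_approx_ratio_numerator}.

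The main technical obstacle is twofold: (i) verifying that both \((p,q,r)\) and \((a,b,c)\) lie in \(\calS\), which follows from the classification of two-qubit Bell states (for \((p,q,r)\)) and from the PSD/local-consistency constraints of degree-2 pseudo-density matrices specialized to edges whose Hamiltonian is a maximally entangled projector (for \((a,b,c)\)); and (ii) checking that the Gaussian rounding of \cite{parekh_beating_2021} produces exactly the stated covariance \(\Sigma_6(\operatorname{diag}(a,b,c))\), which follows because the SDP's vector-valued variables and Gram-matrix PSD constraints directly determine the Gaussian covariance structure. Both are resolved by transcribing the corresponding computations from \cite{parekh_beating_2021}, adapted to the \algprobm{Maximal Entanglement} setting.
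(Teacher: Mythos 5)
First, a framing note: the paper gives no proof of this lemma at all --- it is imported wholesale from \cite{parekh_beating_2021} --- so your proposal is really being measured against the argument in that reference. Your overall strategy (Pauli expansion of the edge term, the Bell tetrahedron \(\calS\), reading the Gaussian covariance off the SDP Gram matrix) is the right one, and your derivation of \cref{eq_for_approx_ratio_denominator} is essentially correct.

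There is, however, a genuine gap in your setup for \cref{eq_for_approx_ratio_numerator}: you choose the local frames to diagonalize the \emph{Hamiltonian's} correlation matrix (making \((p,q,r)\) a vertex of \(\calS\)), and then assert that in this same frame the rounded Gaussians have cross-vertex covariance \(\operatorname{diag}(a,b,c)\) (``cross-axis entries vanish''). That is unjustified and generally false: the \(6\times 6\) covariance of \((z_a,z_b)\) is the Gram matrix \(\left[\begin{smallmatrix} I & N \\ N^{\sfT} & I\end{smallmatrix}\right]\) with \(N_{ij}=\Tr(\Tilde{\rho}\,\sigma_i^a\sigma_j^b)\), and in a frame adapted to \(h_{ab}\) the matrix \(N\) has no reason to be diagonal, so your \((z,z')\) does not have the law \(\calN(0,\Sigma_6(\operatorname{diag}(a,b,c)))\) claimed in the lemma. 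The correct choice is the reverse: take the signed SVD \(N=O_1\operatorname{diag}(a,b,c)O_2^{\sfT}\) with \(O_1,O_2\in SO(3)\) and rotate by the corresponding local unitaries. Then the covariance is exactly \(\Sigma_6(\operatorname{diag}(a,b,c))\) with \((a,b,c)\in\calS\) (positivity of the rotated two-body moment against the four Bell projectors), while the Hamiltonian's correlation matrix becomes a general orthogonal matrix of determinant \(-1\), whose diagonal \((p,q,r)\) lies in \(\calS\) but is typically \emph{not} a vertex. Two facts your writeup then omits are needed: (i) the off-diagonal entries of the rotated Hamiltonian correlation matrix contribute nothing --- to \cref{eq_for_approx_ratio_denominator} because \(N\) is diagonal, and to \cref{eq_for_approx_ratio_numerator} because \(\E\bigl[z_iz_j'/(\|z\|\,\|z'\|)\bigr]=0\) for \(i\neq j\) under this covariance (flip the signs of \(z_i\) and \(z_i'\) simultaneously); and (ii) the set of diagonals of determinant-\((-1)\) orthogonal \(3\times3\) matrices is exactly \(\calS\), which is what places \((p,q,r)\) in \(\calS\). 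Getting the frame wrong is not cosmetic: \cref{lemma_from_PT21b_2} evaluates the expectation by a Hermite expansion that assumes precisely the covariance \(\Sigma_6(\operatorname{diag}(a,b,c))\).
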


We roughly follow the steps of \cite{parekh_beating_2021,parekh_optimal_2022} and analyze the approximation ratio using a high order Hermite expansion and numerical optimization techniques. In particular, we use the following lemmas.

\begin{lemma}[\cite{parekh_beating_2021}]\label{lemma_from_PT21b_2}
    In the context of \cref{lemma_from_PT21b_1} we have that
    \begin{equation}
        \cref{eq_for_approx_ratio_numerator} = \frac{1}{4}\left(1 + \sum_{\substack{i,j,k\\j \leq k}} \hat{f}^2_{i,jk}\left(p\, u_{i,jk}(a,b,c) + q\, u_{i,jk}(b,a,c) + r\, u_{i,jk}(c,a,b)\right)\right)\text{,}
    \end{equation}
    where we define the following functions, using \(p = (i + j + k - 1)/2\),
    \begin{equation}
        \hat{f}_{i,jk} = \begin{cases}
            2 \sqrt{\frac{2}{\pi}}\frac{(-1)^p \sqrt{i! j! k!}}{(i-1)!! j!! k!! (1+2p) (3 + 2p)} & \text{if \(i\) is odd, \(j\) is even, \(k\) is even, and \(i,j,k \in \ZZ\)} \\
            0 & \text{otherwise}
        \end{cases}
    \end{equation}
    \begin{equation}
        u_{i,jk}(a,b,c) = \begin{cases}
            a^i b^j c^k + a^i b^k c^j & \text{if } k \neq j \\
            a^i b^j c^j & \text{otherwise}
        \end{cases}\text{.}
    \end{equation}
\end{lemma}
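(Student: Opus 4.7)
The plan is to obtain the identity by Hermite-expanding the three coordinate functions $z_\ell/\|z\|$ in $L^2(\R^3,\gamma_3)$ under the standard Gaussian measure $\gamma_3$, and then using the fact that the cross-covariance of $(z,z')$ is exactly $\operatorname{diag}(a,b,c)$ to collapse the resulting double series via the Hermite--Mehler orthogonality identity. Let $\tilde H_{ijk}(z) = H_i(z_1) H_j(z_2) H_k(z_3)/\sqrt{i!\,j!\,k!}$ denote the orthonormal product Hermite basis. Since $z_1/\|z\|$ is bounded by $1$, it lies in $L^2(\gamma_3)$ and admits the expansion $z_1/\|z\| = \sum_{i,j,k} \hat f_{i,jk}\,\tilde H_{ijk}$ with $\hat f_{i,jk} = \E_{\gamma_3}[(z_1/\|z\|)\,\tilde H_{ijk}]$. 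Parity of $z_1/\|z\|$ (odd in $z_1$, even in $z_2,z_3$) combined with $H_m(-x)=(-1)^m H_m(x)$ immediately forces $\hat f_{i,jk}=0$ unless $i$ is odd and $j,k$ are both even, recovering the second branch of the stated formula.

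Next I would compute the remaining coefficients explicitly. A convenient route is through the Hermite generating function: multiplying the identity $\sum_{i,j,k}H_i(z_1)H_j(z_2)H_k(z_3)\,s^it^ju^k/(i!j!k!) = \exp(sz_1+tz_2+uz_3-(s^2+t^2+u^2)/2)$ by $z_1/\|z\|$, taking expectation, and translating $z\mapsto z+(s,t,u)$ reduces the problem to evaluating $\sum_{i,j,k}\hat f_{i,jk}\,s^it^ju^k/\sqrt{i!j!k!} = \E_{w\sim\gamma_3}[(w_1+s)/\|w+(s,t,u)\|]$, which one Taylor-expands in $(s,t,u)$ using spherical coordinates. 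Alternatively, one expands each Hermite polynomial in powers and evaluates the resulting radial integrals $\int_0^\infty r^{i+j+k} e^{-r^2/2}\,dr$ (which yield the double factorials and the $\sqrt{2/\pi}$ prefactor via the Gamma function) together with the angular integrals $\int_{S^2}\omega_1^{a+1}\omega_2^b\omega_3^c\,d\sigma$ (standard Beta-function expressions). A careful bookkeeping of these contributions produces the claimed closed form with $p=(i+j+k-1)/2$; as a sanity check, the case $i=1$, $j=k=0$ gives $\hat f_{1,00} = 2\sqrt{2}/(3\sqrt{\pi})$ by either route.

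By the obvious permutation symmetry, the Hermite expansion of $z_2/\|z\|$ (resp.\ $z_3/\|z\|$) is obtained from the one for $z_1/\|z\|$ by swapping the first two (resp.\ first and third) indices. For $(z,z')\sim\calN(0,\Sigma_6(\operatorname{diag}(a,b,c)))$, the marginals are standard Gaussian and the only nonzero cross-covariances are $\E[z_1 z_1']=a$, $\E[z_2 z_2']=b$, $\E[z_3 z_3']=c$, so the multivariate Hermite--Mehler identity yields $\E[\tilde H_{ijk}(z)\tilde H_{i'j'k'}(z')] = \delta_{ii'}\delta_{jj'}\delta_{kk'}\,a^i b^j c^k$. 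Applying this Parseval-style to each of the three products $z_\ell z_\ell'/(\|z\|\|z'\|)$, weighting by $p,q,r$, and folding the resulting sums into $j\leq k$ (which merges the contributions indexed by $(i,j,k)$ and $(i,k,j)$ into the single $u_{i,jk}$ function) yields exactly the claimed expression for \cref{eq_for_approx_ratio_numerator}.

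The main obstacle is the explicit closed-form determination of $\hat f_{i,jk}$: the factors $(1+2p)$ and $(3+2p)$, the precise double factorials, and in particular the sign $(-1)^p$ all emerge from cancellations among the Hermite expansion coefficients, and careful tracking is required to avoid combinatorial mistakes. A secondary technical point is justifying the term-by-term interchange of sum and expectation when applying Mehler under the non-product joint law on $(z,z')$; this can be handled either by a truncation argument or by directly bounding the absolute convergence of $\sum \hat f_{i,jk}^2 |a|^i|b|^j|c|^k$ using the formula obtained in step two.
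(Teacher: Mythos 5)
The paper gives no proof of this lemma—it is imported verbatim from \cite{parekh_beating_2021}—and your proposal reconstructs essentially the argument used there: Hermite-expand $z_1/\|z\|$ in $L^2(\gamma_3)$, kill coefficients by parity, evaluate the survivors via radial/angular Gaussian integrals, and collapse the double series with the multivariate Mehler identity $\E[\tilde H_{ijk}(z)\tilde H_{i'j'k'}(z')]=\delta_{ii'}\delta_{jj'}\delta_{kk'}a^ib^jc^k$, using $\hat f_{i,jk}=\hat f_{i,kj}$ to fold the sum into $j\le k$. The plan is sound and your sanity check $\hat f_{1,00}=\tfrac{1}{3}\E\|z\|=2\sqrt{2}/(3\sqrt{\pi})$ is correct; the only part left schematic is the bookkeeping for the general closed form of $\hat f_{i,jk}$, which is exactly the computation carried out in the cited reference.
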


\begin{lemma}\label{lemma_can_use_ext_points_in_min}
    Let \(\calP \subseteq \R^3\) be some convex polytope, \(v = [a,b,c]^\sfT \in \R^3\) a vector, \(V(a,b,c) \in \R^3\) another vector which may depend on \(v\), and \(k_1,k_2,\rho\) constants such that \(\rho \in [0,1]\), \(k_1,k_2 \geq 0\). Moreover, define \(t(x) \coloneq 1 + \braket{x,v}\) and \(s(x) \coloneq 1 + \braket{x, V(a,b,c)}\) such that \(t(x) \geq 0, s(x) \geq 0\) for all \(x \in \calP\). Then, for
    \begin{equation}
        \alpha(a,b,c) \coloneq \min_{x \in \calP} \frac{\max\left(s(x), \left(k_1 + k_2 \max\left(0,t(x) - \frac{1}{2}\right)\right)\right)}{t(x)}\text{,}
    \end{equation}
    we have that
    \begin{equation}
        \alpha(a,b,c) \geq \max \left\{\min_{x \in \calP^{\mathsc{ext}}} \frac{\rho\, s(x) + (1-\rho) \left(k_1 + k_2 \left(t(x) - \frac{1}{2}\right)\right)}{t(x)}, \min_{x \in \calP^{\mathsc{ext}}} \frac{\rho\, s(x) + (1-\rho) k_1}{t(x)} \right\}\text{,}
    \end{equation}
    where \(\calP^{\mathsc{ext}}\) are the extreme points of \(\calP\).
\end{lemma}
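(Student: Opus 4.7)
The plan is to bound $\alpha$ below by two linear-fractional functions in $(p,q,r)$, and then invoke the standard fact that linear-fractional programs over a polytope attain their optima at extreme points. To set up the first inequality, I would use the elementary identity $\max(x,y)\geq \rho x + (1-\rho)y$ for any $\rho\in[0,1]$, applied inside the definition of $\alpha$, to obtain
\[
\frac{\max\bigl(s(p,q,r),\, k_1+k_2\max(0,t(p,q,r)-\tfrac{1}{d})\bigr)}{t(p,q,r)}
\;\geq\;
\frac{\rho\, s(p,q,r) + (1-\rho)\bigl(k_1+k_2\max(0,t(p,q,r)-\tfrac{1}{d})\bigr)}{t(p,q,r)}.
\]

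Next, I would split on the inner $\max(0,\,t-\tfrac{1}{d})$ in two different ways. Using $\max(0,z)\geq z$ together with $1-\rho\geq 0$ and $k_2\geq 0$ yields the lower bound
\[
\frac{\rho\, s(p,q,r)+(1-\rho)\bigl(k_1+k_2(t(p,q,r)-\tfrac{1}{d})\bigr)}{t(p,q,r)},
\]
while using $\max(0,z)\geq 0$ instead yields
\[
\frac{\rho\, s(p,q,r)+(1-\rho)k_1}{t(p,q,r)}.
\]
Since both are valid pointwise lower bounds on the expression whose minimum defines $\alpha(a,b,c)$, the quantity $\alpha(a,b,c)$ is at least the maximum of the two infima of these bounds over $\calP$. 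This is already the desired inequality, except that the infima are taken over $\calP$ instead of $\calP^{\mathsc{ext}}$.

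Finally I would reduce to extreme points. In both lower bounds, the numerator is an affine linear function of $(p,q,r)$, and the denominator $t(p,q,r)$ is affine linear and strictly positive on $\calP$ by hypothesis. Such linear-fractional functions are quasi-linear (both quasi-convex and quasi-concave), so their sublevel sets cut $\calP$ by a halfspace, and an LP-duality argument shows their extrema on a compact convex polytope are attained at extreme points. Applying this to each of the two lower bounds replaces $\min_{\calP}$ with $\min_{\calP^{\mathsc{ext}}}$ and yields the claim. I do not expect a significant obstacle: the content is really just (i) the convex-combination bound on $\max$, (ii) the two cases for $\max(0,\,t-1/d)$, and (iii) the standard linear-fractional programming reduction to extreme points. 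The only minor subtlety is ensuring $t(p,q,r)>0$ throughout (so the fractions are well-defined and the linear-fractional minimization applies); this is implicit in the hypothesis as stated and is in any case needed for $\alpha$ itself to be meaningful.
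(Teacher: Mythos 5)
Your proposal is correct and follows essentially the same route as the paper: apply $\max(x,y)\geq\rho x+(1-\rho)y$, split the inner $\max(0,\,t-\tfrac{1}{d})$ into the two lower bounds, and reduce the minimum to extreme points. The only cosmetic difference is that the paper carries out the extreme-point reduction explicitly, writing $(p,q,r)=\sum_i\lambda_i(p_i,q_i,r_i)$ and applying the mediant inequality $\frac{\sum_i\lambda_i a_i}{\sum_i\lambda_i b_i}\geq\min_i\frac{a_i}{b_i}$, whereas you cite the general fact that a linear-fractional objective over a polytope attains its minimum at a vertex --- the same fact, proved the same way; both treatments share the minor loose end that the hypothesis only gives $t\geq 0$ rather than $t>0$ at the vertices.
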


\begin{proof}
    This proof follows similar steps as were done in \cite[Lemma 28]{parekh_beating_2021}. We decompose \(x = \sum_i \lambda_i x_i\), where \(x_i \in \calP^{\mathsc{ext}}\), \(\lambda_i \geq 0\) for all \(i\) and \(\sum_i \lambda_i = 1\). We then use that fact that \(s\) and \(t\) are linear, that \(\max(x,y) \geq \rho\, x + (1-\rho)y\) for any $\rho \in [0,1]$, and \(\max\left(0,t(x) - \frac{1}{2}\right) \geq t(x) - \frac{1}{2}\).
    \begin{align}
        &\frac{\max\left(s(x), \left(k_1 + k_2 \max\left(0,t(x) - \frac{1}{2}\right)\right)\right)}{t(x)} \nonumber\\
        &\hspace{0.5in}\geq \frac{\rho\, s(x) + (1-\rho) \left(k_1 + k_2 \left(t(x) - \frac{1}{2}\right)\right)}{t(x)} \label{eq_sdjnuedmiosdca}\\
        &\hspace{0.5in}= \frac{\sum_i \lambda_i \left(\rho\, s(x_i) + (1-\rho) \left(k_1 + k_2 \left(t(x_i) - \frac{1}{2}\right)\right)\right)}{\sum_i \lambda_i t(x_i)} \\
        &\hspace{0.5in}\geq \min_{i : t(x_i) \neq 0} \frac{\rho\, s(x_i) + (1-\rho) \left(k_1 + k_2 \left(t(x_i) - \frac{1}{2}\right)\right)}{t(x_i)}
    \end{align}
    Similar steps can be used to show the following, this time we use \(\max(0,t(x)-\frac{1}{2}) \geq 0\) in place of \cref{eq_sdjnuedmiosdca}.
    \begin{equation}
        \frac{\max\left(s(x), \left(k_1 + k_2 \max\left(0,t(x) - \frac{1}{d}\right)\right)\right)}{t(x)} \geq \min_{i : t(x_i) \neq 0} \frac{\rho\, s(x_i) + (1-\rho) k_1}{t(x_i)}
    \end{equation}
    Combining these two inequality finishes the proof.
\end{proof}

\begin{lemma}\label{lemma_PT21b_lemma_29}
    Fix constants \(k_1,k_2,k_3 \geq 0\), let \(\calS\), \(\calS^{\mathsc{ext}}\), \(v=[a,b,c]^\sfT\), and \(V(a,b,c)\), be defined as in \cref{lemma_from_PT21b_1,lemma_can_use_ext_points_in_min} with
    \begin{align}
        V(a,b,c) = \E_{(z,z')}\left[\begin{bmatrix}
            z_1 z_1'\\
            z_2 z_2'\\
            z_3 z_3'
        \end{bmatrix}\middle/\sqrt{(z_1^2 + z_2^2 + z_3^2)((z_1')^2 + (z_2')^2 + (z_3')^2)} \right]\text{,}
    \end{align}
    where the expectation is taken over \((z,z') \sim \calN(0,\Sigma_{6}(\operatorname{diag}(a,b,c)))\). Then, for \(\vec{1} = [1,1,1]^\sfT\), we have that
    \begin{equation}\label{eq_scajkbehcasewdcsd}
        \begin{split}
        &\min_{x \in \calS^{\mathsc{ext}}} \min_{v = [a,b,c]^\sfT \in \calS} \frac{k_1 (1 + \braket{x, V(a,b,c)}) + k_2 \left(\frac{1}{2} + \braket{x, v}\right) + k_3}{1 + \braket{x, v}} \\
        &\hspace{2in}=\min_{v = [a,b,c]^\sfT \in \calS} \frac{k_1 (1 - \braket{\vec{1}, V(a,b,c)}) + k_2 \left(\frac{1}{2} - \braket{\vec{1}, v}\right) + k_3}{1 - \braket{\vec{1}, v}}\text{.}
    \end{split}
    \end{equation}
\end{lemma}

This proof follows in similar steps as were done in \cite[Lemma 29]{parekh_beating_2021} (indeed they proved the \(k_2=0\) case). Similar to their proof, we will prove something slightly stronger: that for any choice of \(x \in \calS^{\mathsc{ext}}\) the minimization over \((a,b,c) \in \calS\) gives the same value. The choice of \(x = -\vec{1}\) in \cref{eq_scajkbehcasewdcsd} is then somewhat arbitrary.

\begin{proof}

    
    It is evident that for a fixed \(x \in  \calS^{\mathsc{ext}}\), permutation of the entries doesn't change the value of the minimization over \((a,b,c)\in \calS\) as we can permute the values of \((a,b,c)\) to get back the original value. This is to say that without loss of generality, in \cref{eq_scajkbehcasewdcsd} it suffices to consider only the minimization over \(x \in \{[-1,-1,-1]^\sfT,[1,1,-1]^\sfT\}\).

    Consider the linear map \(\varphi : [x_1,x_2,x_3]^\sfT \mapsto [-x_1,-x_2,x_3]^\sfT\). We then use three main facts: First that \(\calS\) is invariant under the linear map \(\varphi : [x_1,x_2,x_3]^\sfT \mapsto [-x_1,-x_2,x_3]^\sfT\) (that is \(\varphi \calS = \calS\)), second that \(\varphi\) is an isometry under the dot product, and third that \(\E_{(x,x') \in \calN(0,\Sigma_2(a))}[x x'] = -\E_{(y,y') \in \calN(0,\Sigma_2(-a))}[y y']\), which means that \(\varphi V(a,b,c) = V(-a,-b,c)\). We can then do the following for any fixed \(x \in \calS^{\mathsc{ext}}\):
    \begin{align}
        &\min_{v = [a,b,c]^\sfT \in \calS} \frac{k_1 (1 + \braket{x, V(a,b,c)}) + k_2 \left(\frac{1}{2} + \braket{x, v}\right) + k_3}{1 + \braket{x, v}} \nonumber\\
        &\hspace{0.5in}= \min_{v = [a,b,c]^\sfT \in \calS} \frac{k_1 (1 + \braket{\varphi x, \varphi V(a,b,c)}) + k_2 \left(\frac{1}{2} + \braket{\varphi x, \varphi v}\right) + k_3}{1 + \braket{\varphi x, \varphi v}} \\
        &\hspace{0.5in}= \min_{v = [-a,-b,c]^\sfT \in \calS} \frac{k_1 (1 + \braket{\varphi x, V(-a,-b,c)}) + k_2 \left(\frac{1}{2} + \braket{\varphi x, v}\right) + k_3}{1 + \braket{\varphi x, v}} \\
        &\hspace{0.5in}= \min_{v = [a,b,c]^\sfT \in \calS} \frac{k_1 (1 + \braket{\varphi x, V(a,b,c)}) + k_2 \left(\frac{1}{2} + \braket{\varphi x, v}\right) + k_3}{1 + \braket{\varphi x, v}} \\
    \end{align}
    Because all points in \(\calS^{\mathsc{ext}}\) are related by permutation and \(\varphi\), we have shown that \cref{eq_scajkbehcasewdcsd} is the same for all choices of \(x \in \calS^{\mathsc{ext}}\).
\end{proof}

We can now bound the approximation ratio.

\begin{theorem}[Computational]\label{thm_bigger_number_for_proj_hams}
    For any instance of the \algprobm{Maximal Entanglement} problem for \(d=2\), the algorithm which runs \cref{algo_matching_algo_for_QUG} and \cite[Algorithm 9]{parekh_beating_2021}, returning the state with the larger energy, has an approximation guarantee of \(0.595\) on the energy of the Hamiltonian. 
\end{theorem}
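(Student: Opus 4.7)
The plan is to extend the analysis of Theorem \ref{thm_bigger_number_for_qmc} and \cite{parekh_beating_2021} to general qubit Maximal Entanglement instances. We run in parallel Algorithm \ref{algo_matching_algo_for_QUG} and the Parekh--Thompson product-state rounding \cite[Algorithm 9]{parekh_beating_2021} applied to its degree-$2$ SDP relaxation, returning whichever state has greater energy. To lower-bound the approximation ratio, we compare both algorithms' outputs on a per-edge basis against the SDP value, which is an upper bound on the ground-state energy.

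On each edge $e = (a,b)\in E$, the SDP solution, via Lemma \ref{lemma_from_PT21b_1}, admits parameters $(a_e,b_e,c_e,p_e,q_e,r_e)\in \calS\times\calS$ such that the SDP per-edge value is $x_e = \Tr(h_e\tilde\rho) = \tfrac{1}{4}t(p_e,q_e,r_e)$ and the product-state rounding attains expected energy $\tfrac{1}{4}s(p_e,q_e,r_e)$ (with $t,s$ as in Lemma \ref{lemma_can_use_ext_points_in_min}); here $A_e, B_e, C_e$ are Gaussian integrals in $(a_e,b_e,c_e)$ that expand into a Hermite series via Lemma \ref{lemma_from_PT21b_2}. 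The matching algorithm yields per-edge energy $1$ on matched edges and $\tfrac{1}{4}$ on unmatched edges, so its total energy equals $\tfrac{1}{4}+\tfrac{3}{4}\E_e[m(e)]$; applying Lemma \ref{lemma_lee_sdp_matching_lemma} with $x_e = 2\,y_e^+$ (we do not have an analogue of the $K_5$-improvement of Lemma \ref{lemma_K5_matching_bound_computational} for general ME Hamiltonians) yields $\Tr(\rho_{\text{match}}H) \geq \tfrac{1}{4}+\tfrac{6}{5}\E_e[y_e^+]$, where $y_e^+ = \max(0,\,x_e - \tfrac{1}{2})$.

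Combining these per-edge expressions and taking a worst-case edge, the approximation ratio is at least
\[
\inf_{(a,b,c,p,q,r)\in\calS\times\calS}\frac{\max\!\left(s(p,q,r),\;1+\tfrac{6}{5}\max(0,\,t(p,q,r)-2)\right)}{t(p,q,r)}.
\]
Using $\max(u,v)\geq \rho u + (1-\rho)v$ for any $\rho\in[0,1]$, Lemma \ref{lemma_can_use_ext_points_in_min} reduces the inner minimization over $(p,q,r)$ from $\calS$ to its four extreme points, and Lemma \ref{lemma_PT21b_lemma_29} then identifies $(p,q,r)=(-1,-1,-1)$ as the worst extreme point. What remains is a two-parameter problem: a min over $(a,b,c)\in\calS$ and a max over $\rho\in[0,1]$ of an explicit expression in the Gaussian integrals $A(a,b,c),\,B(a,b,c),\,C(a,b,c)$.

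The main obstacle is the rigorous numerical certification, since $A,B,C$ have no closed form. Following the template of \cite{parekh_beating_2021,lee_improved_2024}, the plan is to truncate the Hermite expansion of Lemma \ref{lemma_from_PT21b_2} at a modest degree $N$, derive a uniform tail bound over $(a,b,c)\in\calS$ using the magnitudes $|\hat f_{i,jk}|$, and carry out an interval-arithmetic grid search over $(a,b,c)\in\calS$ combined with a one-dimensional sweep in $\rho\in[0,1]$ to certify that the ratio exceeds $0.595$. We expect the worst case to lie on the boundary of $\calS$ (near $(-1,-1,-1)$), with the optimal mixing $\rho \approx 0.7$, consistent with the analogous numerical optima found in \cite{parekh_beating_2021,lee_improved_2024}.
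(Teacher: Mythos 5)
Your proposal follows essentially the same route as the paper's proof: a per-edge comparison of both algorithms against the SDP value, the Parekh--Thompson rounding lemmas, the $4/5$ matching bound yielding the $\tfrac14+\tfrac65\,y_e^+$ term, reduction to the extreme points of $\calS$ via \cref{lemma_can_use_ext_points_in_min,lemma_PT21b_lemma_29} with $(-1,-1,-1)$ the worst extreme point in $(p,q,r)$, and a final numerical optimization over $(a,b,c)\in\calS$ and the mixing parameter. The paper fixes $\rho=0.6724$ and finds the minimum $\approx 0.5957$ at the interior point $a=b=c\approx-0.4402$ (matching the \algprobm{Quantum Max-Cut} optimum) rather than near the vertex $(-1,-1,-1)$ as you anticipated, but this affects only the numerics, not the structure of the argument.
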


\begin{proof}
    We follow the analysis of \cite[Theorem 11]{lee_improved_2024}. We consider the maximum energy from the matching \cref{algo_matching_algo_for_QUG} and the product state given by \cite[Algorithm 9]{parekh_beating_2021} to get the following for the bound on the approximation ratio.
    \begin{equation}
        \alpha_2 \geq \min_{\substack{(a,b,c) \in \calS \\ (p,q,r) \in \calS}} \max\left(\cref{eq_for_approx_ratio_numerator}, \frac{1}{4} + \frac{6}{5}\max\left(0,\cref{eq_for_approx_ratio_denominator} - \frac{1}{2}\right)\right)\Bigg/ \cref{eq_for_approx_ratio_denominator}
    \end{equation}
    Using \cref{lemma_from_PT21b_2}, define\footnote{We cut off \(i,j,k \leq 70\) so we can computationally implement \(\Tilde{s}\).} 
    \begin{equation}
        \Tilde{s}(a,b,c) = 1 - \sum_{\substack{i,j,k \leq 70\\j \leq k}} \hat{f}^2_{i,jk}\left(u_{i,jk}(a,b,c) + u_{i,jk}(b,a,c) + u_{i,jk}(c,a,b)\right)\text{,}
    \end{equation}
    which using the results of \cref{lemma_from_PT21b_2,lemma_can_use_ext_points_in_min,lemma_PT21b_lemma_29} and the fact that \(\text{\cref{eq_for_approx_ratio_numerator,eq_for_approx_ratio_denominator}} \geq 0\), gives us that
    \begin{align}
        \alpha_2 &\geq \max_{\rho \in [0,1]}\min_{(a,b,c) \in \calS} \max\left\{ \frac{\rho (\Tilde{s}(a,b,c) + \operatorname{rem}) + (1-\rho)\left(\frac{1}{4} + \frac{6}{5}(\frac{1}{2} - a - b - c)\right)}{1 - a - b - c}, \frac{\rho (\Tilde{s}(a,b,c) + \operatorname{rem}) + (1-\rho)\frac{1}{4}}{1 - a - b - c}\right\} \\
        \begin{split}
            &\geq \max_{\rho \in [0,1]} \min\Bigg\{\min_{\substack{-1 \leq a \leq b \leq c \leq 1\\a+b+c \leq \frac{1}{2}}} \frac{\rho (\Tilde{s}(a,b,c) + \operatorname{rem}) + (1-\rho)\left(\frac{1}{4} + \frac{6}{5}(\frac{1}{2} - a - b - c)\right)}{1 - a - b - c}, \\
            &\phantom{\geq \max_{\rho \in [0,1]} \min\Bigg\{} \min_{\substack{-1 \leq a \leq b \leq c \leq 1\\\frac{1}{2} \leq a+b+c \leq 1}} \frac{\rho (\Tilde{s}(a,b,c) + \operatorname{rem}) + (1-\rho)\frac{1}{4}}{1 - a - b - c} \Bigg\}\text{.}
            \end{split}
    \end{align}
    Here \(\operatorname{rem}\) denotes the higher order terms in the Hermite expansion, which are ignored computationally. The second inequality follows from the fact that \(S\) is invariant under permutation of the entries, so we can without loss of generality apply an ordering. The \(a+b+c \leq 1\) bound comes from the fact that \(\text{\cref{eq_for_approx_ratio_denominator}} \geq 0\). 

    Fixing \(\rho = 0.6724\), we computationally find the minimum over \(a,b,c\) to be \(\alpha_2 \geq 0.5957\), achieved at \(a=b=c\approx-0.4402\). Note, these match the optimal constants for \algprobm{Quantum Max-Cut} \cite{lee_improved_2024}. 
\end{proof}

\section{Open Problems}

We summarize the main problems left open by this work.
\begin{itemize}
    \item It is likely that one can improve on the \(4/5\) constant in \cref{lemma_lee_sdp_matching_lemma} with additional certificates as we did in \cref{lemma_Kn_matching_bound,lemma_K5_matching_bound_computational} (see also \cite{edmonds_maximum_1965,lee_improved_2024}).
    \item As noted in \cref{section_matching_alg}, the main gap in our analysis is the high degree setting. One could presumably generalize \cite{parekh_beating_2021} for qudits and combine the two algorithms in a similar way as was done in \cite{lee_improved_2024} and \cref{section_the_qubit_case}. This will likely require a more directed consideration towards degree-one terms than was previously done (see \cite[Appendix C]{carlson_approximation_2023}).
    \item Can anything be said about \algprobm{Maximal Entanglement} over expander graphs? Even for specific instances such as \algprobm{Quantum Max-Cut}, this is an open question. Our work gives some evidence that entanglement might not depend on the expansion of the interaction graph but instead merely on the degree. 
    \item To the best of our knowledge, the hardness of \algprobm{Maximal Entanglement} for fixed \(d\) is not known. In particular, \cite[Theorem 5]{piddock_universal_2021} allows for negative weights, unlike \algprobm{Maximal Entanglement}. However, if one relaxes the local Hamiltonian terms in \algprobm{Maximal Entanglement} from being maximally entangled pure states to maximally entangled mixed states (i.e., mixed state with partial traces giving the maximally mixed state), then \cite[Theorem 5]{piddock_universal_2021} can be used to argue that, for fixed \(d\), \algprobm{Mixed-State Maximal Entanglement} is \cc{QMA}-hard (as \(\frac{1}{\Tr(I-P)}(I - P)\) is a maximally entangled mixed state).
\end{itemize}

\section*{Acknowledgments}\label{sec_ack}
\addcontentsline{toc}{section}{\nameref*{sec_ack}}

We thank the anonymous reviewers for pointing out a typo in the statement of \cref{lemma_PT21b_lemma_29}, a typo in the proof of \cref{lemma_anticom_of_epr_give_sum_and_proj}, and for helping improve the presentation of much of the manuscript.

\printbibliography[heading=bibintoc]

\appendix
\part*{Appendix}
\addcontentsline{toc}{part}{Appendix}

\section{Proof of \texorpdfstring{\cref{prop_full_rank_states}}{Proposition~\ref{prop_full_rank_states}}}\label{appendix_proof_of_prop_full_rank_states}

\begin{manualproposition}{\ref{prop_full_rank_states}}[Restatement] Let \(\ket{\psi} \in (\C^d)^{\otimes 2}\) be a bipartite state with full Schmidt rank. 
    \begin{enumerate}
        \item There exists a unique \(A \in GL(d)\) such that \(\ket{\psi} = (I \otimes A) \ket{\EPR}\). In particular, if \(\ket{\psi}\) is maximally entangled then \(A \in \calU(d)\).
        
        \item For any \(A \in GL(d)\) we have that there exists a unique \(B \in GL(d)\) such that \((I \otimes A) \ket{\psi} = (B \otimes I) \ket{\psi}\) and vice versa. In particular, if \(\ket{\psi}\) is maximally entangled and \(A \in \calU(d)\) then \(B \in \calU(d)\). Additionally, if \(\ket{\psi} = \ket{\EPR}\) then \(B=A^{\sfT}\).
    \end{enumerate}
\end{manualproposition}

\begin{proof}
    Let \(\{\ket{e_1}, \dotsc, \ket{e_d}\}\) and \(\{\ket{f_1}, \dotsc, \ket{f_d}\}\) be the orthonormal bases for \(\C^d\) in accordance with the Schmidt decomposition, \(\ket{\psi} = \sum_{i = 1}^d \sqrt{\lambda_i} \ket{e_i} \otimes \ket{f_i}\).
    Let \(A : \ket{i} \mapsto \sqrt{d \lambda_i} \ket{e_i}\) and \(B : \ket{i} \mapsto \ket{f_i}\) be changes of basis, for which, we note that \(B \in \calU(d)\) and, in fact, by potentially absorbing phases into \(A\), we can make \(B \in \SU(d)\). It is easy to verified that \(\ket{\psi} = (A \otimes B) \ket{\EPR}\). When \(\ket{\psi}\) is maximally entangled we additionally have that \(\lambda_i = \frac{1}{d}\) and thus \(A \in \calU(d)\).

    Let \(V =\C^d\) denoted a \(d\)-dimensional vector space. We consider the standard bilinear form, \((\cdot,\cdot) \in V^* \otimes V^*\) defined by \((v_i, v_j) = \delta_{ij}\) for any orthonormal basis, \(\{v_i\}_i\)\footnote{We use this notation and subsequently \(v^*\) for dual vectors as well as \((\cdot, \cdot) : V \times V \ra \C\) to denote the bilinear dot product to avoid confusion with the conjugate linear map \(\ket{\psi} \mapsto \bra{\psi}\), in line with the bra-ket notation, and corresponding Hermitian inner-product \(\bra{\psi}\ket{\varphi}\).}. With this we can define the isomorphism from \(V\) to its dual space, \((\cdot)^* : V \ra V^*, v \mapsto v^*\), defined by \(v^*(w) = (v,w)\) for all \(v,w \in V\) and additionally, \((\cdot)^* : V^{\otimes 2} \ra (V^{\otimes 2})^*, v \otimes w \mapsto v^* \otimes w^*\) for all \(v,w \in V\), extended linearly. Viewing \(\ket{\psi} \in V \otimes V\) as a 2-tensor, we can then identify \(\ket{\psi}\) with a bilinear form, defined by \(\ket{\psi}^*(v,w) = (\ket{\psi}, v \otimes w)\) for any \(v,w \in V\), where \((\cdot, \cdot) : V^{\otimes 2} \times V^{\otimes 2} \ra \C\) is also used to denote the bilinear dot product over \(V^{\otimes 2}\). For a linear operator \(A \in \calL(V)\), we use \(A^* \in \calL(V^*)\) to denote its dual map defined by \(A^* f = f \circ A\) for \(f \in V^{*}\).
    
    Because the map \((\cdot)^*\) is an isomorphism, to show equality between two states, \(\ket{\psi}\) and \(\ket{\varphi}\), it suffices to show equality between their corresponding dual vectors. Moreover, the bilinear form, \(\ket{\psi}^*\), is non-degenerate because the state, \(\ket{\psi} = (A \otimes B) \ket{\EPR}\), is full rank by assumption. This will allow us to define an adjoint in \(\ket{\psi}^*\). First, for \(\ket{\EPR}\), we can observe that the associated bilinear form, \(\ket{\EPR}^*\), is nothing but the scaled bilinear dot, \(\ket{\EPR}^*(v,w) = \frac{1}{\sqrt{d}}(v,w)\). This can be verified through direct calculation. It is then evident that \(\ket{\psi}^* = ((A \otimes B)\ket{\EPR})^*\) is the bilinear form defined by the following, where \(A^{\sfT}\) is taken to be the adjoint over the bilinear dot product, i.e., the transpose. 
    \begin{align}
        \ket{\psi}^*(v,w) &= ((A \otimes B)\ket{\EPR})^*(v \otimes w) \\
        &= ((A \otimes B) \ket{\EPR}, v \otimes w) \\
        &= (\ket{\EPR}, A^{\sfT} v \otimes B^{\sfT} w) \\
        &= \ket{\EPR}^* (A^{\sfT} v, B^{\sfT} w) \\ 
        &= \frac{1}{\sqrt{d}}(A^{\sfT} v, B^{\sfT} w) \\ 
        &= \frac{1}{\sqrt{d}}(v, A B^{\sfT} w)\\ 
        &= ((I \otimes BA^{\sfT})\ket{\EPR})^*(v, w)
    \end{align}
    Thus, \(\ket{\psi} = (I \otimes BA^{\sfT}) \ket{\EPR}\), which proves \cref{prop_item_full_rank_states_2}. Uniqueness follows from the fact that \((I \otimes A) \ket{\EPR} = (I \otimes B) \ket{\EPR} \Leftrightarrow \forall v,w \in V : (v, Aw) = (v, Bw) \Leftrightarrow A = B\).

    Next, using the fact that \(\ket{\psi}^*(v,w) = \frac{1}{\sqrt{d}}(v, A B^{\sfT} w)\) we have the following.
    \begin{align}
        \left((C \otimes I) \ket{\psi}\right)^*(v,w) &= \ket{\psi}^*(C^{\sfT} v,w) \\
        &= \frac{1}{\sqrt{d}}(v, C A B^{\sfT} w) \\
        &= \ket{\psi}^*(v, (AB^{\sfT})^{-1} C AB^{\sfT} w) \\
        &= \left((I \otimes ((AB^{\sfT})^{-1} C AB^{\sfT})^{\sfT}) \ket{\psi}\right)^*(v,w)
    \end{align}
    Here, \((AB^{\sfT})^{-1} C AB^{\sfT}\) is the adjoint of \(C^\sfT\) under the bilinear form \(\ket{\psi}^*\).
    All together, we have that \((C \otimes I)\ket{\psi} = \left(I \otimes \left(BA^{\sfT} C^{\sfT} (BA^{\sfT})^{-1}\right)\right) \ket{\psi}\).
    Similarly, we also have that \((I \otimes C) \ket{\psi} = \left(\left(AB^{\sfT} C^{\sfT} (A B^{\sfT})^{-1}\right) \otimes I\right) \ket{\psi}\).
 
    The uniqueness follows from the uniqueness of the adjoint over non-degenerate bilinear forms.

    When \(\ket{\psi}\) is maximally entangled and \(C \in \calU(d)\), then \(A,B \in \calU(d)\), as established earlier. Finally, it follows that \(BA^{\sfT} C^{\sfT} (BA^{\sfT})^{-1} = BA^{\sfT} C^{\sfT} AB^{\sfT} \in \calU(d)\) as \(C^{\sfT} \in \calU(d)\).
\end{proof}

\section{Proof of \texorpdfstring{\cref{lemma_anticom_of_epr_give_sum_and_proj}}{Lemma~\ref{lemma_anticom_of_epr_give_sum_and_proj}}}\label{appendix_proof_of_lemma_anticom_of_epr_give_sum_and_proj}

\begin{manuallemma}{\ref{lemma_anticom_of_epr_give_sum_and_proj}}[Restatement]
    There exists a projector \(P \in \calL((\C^d)^{\otimes 3})\) (i.e., \(P^2 = P\)) such that
    \begin{equation}
        \{\EPR \otimes I, I \otimes \EPR\} = \frac{1}{d}(\EPR \otimes I + I \otimes \EPR) - \frac{2(d-1)}{d^2} P\text{.}
    \end{equation}
\end{manuallemma}

\begin{proof}
    We can write the EPR projector in the standard basis as follows:
    \begin{equation}
        \EPR \otimes I = \frac{1}{d}\sum_{a,b = 1}^d \ket{aa}\bra{bb} \otimes \sum_{c = 1}^d \ket{c}\bra{c}
    \end{equation}
    And similarly:
    \begin{equation}
        I \otimes \EPR = \sum_{a = 1}^d \ket{a}\bra{a} \otimes \frac{1}{d}\sum_{b,c = 1}^d \ket{bb}\bra{cc}
    \end{equation}
    Then, we consider \(P\):
    \begin{align}
        P &= \frac{d^2}{2(d-1)}\left(\frac{1}{d}(\EPR \otimes I + I \otimes \EPR) - \{\EPR \otimes I, I \otimes \EPR\}\right) \\
        &= \frac{d^2}{2(d-1)}\left(\frac{1}{d^2}\left(\sum_{a,b,c = 1}^d \ket{aac}\bra{bbc} + \ket{abb}\bra{acc}\right) - \left((\EPR \otimes I)(I \otimes \EPR) + (I \otimes \EPR)(\EPR \otimes I)\right)\right) \\
        &= \frac{d^2}{2(d-1)}\left(\frac{1}{d^2}\left(\sum_{a,b,c = 1}^d \ket{aac}\bra{bbc} + \ket{abb}\bra{acc}\right) - \frac{1}{d^2}\left(\sum_{a,b,c = 1}^d\sum_{x,y,z = 1}^d  \ket{aac}\bra{bbc}\ket{xyy}\bra{xzz} + d^2(I \otimes \EPR)(\EPR \otimes I)\right)\right) \\
        &= \frac{1}{2(d-1)}\left(\left(\sum_{a,b,c = 1}^d \ket{aac}\bra{bbc} + \ket{abb}\bra{acc}\right) - \left(\sum_{a,b,z = 1}^d  \ket{aab}\bra{bzz} + \sum_{a,b,y = 1}^d  \ket{abb}\bra{yya}\right)\right) \\
        &= \frac{1}{2(d-1)}\left(\sum_{a,b,c = 1}^d \underbrace{\ket{aac}\bra{bbc}}_{S_1} + \underbrace{\ket{abb}\bra{acc}}_{S_2} -  \underbrace{\ket{aab}\bra{bcc}}_{S_3} - \underbrace{\ket{abb}\bra{cca}}_{S_4}\right) \label{eq_P_broke_up_into_4_things}
    \end{align}
    Next, we calculate \(P^2\). We do this in multiple parts by considering the products of the \(4\) parts of \cref{eq_P_broke_up_into_4_things}. Note, we will always refer the parts as positive sums, that is, for example, we let \(S_3 = \sum_{a,b,c = 1}^d \ket{aab}\bra{bcc}\), and in particular, \(P = \frac{1}{2(d-1)}\left(S_1 + S_2 - S_3 - S_4\right)\).
    
    First, we consider the square of each part. The first two, being nothing but scaled projectors onto the EPR state, give \(S_1^2 = dS_1\) and \(S_2^2 = dS_2\). We then consider:
    \begin{align}
        S_3^2 &= \left(\sum_{a,b,c = 1}^d \ket{aab}\bra{bcc}\right) \left(\sum_{x,y,z = 1}^d \ket{xxy}\bra{yzz}\right) \\
        &= \sum_{a,b,c = 1}^d \sum_{x,y,z = 1}^d \ket{aab}\bra{bcc}\ket{xxy}\bra{yzz} \\
        &= \sum_{a,b = 1}^d \sum_{z = 1}^d \ket{aab}\bra{bzz} \\
        &= S_3
    \end{align}
    
    By symmetry, we also have that \(S_4^2 = S_4\). For the cross terms, first note that \(\{S_1, S_2\} = d^2\{\EPR \otimes I, I \otimes \EPR\} = S_3 + S_4\), as shown above in \cref{eq_P_broke_up_into_4_things}. Next, we consider:
    \begin{align}
        S_1 \cdot S_3 &= \left(\sum_{a,b,c = 1}^d \ket{aac}\bra{bbc}\right) \left(\sum_{x,y,z = 1}^d \ket{xxy}\bra{yzz}\right) \\
        &= \sum_{a,b,c = 1}^d \sum_{x,y,z = 1}^d \ket{aac}\bra{bbc}\ket{xxy}\bra{yzz}\\
        &= \sum_{a,b,c = 1}^d \sum_{z = 1}^d \ket{aac}\bra{czz} \\
        &= d \sum_{a,c,z = 1}^d \ket{aac}\bra{czz} \\
        &= d S_3 
    \end{align}
    
    By symmetry, we also have that \(S_4 \cdot S_1 = dS_4\).
    Also:
    \begin{align}
        S_3 \cdot S_1 &= \left(\sum_{a,b,c = 1}^d \ket{aab}\bra{bcc}\right) \left(\sum_{x,y,z = 1}^d \ket{xxz}\bra{yyz}\right) \\
        &= \sum_{a,b,c = 1}^d \sum_{x,y,z = 1}^d \ket{aab}\bra{bcc}\ket{xxz}\bra{yyz}\\
        &= \sum_{a,b,y = 1}^d \ket{aab}\bra{yyb} \\
        &= S_1 
    \end{align}
    
    By symmetry, we also have that \(S_1 \cdot S_4 = S_1\). Additionally, for similar reasons, \(\{S_2, S_3\} = S_2 + dS_3\) and \(\{S_2, S_4\} = S_2 + dS_4\). Finally: 
    \begin{align}
        S_3 \cdot S_4 &= \sum_{a,b,c = 1}^d \sum_{x,y,z = 1}^d \ket{aab}\bra{bcc}\ket{xyy}\bra{zzx}\\
        &= \sum_{a,b,c = 1}^d \sum_{z = 1}^d \ket{aab}\bra{zzb} \\
        &= d S_1
    \end{align}
    
    And similarly, we also have that \(S_4 \cdot S_3 = d S_2\).
    Finally, putting it all together, we get the following for \(P^2\):
    \begin{align}
        P^2 &= \frac{1}{4(d-1)^2}\left(S_1^2 + S_2^2 + S_3^2 + S_4^2 + \{S_1,S_2\} - \{S_1,S_3\} - \{S_1,S_4\} - \{S_2,S_3\} - \{S_2,S_4\} + \{S_3,S_4\}\right) \\
        &= \frac{1}{4(d-1)^2}\left(dS_1 + dS_2 + S_3 + S_4 + S_3 + S_4 - S_1 - dS_3 - S_1 - dS_4 - S_2 - dS_3 - S_2 - dS_4 + dS_1 + dS_2\right) \\
        &= \frac{(2d-2)}{4(d-1)^2}\left(S_1 + S_2 - S_3 - S_4\right) \\
        &= \frac{1}{2(d-1)}\left(S_1 + S_2 - S_3 - S_4\right) \\
        &= P
    \end{align}
    Thus, \(P\) is a projector.
\end{proof}

\end{document}